\documentclass[11pt]{article}

\usepackage[a4paper,margin=1in]{geometry}
\usepackage[T1]{fontenc}
\usepackage[utf8]{inputenc}
\usepackage{lmodern}

\usepackage{amsmath,amssymb,amsthm,mathtools}
\usepackage{mathrsfs}
\usepackage{xcolor}
\usepackage{hyperref}

\hypersetup{
  colorlinks=true,
  linkcolor=blue,
  citecolor=blue,
  urlcolor=blue
}

\numberwithin{equation}{section}

\newtheorem{theorem}{Theorem}[section]
\newtheorem{definition}[theorem]{Definition}
\newtheorem{proposition}[theorem]{Proposition}
\newtheorem{corollary}[theorem]{Corollary}
\newtheorem{lemma}[theorem]{Lemma}
\newtheorem{remark}[theorem]{Remark}

\newtheorem{assumption}[theorem]{Assumption}

\title{Volume-Independent Spectral Stability of Energy-Truncated Effective Hamiltonians in Quantum Spin Systems}

\author{
Ayumi Ukai$^{1,2}$\\
{\small $^1$ RIKEN Center for Quantum Computing, Wako, Japan}\\
{\small $^2$ Research Institute for Mathematical Sciences, Kyoto University, Kyoto, Japan}\\
{\small \texttt{ayumi.ukai@riken.jp}}
}

\begin{document}

\maketitle
\tableofcontents

\begin{abstract}
We prove a volume-uniform effective-Hamiltonian theorem for bounded finite-range quantum spin systems on possibly infinite lattices. For any finite target region, we construct an energy-truncated Hamiltonian and prove a volume-uniform spectral-overlap bound controlling the leakage of its low-energy spectral subspace into the high-energy spectral subspace of the original Hamiltonian. The bound may contain non-exponential spectral-window terms, but its cutoff-dependent remainder decays exponentially in the cutoff.

In finite volume, this yields stability of low-lying eigenvalues, with eigenvalue errors controlled by the exponentially small cutoff-dependent remainder. In infinite volume, we prove the corresponding spectral-overlap estimate in the GNS representation of an infinite-volume ground state. Thus, for bounded finite-range interactions, we extend and strengthen the effective-Hamiltonian mechanism of \cite{Arad-Kuwahara-Landau 2016} by replacing the finite-volume operator-norm formulation with a volume-uniform spectral-overlap formulation applicable in the thermodynamic limit.
\end{abstract}

\section{Introduction}
Effective Hamiltonians obtained by truncating the high-energy part of a Hamiltonian are a basic tool in the analysis of low-energy structures of quantum many-body systems. In many arguments, one would like to replace a Hamiltonian by a bounded effective Hamiltonian while preserving its low-energy spectrum with high accuracy. 

A foundational result in this direction was obtained by Arad, Kuwahara, and Landau \cite{Arad-Kuwahara-Landau 2016}, who proved a finite-volume effective-Hamiltonian estimate for quantum spin systems. This estimate has played an essential role in subsequent area-law arguments, including results for polynomially degenerate gapped ground spaces and for states with low energy density \cite{Arad-Landau-Vazirani-Vidick2017}.

A second motivation comes from infinite systems.  Many structural questions in quantum many-body theory are naturally formulated in the thermodynamic limit, for example in the study of phase classification and symmetry-protected topological phases \cite{Ogata2022}. It is therefore important to understand whether the effective-Hamiltonian construction can be made uniform in the system size and applied directly to infinite-volume systems.

These two motivations lead us to revisit the effective-Hamiltonian construction of \cite{Arad-Kuwahara-Landau 2016} from the viewpoint of volume-uniform spectral stability.

\subsection{A volume-dependent obstruction}
There is a structural obstruction to extending the finite-volume operator-norm formulation of \cite{Arad-Kuwahara-Landau 2016} directly as a volume-uniform statement. The obstruction comes from trying to prove a volume-uniform operator-norm bound of the form
\[
\|(H-\bar{H})\bar{P}_{\mathrm{low}}\|,
\]
where $\bar{P}_{\mathrm{low}}$ is the low-energy spectral projection of the energy-truncated Hamiltonian $\bar{H}$.

As we show in Appendix \ref{S_counterexample}, the corresponding AKL-type operator-norm estimate, displayed in \eqref{eq_AKL_original}, necessarily exhibits volume dependence in general. The estimate of \cite{Arad-Kuwahara-Landau 2016} is valid for each fixed finite volume, but its claimed volume-uniformity does not hold in general. The example in Appendix \ref{S_counterexample} shows that the corresponding operator-norm bound can necessarily depend on the total volume.  This is the reason why we replace the operator-norm formulation by a spectral-overlap formulation.

We therefore replace direct operator-norm control by a spectral-overlap estimate. This estimate controls the relative position of low-energy spectral subspaces directly, and is strong enough to imply finite-volume eigenvalue stability while remaining meaningful for infinite-volume Hamiltonians, where individual low-energy eigenvalues need not exist. The same volume dependence is already reflected in related estimates such as \cite[equation (283)]{Kuwahara-Saito2020}.

Thus the difficulty is not merely a matter of improving constants in the finite-volume proof.  Rather, the operator-norm formulation itself is too strong for a volume-uniform theory. The stability statement itself must therefore be reformulated.

\subsection{Spectral-overlap method and main results}
The reformulation used in this paper is based on spectral overlap. Instead of estimating $H-\bar{H}$ on a low-energy subspace, we estimate how much the low-energy spectral subspace of $\bar{H}$ leaks into the high-energy spectral subspace of $H$. The central quantity is
\[
\left\|\bigl(I-E^H(-\infty,q)\bigr)
E^{\bar{H}}(-\infty,p]\right\|.
\]

Our main theorem gives a volume-uniform bound on this spectral leakage. The cutoff-dependent part of the bound is exponentially small in the cutoff $M$, with constants depending only on local interaction parameters and the boundary region, and not on the total system size.

In finite volume, this spectral-overlap estimate is Theorem \ref{thm_main_finite}. It gives the desired effective-Hamiltonian stability statement in the spectral-overlap sense, with constants independent of the total volume, and does not require a volume-uniform operator-norm bound for $\|(H-\bar{H})\bar{P}_{\mathrm{low}}\|$; the resulting eigenvalue comparison is Theorem \ref{thm_main2_finite}, which can be read informally as follows.

\paragraph{Finite-volume reading of the main estimate.}
For a finite-volume energy-truncated Hamiltonian $\bar{H}$ with cutoff $M$, Theorem \ref{thm_main2_finite} can be read informally as saying that the low-lying eigenvalues of $\bar{H}$ satisfy
\[
\epsilon_j-e^{-c(M-2\epsilon_j)}\leq \bar\epsilon_j\leq\epsilon_j,
\]
where $\epsilon_j$ and $\bar{\epsilon}_j$ denote the corresponding low-lying eigenvalues of $H$ and $\bar{H}$, respectively. The constant $c>0$ in the error term is independent of the total volume.
\bigskip

The infinite-volume main result is Theorem \ref{thm_main_infinite}: it extends the spectral-overlap estimate itself, rather than the finite-volume eigenvalue formulation, to the GNS representation of an infinite-volume ground state. In this sense, it gives a thermodynamic-limit, volume-uniform spectral-overlap version of the effective-Hamiltonian construction of \cite{Arad-Kuwahara-Landau 2016}. A rank-threshold consequence and the gapped-ground-state consequence are stated separately in Corollary \ref{cor_main2_infinite}; the rank-threshold statement becomes an ordinary eigenvalue comparison when the relevant low-energy spectrum is discrete, isolated, and of finite multiplicity.

\subsection{Relation to previous finite-volume work}
The present paper extends and strengthens the effective-Hamiltonian construction of Arad, Kuwahara, and Landau \cite{Arad-Kuwahara-Landau 2016} by reformulating it as a volume-uniform spectral-overlap estimate. The resulting formulation is uniform in the total system size and yields eigenvalue errors that are exponentially small in the cutoff $M$.

The point of the reformulation is that the spectral-overlap estimate is tailored to the spectral conclusions needed here and remains uniform in the total volume, whereas the corresponding operator-norm formulation is not volume-uniform in general. Throughout the paper we work with bounded finite-range interactions. This assumption allows the imaginary-time estimates to be controlled by local interaction parameters and the boundary region, rather than by the total system size; see Remark \ref{rmk_range}.

\subsection{Relation to previous infinite-volume work}

The main infinite-volume result of this paper is a spectral-overlap estimate for general low-energy spectral windows in the GNS representation. This is broader than a stability statement for the ground-state projection: the gapped locally unique ground-state case is recovered as a consequence in Corollary \ref{cor_main2_infinite}.

For the gapped ground-state sector, Kuwahara and Saito \cite{Kuwahara-Saito2020} developed a stability mechanism, and Liu, Yi, Zhou, and Zou \cite{Liu-Yi-Zhou-Zou2025} formulated an infinite-volume version covering local uniqueness and spectral-gap stability. The present paper is complementary to this line of work: it treats general low-energy spectral windows through a spectral-overlap estimate and gives a self-contained justification of the analytic and domain issues arising from unbounded GNS Hamiltonians; see Proposition \ref{prop_AKL} and Remark \ref{rmk_hadamard_infinite}.

This broader formulation may also be useful beyond ground-state stability, for example in future studies of low-energy approximations of quantum dynamics, effective-Hamiltonian approximations of thermal equilibrium states, and rigorous questions related to ensemble equivalence and the eigenstate thermalization hypothesis.

\subsection{Organization of the paper}
Section \ref{S_notation} introduces the notation and the general setup for quantum spin systems. It also recalls a basic lemma on the overlap of spectral projections, which is the main tool replacing direct operator-norm perturbation estimates. Section \ref{S_finite} treats finite-volume systems. We define the energy-truncated Hamiltonian, prove the finite-volume spectral-overlap estimate, and derive the stability of low-lying eigenvalues. Section \ref{S_infinite} extends the construction to infinite systems. Using the GNS representation of an infinite-volume ground state, we prove the infinite volume version of the effective Hamiltonian theorem.

\subsection*{Acknowledgements}
The author thanks Tomotaka Kuwahara for fruitful discussions and helpful comments on the manuscript. The author is also grateful to Itai Arad for his kind hospitality during the author's stay at the National University of Singapore. The author thanks Yoshiko Ogata for valuable guidance. This work was supported by the RIKEN Junior Research Associate Program. The author also acknowledges the Hakubi Project of RIKEN.

\section{Notation and general setup}\label{S_notation}

This section introduces the common setup used for both finite and infinite systems. In particular, Section \ref{S_finite} relies only on finite-volume objects defined here, while infinite-volume notions are introduced in Section \ref{S_infinite}.

We consider a quantum spin system on a possibly infinite $\nu$-dimensional lattice $\Gamma\subset\mathbb{Z}^\nu$, equipped with the $\ell^1$-distance $d(x,y):=\sum_{j=1}^\nu |x_j-y_j|$. At each site $x\in\Gamma$, we place a finite-dimensional quantum spin (of dimension $d$). For each finite region $\Lambda\Subset\Gamma$, we denote by $\mathcal{A}_\Lambda$ the algebra of observables supported on $\Lambda$, namely, 
\[
\mathcal{A}_\Lambda
:=\bigotimes_{x\in \Lambda} M_d(\mathbb{C})
=\mathcal{B}(\mathcal{H}_\Lambda),\quad
\mathcal{H}_\Lambda:=\bigotimes_{x\in \Lambda}\mathbb{C}^d,
\]
where $\mathcal{B}(\mathcal{H}_\Lambda)$ denotes the full matrix algebra on the finite-dimensional Hilbert space $\mathcal{H}_\Lambda$. Here and below, $X\Subset\Gamma$ means that $X$ is a finite subset of $\Gamma$.

Interactions among the spins are described by a map $\Phi:X\Subset\Gamma\mapsto\Phi(X)=\Phi(X)^*\in\mathcal{A}_X$. For a finite-volume system $\Lambda\Subset\Gamma$, the Hamiltonian and the Heisenberg dynamics are defined by
\[
H_\Lambda:=\sum_{X\subset\Lambda} \Phi(X),\qquad
\alpha_t^\Lambda(A):=e^{itH_\Lambda}Ae^{-itH_\Lambda}.
\]
\begin{assumption}\label{ass_interaction}
In this paper, we assume that the interaction $\Phi$ is bounded and of finite range:
\begin{itemize}
\item (finite range)
\(
\Phi(X)=0 \ \text{whenever}\ 
\operatorname{diam}(X):=\max_{x,y\in X} d(x,y)>\mathfrak{r}_\Phi.
\)
\item (bounded)
\(
\mathfrak{j}_\Phi:=\sup_{x\in\Gamma}
\sum_{\substack{X\Subset\Gamma\\ x\in X}} \|\Phi(X)\|<\infty.
\)
\end{itemize}
\end{assumption}
We do not assume translation invariance.

\begin{remark}\label{rmk_range}
Let us briefly comment on the assumptions. The framework of \cite{Arad-Kuwahara-Landau 2016} allows $k$-local interactions which may be long range, whereas the present paper assumes bounded finite-range interactions.

Under this assumption, the interaction is uniformly $N_\Phi$-local, in the sense that every nonzero interaction term $\Phi(X)$ satisfies $|X|\leq N_\Phi$, where
\[
N_\Phi:=\sup_{x\in\Gamma}
\left|\{y\in\Gamma\,;\, d(x,y)\leq \mathfrak{r}_\Phi\}\right|.
\]

This finite-range locality is used to obtain imaginary-time estimates with constants depending only on local interaction parameters and the boundary region, rather than on the total volume. Extending the corresponding estimates to long-range interactions would require additional care; see Remark \ref{rmk_hadamard_finite}.
\end{remark}

\subsection{A preliminary lemma on spectral overlap}
A key technical ingredient of this paper is a quantitative control of the overlap between spectral subspaces of the original Hamiltonian and a truncated Hamiltonian. In this subsection, we introduce a basic tool to quantify the overlap between two projections.

\begin{lemma}[{\cite[Lemma 2]{Arad-Landau-Vazirani-Vidick2017}}]\label{lem_overlap}
Let $\mathcal{H}$ be an arbitrary Hilbert space, and let $P,Q\in\mathcal{B}(\mathcal{H})$ be projections. If $\|(I-Q)P\|\leq c$ for some $c\in(0,1)$, then for any $\psi\in P\mathcal{H}$, we have
\[
\operatorname{Rank}(P)\leq \operatorname{Rank}(Q),\quad \|Q\psi\|\geq \sqrt{1-c^2}\,\|\psi\|.
\]
\end{lemma}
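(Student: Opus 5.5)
The plan is to prove the norm inequality first, then deduce the rank inequality from it.

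\emph{Step 1: the norm bound.} Fix $\psi\in P\mathcal{H}$. Then $P\psi=\psi$, and we have the orthogonal decomposition $\psi=Q\psi+(I-Q)\psi$. Pythagoras gives
\[
\|Q\psi\|^2=\|\psi\|^2-\|(I-Q)\psi\|^2 .
\]
Since $(I-Q)\psi=(I-Q)P\psi$, the hypothesis gives $\|(I-Q)\psi\|\le c\|\psi\|$. Hence $\|Q\psi\|^2\ge(1-c^2)\|\psi\|^2$, and taking square roots yields the second claim.

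\emph{Step 2: the rank bound.} Consider the restriction $Q|_{P\mathcal{H}}:P\mathcal{H}\to Q\mathcal{H}$. Because $c<1$, Step~1 shows that this map is bounded below by $\sqrt{1-c^2}>0$. In particular it is injective, so for finite ranks we get $\dim P\mathcal{H}\le\dim Q\mathcal{H}$.

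\emph{Infinite-dimensional case.} Here the only point needing care is what ``rank'' means, and I take it to be the Hilbert-space dimension (cardinality of an orthonormal basis). I would use the following route. A bounded-below operator has closed range. So $QP\mathcal{H}$ is a closed subspace of $Q\mathcal{H}$, and $Q|_{P\mathcal{H}}$ is a bounded bijection onto it, hence invertible by the open mapping theorem. Taking the polar decomposition of this invertible map, its unitary part is an isometric isomorphism from $P\mathcal{H}$ onto $\overline{QP\mathcal{H}}\subset Q\mathcal{H}$. Therefore $\dim P\mathcal{H}\le\dim Q\mathcal{H}$.

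Alternatively, one can argue via $\|P-PQP\|=\|(I-Q)P\|^2\le c^2<1$. This shows $PQP$ is invertible on $P\mathcal{H}$, and the same conclusion follows.

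\emph{Main obstacle.} The only nontrivial step is the rank comparison in infinite dimensions, which is handled by the closed-range and polar-decomposition argument above. Everything else is elementary.
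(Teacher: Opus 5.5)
Your proof is correct and follows the paper's route. The identity $\|Q\psi\|^2=\|\psi\|^2-\|(I-Q)P\psi\|^2\geq(1-c^2)\|\psi\|^2$ is the same computation as the paper's operator inequality $PQP\geq(1-c^2)P$, and the rank inequality is then deduced from injectivity of $Q$ on $P\mathcal{H}$. Your closed-range/polar-decomposition argument for the case where both ranks are infinite adds precision that the paper leaves implicit: it simply asserts that injectivity implies the rank comparison. The paper's applications only involve finite ranks, where injectivity alone suffices.
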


\begin{proof}
Since $\|(I-Q)P\|\leq c$, we have
\[
P(I-Q)P=((I-Q)P)^*((I-Q)P)\leq c^2P.
\]
Therefore,
\[
PQP=P-P(I-Q)P\geq(1-c^2)P.
\]
Hence, for any $\psi\in P\mathcal{H}$, we obtain
\[
\|Q\psi\|^2=\langle\psi,Q\psi\rangle =\langle\psi,PQP\psi\rangle\geq (1-c^2)\|\psi\|^2.
\]
This gives
\[
\|Q\psi\|\geq\sqrt{1-c^2}\,\|\psi\|.
\]

In particular, if $\psi\in P\mathcal{H}$ and $Q\psi=0$, then the above inequality implies
\[
0=\|Q\psi\|\geq\sqrt{1-c^2}\,\|\psi\|.
\]
Since $c<1$, we have $\sqrt{1-c^2}>0$, and hence $\psi=0$. Thus $Q$ is injective on the range of $P$. Consequently,
\[
\operatorname{Rank}(P)\leq\operatorname{Rank}(Q).
\]
\end{proof}

\section{Finite Systems}\label{S_finite}
\subsection{Setup}
In this section, we assume without loss of generality that the lowest eigenvalue $\epsilon_0=0$, and especially
\[
H_\Lambda\geq 0.
\]

For each $L\subset\Lambda$, we decompose the Hamiltonian into three parts:
\begin{equation}\label{eq_decompose}
H_\Lambda=H'_L+H_{\partial L}+H'_{L^c},
\end{equation}
where $\partial L$ is a region containing the interaction boundary between $L$ and $L^c$, namely,
\[
\partial L\supset 
\{x\in L\,;\,d(x,L^c)\leq\mathfrak{r}_\Phi\}\cup
\{y\in L^c\,;\,d(y,L)\leq\mathfrak{r}_\Phi\},
\]
and
\[
H'_L:=\sum_{\substack{Z\subset L\\ Z\cap(\partial L)^c\neq\emptyset}}\Phi(Z),\quad
H_{\partial L}:=\sum_{Z\subset\partial L}\Phi(Z),\quad
H'_{L^c}:=\sum_{\substack{Z\subset L^c\\ Z\cap(\partial L)^c\neq\emptyset}}\Phi(Z).
\]

We denote by $E^{H_\Lambda}(I)$ the spectral projection of $H_\Lambda$ corresponding to the energy interval $I\subset\mathbb{R}$. In particular, if
\[
0=\epsilon_0\leq\epsilon_1\leq\cdots\leq\epsilon_{d^{|\Lambda|}-1}
\]
are the eigenvalues of $H_\Lambda$, and if
\[
\psi_0,\psi_1,\cdots,\psi_{d^{|\Lambda|}-1}
\]
are corresponding orthonormal eigenvectors with eigenvalue $\epsilon_j$, then for any $a\leq b$,
\[
E^{H_\Lambda}([a,b]):=
\sum_{\substack{j=0,1,2,\ldots,d^{|\Lambda|}-1\\ a\leq\epsilon_j\leq b}} |\psi_j\rangle\langle\psi_j|.
\]

\begin{definition}[Energy-truncated Hamiltonian]
For any $M>\|H_{\partial L}\|$, we define
\begin{align*}
\overline{H'_L+H'_{L^c}}
&:=\big(H'_L+H'_{L^c}\big)\,E^{H'_L+H'_{L^c}}(-M,M)
+ M\,E^{H'_L+H'_{L^c}}[M,\infty),\\
\bar{H}_\Lambda
&:=H_{\partial L}+ \overline{H'_L+H'_{L^c}},
\end{align*}
where $E^{H'_L+H'_{L^c}}(I)$ denotes the spectral projection of $H'_L+H'_{L^c}$ corresponding to the interval $I$.
\end{definition}
Here, we note that
\[
H'_L+H'_{L^c}=H_\Lambda-H_{\partial L}\geq -\|H_{\partial L}\|
\]
and
\[
\bar{H}_\Lambda\leq H_\Lambda \quad\text{and}\quad
\|\bar{H}_\Lambda\|\leq\|H_{\partial L}\|+M.
\]

\subsection{Main Theorem}
We denote by $E^{\bar{H}_\Lambda}(I)$, $\{\bar{\epsilon}_j\}$ and $\{\bar{\psi}_j\}$ the spectral projection, the eigenvalues and the corresponding orthonormal eigenvectors of $\bar{H}_\Lambda$, respectively.

\begin{theorem}\label{thm_main_finite}
We set
\[
\lambda:=\frac{1}{4\mathfrak{j}_\Phi(N_\Phi+|\partial L|)},\quad N_\Phi:=\sup_{x\in\Gamma} |\{ y\in\Gamma \,;\, d(x,y)\leq \mathfrak{r}_\Phi\}|<\infty,
\]
\begin{align*}
\delta(p,q)&:=2\sqrt{2}(M+5\|H_{\partial L}\|+q) \exp(-\lambda(M-2p-18\|H_{\partial L}\|)),\\
\eta(\epsilon,\delta)&:=2\sqrt{2}(M+\|H_{\partial L}\|+\delta) \exp(-\lambda(M-2\epsilon-10\|H_{\partial L}\|)).
\end{align*}
\begin{itemize}
\item[(i)] For any $q>p\geq-2\|H_{\partial L}\|$, we have
\[
\left\|\Big(I-E^{H_\Lambda}(-\infty,q)\Big)\,
E^{\bar{H}_\Lambda}(-\infty,p]\right\|
\leq\frac{p+2\|H_{\partial L}\|+\delta(p,q)}{q+2\|H_{\partial L}\|}.
\]
\item[(ii)] For any $\delta>\epsilon\geq0$, we have
\[
\left\|\Big(I-E^{H_\Lambda}(-\delta,\delta)\Big)\,
E^{\bar{H}_\Lambda}[-\epsilon,\epsilon]\right\|
\leq\frac{\epsilon+\eta(\epsilon,\delta)}{\delta}.
\]
\end{itemize}
\end{theorem}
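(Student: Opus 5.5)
We reduce both estimates to a bound on the quadratic form $\langle\phi,H_\Lambda\phi\rangle$ for unit vectors $\phi$ in the low-energy range of $\bar{H}_\Lambda$, followed by a Markov-type inequality. For (i), set $\bar{P}:=E^{\bar{H}_\Lambda}(-\infty,p]$ and let $\phi\in\bar{P}\mathcal{H}_\Lambda$ be a unit vector. Since $H_\Lambda\geq0$ in the present normalization, shifting by $2\|H_{\partial L}\|$ gives
\[
\bigl\|(I-E^{H_\Lambda}(-\infty,q))\phi\bigr\|^2\,(q+2\|H_{\partial L}\|)\leq \langle\phi,(H_\Lambda+2\|H_{\partial L}\|)\phi\rangle .
\]
So it suffices to show
\[
\langle\phi,H_\Lambda\phi\rangle\leq p+\delta(p,q)\cdot(\text{correction}),
\]
that is, to bound $\langle\phi,(H_\Lambda-\bar{H}_\Lambda)\phi\rangle$ by an exponentially small amount. (To obtain the norm rather than its square, we would instead bound $\|(H_\Lambda+2\|H_{\partial L}\|)^{1/2}\phi\|$, or use that the ratio is at most $1$.) Writing $G:=H'_L+H'_{L^c}$, we have
\[
H_\Lambda-\bar{H}_\Lambda=(G-M)E^{G}[M,\infty)\geq0.
\]
Hence everything reduces to controlling the high-energy tail of $G$ on vectors of low $\bar{H}_\Lambda$-energy, namely $\|(G-M)^{1/2}E^G[M,\infty)\phi\|$. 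This is a bound on $\|E^G[M+k,\infty)\,\bar{P}\|$ for $k\geq0$, summed over $k$.

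The key step is the AKL-type leakage estimate: for a bounded perturbation $V=H_{\partial L}$ of a self-adjoint $G$, we want
\[
\|E^G[M',\infty)\,E^{G+V}(-\infty,p]\|\leq e^{-\lambda(M'-p-c\|V\|)} .
\]
We would prove it by writing, for $\phi$ in the range of $E^{\bar H}(-\infty,p]$ (note $\bar{H}_\Lambda=\bar{G}+V$ with $\bar{G}$ a function of $G$), the identity
\[
E^G[M',\infty)\phi=E^G[M',\infty)\,e^{-\tau\bar{G}}\,e^{\tau\bar{G}}e^{-\tau(\bar{G}+V)}\,e^{\tau(\bar{G}+V)}\phi .
\]
The outer factors give $e^{-\tau \min(M',M)}$ and $e^{\tau p}$. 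The middle factor $e^{\tau\bar{G}}e^{-\tau(\bar{G}+V)}$ is expanded in a Dyson series in $V$ with imaginary-time evolved terms $e^{s\bar{G}}Ve^{-s\bar{G}}$. Since $\bar{G}$ is a function of $G$ rather than a local sum, we instead conjugate by $e^{sG}$ and use that $V=H_{\partial L}$ is supported on $\partial L$. Only the interaction terms of $G$ overlapping $\partial L$ fail to commute with $V$; their number is bounded in terms of $N_\Phi+|\partial L|$. The Hadamard/commutator series for $e^{sG}Ve^{-sG}$ then converges for
\[
|s|<\frac{1}{2\mathfrak{j}_\Phi(N_\Phi+|\partial L|)},
\]
with norm at most $2\|V\|$ at $s=\lambda$. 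This is exactly where $\lambda$ and the volume-independence come from, because the nested commutators only grow from the support $\partial L$ by the finite range. Choosing $\tau=\lambda$ and using spectral splitting of $G$ into the windows $[M+k,M+k+1)$ gives the geometric sum. This produces the prefactor $(M+5\|H_{\partial L}\|+q)$ and the shifts $18\|H_{\partial L}\|$ and $2p$ in the exponent (the factor $2$ arising from squaring in the quadratic-form step). We expect this imaginary-time step, in particular handling the truncated $\bar{G}$ versus $G$ and keeping the constants volume-free, to be the main obstacle. Our first attempt would be to treat $\bar{G}$ and $G$ as agreeing on $E^G(-M,M)$ and to absorb the discrepancy into the projector $E^G[M,\infty)$.

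Part (ii) follows the same way. Apply the quadratic-form bound to $\phi\in E^{\bar{H}_\Lambda}[-\epsilon,\epsilon]$: here $0\leq\langle\phi,H_\Lambda\phi\rangle\leq\epsilon+\eta$, with $\eta$ obtained from the same leakage estimate at $p=\epsilon$. Because $H_\Lambda\geq0$, the complement of $(-\delta,\delta)$ is contained in $[\delta,\infty)$. Markov's inequality with denominator $\delta$, with no shift needed, then gives the bound.
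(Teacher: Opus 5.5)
\textbf{There is a genuine gap; in fact there are two linked ones.} Below, $G=H'_L+H'_{L^c}$ as in your notation, and $\bar G:=\overline{H'_L+H'_{L^c}}$.

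\textbf{First gap: the Markov step only bounds the squared norm.} Your inequality is
\[
\bigl\|(I-E^{H_\Lambda}(-\infty,q))\phi\bigr\|^2\le\frac{\langle\phi,(H_\Lambda+2\|H_{\partial L}\|)\phi\rangle}{q+2\|H_{\partial L}\|}.
\]
Even a perfect estimate $\langle\phi,H_\Lambda\phi\rangle\le p+(\text{small})$ then gives only $\|\cdot\|\le\sqrt{r}$, with $r=(p+2\|H_{\partial L}\|+\dots)/(q+2\|H_{\partial L}\|)$. This is strictly weaker than the claimed $\|\cdot\|\le r$. Neither repair you mention works:
\begin{itemize}
\item bounding $\|(H_\Lambda+2\|H_{\partial L}\|)^{1/2}\phi\|$ is the same inequality;
\item $r\le1$ gives $\sqrt r\ge r$, which is the wrong direction.
\end{itemize}
The same problem affects (ii). Moreover, the $+\delta$ in the prefactor of $\eta(\epsilon,\delta)$ cannot arise from any bound on $\langle\phi,H_\Lambda\phi\rangle$. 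A linear bound needs the norm form $\delta\,\|(I-E^{H_\Lambda}(-\delta+\xi,\delta+\xi))\chi\|\le\|(H_\Lambda-\xi)\chi\|$. But controlling $\|H_\Lambda\phi\|$ directly would again require a depth estimate, because $H_\Lambda-\bar H_\Lambda=(G-M)E^G[M,\infty)$ has norm growing with the volume (compare Appendix~\ref{S_counterexample}).

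\textbf{Second gap: no decay in depth.} Your reduction needs $\|E^G[M+k,\infty)E^{\bar H_\Lambda}(-\infty,p]\|$ to decay in $k$. Your own identity gives $\|E^G[M+k,\infty)e^{-\tau\bar G}\|=e^{-\tau M}$ for every $k\ge0$, because $\bar G\equiv M$ on $E^G[M,\infty)$. So there is no geometric sum, and summing over the $O(\|G\|)$ windows reintroduces the volume. Depth decay might be obtainable by a separate Combes--Thomas argument inside $E^G[M,\infty)$, but you give none, and it would still only feed the squared bound.

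\textbf{The missing idea: never estimate $H_\Lambda$ above the cutoff.} Insert $I=E^G(-M,M)+(I-E^G(-M,M))$ after $E^{\bar H_\Lambda}[-\epsilon+\xi,\epsilon+\xi]$, and set $\ell:=\|(I-E^G(-M,M))E^{\bar H_\Lambda}[-\epsilon+\xi,\epsilon+\xi]\|$.
\begin{itemize}
\item The second piece contributes at most $\ell$.
\item On the first piece, $H_\Lambda E^G(-M,M)=\bar H_\Lambda E^G(-M,M)$, so
\[
\bigl\|(H_\Lambda-\xi)E^G(-M,M)E^{\bar H_\Lambda}[-\epsilon+\xi,\epsilon+\xi]\bigr\|\le\epsilon+(M+\|H_{\partial L}\|+|\xi|)\,\ell .
\]
\end{itemize}
Norm-Markov then gives the total bound $\bigl(\epsilon+(M+\|H_{\partial L}\|+|\xi|+\delta)\ell\bigr)/\delta$; this is where the $+\delta$ comes from.

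Now only the depth-zero leak $\ell$ is needed, and your identity does control it at $M'=M$. You would get $\ell\le e^{-\lambda(M-(\epsilon+\xi)-4\|H_{\partial L}\|)}$, which is even below the paper's bound. This requires first proving $\|e^{\lambda\bar G}e^{-\lambda\bar H_\Lambda}\|\le e^{4\lambda\|H_{\partial L}\|}$, by the three-block comparison of Lemma~\ref{lem_G1} combined with the Dyson argument of Lemma~\ref{lem_G}. The paper instead bounds $\ell$ with Lemma~\ref{lem_AKL2.2_6.1}.

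Part (i) then follows from the shift $(\epsilon,\delta,\xi)=(p+2\|H_{\partial L}\|,\,q+2\|H_{\partial L}\|,\,-2\|H_{\partial L}\|)$. The constants $2p$, $18\|H_{\partial L}\|$ and $M+5\|H_{\partial L}\|+q$ come from this shift together with the $2\epsilon$ in Lemma~\ref{lem_AKL2.2_6.1}, not from squaring.
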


Here and below, ``volume-uniform'' means uniform in the total volume $|\Lambda|$.  The constants may still depend on the local interaction parameters and on the boundary region $\partial L$.

\begin{remark}
Theorem \ref{thm_main_finite} is weaker than the operator-norm estimate
\[
\left\|\bigl(H_\Lambda-\bar H_\Lambda\bigr)
E^{\bar H_\Lambda}(-\infty,\epsilon]\right\|,
\]
of the type appearing in \cite[Theorem 2.6]{Arad-Kuwahara-Landau 2016}. 
Instead, it controls only the spectral leakage of low-energy vectors of $\bar{H}_\Lambda$ into high-energy subspaces of $H_\Lambda$.

This spectral-overlap formulation is used because it is volume-uniform: as Appendix \ref{S_counterexample} shows, the corresponding operator-norm estimate can grow with the total volume, whereas the constants in Theorem \ref{thm_main_finite} are independent of $|\Lambda|$. Nevertheless, the spectral-overlap estimate is sufficient for the eigenvalue comparison in Theorem \ref{thm_main2_finite}.
\end{remark}

\begin{theorem}\label{thm_main2_finite}
For any $j$ with $0\leq j\leq d^{|\Lambda|}-1$, we set
\begin{align*}
\delta_j&:=2\sqrt{2}(M+5\|H_{\partial L}\|+\epsilon_j) \exp(-\lambda(M-2\epsilon_j-18\|H_{\partial L}\|))\\
\eta&:=\eta(2\delta_0,\Delta)=2\sqrt{2}(M+\|H_{\partial L}\|+\Delta)\exp(-\lambda(M-4\delta_0-10\|H_{\partial L}\|)).
\end{align*}
\begin{itemize}
\item[(i)] We have
\[
\epsilon_j-2\delta_j\leq \bar{\epsilon}_j\leq \epsilon_j.
\]
\item[(ii)] If $\epsilon_1=\Delta>2\delta_0+\eta$, we have
\[
|\langle\psi_0,\bar{\psi}_0\rangle|^2\geq 1-\left(\frac{2\delta_0+\eta}{\Delta}\right)^2.
\]
\end{itemize}
\end{theorem}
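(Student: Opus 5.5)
Part (i) will come from Theorem \ref{thm_main_finite}(i) combined with Lemma \ref{lem_overlap}, using a min-max rank argument. The upper bound $\bar{\epsilon}_j\leq\epsilon_j$ is immediate from $\bar H_\Lambda\leq H_\Lambda$ and the Courant--Fischer min-max principle.

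For the lower bound we argue by contradiction. Suppose $\bar{\epsilon}_j<\epsilon_j-2\delta_j$. Set $p:=\bar{\epsilon}_j$ and $P:=E^{\bar H_\Lambda}(-\infty,p]$, so that $\operatorname{Rank}(P)\geq j+1$. Choose $q$ slightly below $\epsilon_j$ (or exactly $\epsilon_j$; the half-open interval helps here) and set $Q:=E^{H_\Lambda}(-\infty,q)$, so that $\operatorname{Rank}(Q)\leq j$.

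Next we apply Theorem \ref{thm_main_finite}(i) with this $p$ and $q$. We need $p\geq -2\|H_{\partial L}\|$, which holds since $\bar H_\Lambda\geq -\|H_{\partial L}\|$. Note that $\delta(p,q)$ is increasing in $p$ and $q$. With $p\leq\epsilon_j$ and $q\leq\epsilon_j$ we get $\delta(p,q)\leq\delta_j$. The overlap bound is then at most
\[
\frac{p+2\|H_{\partial L}\|+\delta_j}{q+2\|H_{\partial L}\|},
\]
which is strictly less than $1$ as soon as $q-p>\delta_j$, since the denominator is positive. Since $p<\epsilon_j-2\delta_j$, we can take $q\in(p+\delta_j,\epsilon_j)$, leaving margin. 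Lemma \ref{lem_overlap} then gives $\operatorname{Rank}(P)\leq\operatorname{Rank}(Q)\leq j$, a contradiction.

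This argument actually only needs a $\delta_j$ margin, so the factor $2$ gives slack; presumably the factor is used to absorb the replacement of $\epsilon_j$ by $q$ in the prefactor. The main point to check is the monotonicity bookkeeping in $\delta(p,q)$ and the rank counts with strict versus nonstrict intervals, and these seem routine.

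For (ii) we use Theorem \ref{thm_main_finite}(ii). First, part (i) with $j=0$ and $\epsilon_0=0$ gives $-2\delta_0\leq\bar\epsilon_0\leq 0$. Hence $\bar\psi_0\in E^{\bar H_\Lambda}[-\epsilon,\epsilon]\mathcal{H}_\Lambda$ with $\epsilon:=2\delta_0$. Take $\delta:=\Delta$, which exceeds $\epsilon$ by hypothesis. Since $H_\Lambda\geq0$ and $\epsilon_1=\Delta$, the projection $E^{H_\Lambda}(-\Delta,\Delta)$ is exactly the ground-state projection $|\psi_0\rangle\langle\psi_0|$, or projects onto $\psi_0$ together with any degenerate copies at energy $0$. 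In the latter case $\epsilon_1=0$, which contradicts $\Delta>0$. Theorem \ref{thm_main_finite}(ii) gives
\[
\|(I-|\psi_0\rangle\langle\psi_0|)\bar\psi_0\|\leq \frac{2\delta_0+\eta(2\delta_0,\Delta)}{\Delta}=:c<1.
\]
Then $|\langle\psi_0,\bar\psi_0\rangle|^2=1-\|(I-|\psi_0\rangle\langle\psi_0|)\bar\psi_0\|^2\geq 1-c^2$, which is the claim; this is also the second assertion of Lemma \ref{lem_overlap}.

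The only delicate point is in (i): arranging a strict inequality below $1$ while respecting the open/closed endpoints and the exact constants. Everything else is a direct substitution into Theorem \ref{thm_main_finite}.
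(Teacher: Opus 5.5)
Your proof is correct and follows essentially the paper's route: Theorem \ref{thm_main_finite}(i), monotonicity of $\delta(p,q)$ and the rank comparison of Lemma \ref{lem_overlap} give (i), and (ii) is the same argument as the paper's. The differences are cosmetic. Taking $p=\bar\epsilon_j$ in a contradiction argument removes the paper's separate case $\epsilon_j-2\delta_j<-2\|H_{\partial L}\|$ and shows the factor $2$ is slack (it even yields $\bar\epsilon_j\geq\epsilon_j-\delta_j$). For $p\geq-2\|H_{\partial L}\|$ you only need the immediate bound $\bar H_\Lambda\geq-2\|H_{\partial L}\|$, not the sharper $-\|H_{\partial L}\|$ that you cite without justification.
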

\begin{proof}
(i): First, $\bar{\epsilon}_j\leq\epsilon_j$ follows from $\bar{H}_\Lambda\leq H_\Lambda$. It remains to prove the lower bound $\epsilon_j-2\delta_j\leq\bar{\epsilon}_j$. If $\epsilon_j-2\delta_j<-2\|H_{\partial L}\|$, then $\bar{H}_\Lambda\geq-2\|H_{\partial L}\|$ implies $\bar{\epsilon}_j\geq-2\|H_{\partial L}\|>\epsilon_j-2\delta_j$. 

Hence, it is enough to consider the case $\epsilon_j-2\delta_j\geq -2\|H_{\partial L}\|$. In this case, for any $j=0,1,2,\ldots$, since $\delta(\epsilon_j-2\delta_j,\epsilon_j)\leq \delta_j$ and by Theorem \ref{thm_main_finite}(i) with $(p,q)=(\epsilon_j-2\delta_j,\epsilon_j)$, we have
\begin{align*}
&\left\|\left(I-E^{H_\Lambda}(-\infty,\epsilon_j)\right)\,
E^{\bar{H}_\Lambda}(-\infty,\epsilon_j-2\delta_j]\right\|\\
&\leq\frac{\epsilon_j+2\|H_{\partial L}\|-2\delta_j+\delta(\epsilon_j-2\delta_j,\epsilon_j)}{\epsilon_j+2\|H_{\partial L}\|}
\leq 1-\frac{\delta_j}{\epsilon_j+2\|H_{\partial L}\|}<1.
\end{align*}
By Lemma \ref{lem_overlap},
\[
\operatorname{Rank}(E^{\bar{H}_\Lambda}(-\infty,\epsilon_j-2\delta_j])\leq \operatorname{Rank}(E^{H_\Lambda}(-\infty,\epsilon_j))<j+1.
\]
Since
\[
\bar{\epsilon}_j:=\inf\left\{p\in\mathbb{R} \,|\, \operatorname{Rank} \left(E^{\bar{H}_\Lambda}(-\infty,p]\right)\geq j+1\right\},
\]
we have $\epsilon_j-2\delta_j\leq\bar{\epsilon}_j$.

(ii): By Theorem \ref{thm_main_finite}(ii) with $(\epsilon,\delta)=(2\delta_0,\Delta)$, we have
\[
\left\|\left(I-E^{H_\Lambda}(-\Delta,\Delta)\right)\,
E^{\bar{H}_\Lambda}[-2\delta_0,2\delta_0]\right\|
\leq \frac{2\delta_0+\eta(2\delta_0,\Delta)}{\Delta}.
\]
By part (i) with $j=0$, we have $-2\delta_0\leq \bar{\epsilon}_0\leq 0$, and hence 
\[
\bar{\psi}_0\in E^{\bar{H}_\Lambda}[-2\delta_0,2\delta_0]\mathcal H_\Lambda.
\]
Since $H_\Lambda\geq0$ and $\epsilon_1=\Delta$, we have
\(E^{H_\Lambda}(-\Delta,\Delta)=|\psi_0\rangle\langle\psi_0|\).
Hence, by Lemma \ref{lem_overlap},
\[
|\langle\psi_0,\bar{\psi}_0\rangle|^2
=\|E^{H_\Lambda}(-\Delta,\Delta) \bar{\psi}_0\|^2
\geq 1-
\left(\frac{2\delta_0+\eta(2\delta_0,\Delta)}{\Delta}\right)^2.
\]
\end{proof}

\subsection{Lemmas and Proof of Theorem \ref{thm_main_finite}}
Theorem \ref{thm_main_finite} is deduced from Lemma \ref{lem_AKL2.2_6.1} below, which corresponds to \cite[Theorem 2.2 and Theorem 6.1]{Arad-Kuwahara-Landau 2016}. Its proof relies on two off-diagonal estimates: namely Lemmas \ref{lem_AKL2.1_5.1} and \ref{lem_AKL6.2}.

Lemma \ref{lem_AKL2.1_5.1} revisits \cite[Theorems 2.1 and 5.1]{Arad-Kuwahara-Landau 2016}. We include a proof adapted to the present volume-uniform formulation, making explicit a locality point that is useful for our later infinite-volume argument. Lemma \ref{lem_AKL6.2} corresponds to \cite[Lemma 6.2]{Arad-Kuwahara-Landau 2016}; its proof follows the same strategy as in the previous work.

\begin{lemma}\label{lem_AKL2.1_5.1}
We set $\lambda$ as in Theorem \ref{thm_main_finite}. Then, for any $M,N\in\mathbb{R}$ and any $A\in\mathcal{A}_\Lambda$, we have
\begin{align*}
&\|E^{H_\Lambda}[M,\infty) \,A\, E^{H_\Lambda}(-\infty,N]\|\\
&\quad\leq \exp(-\lambda(M-N-4\|H_{\partial L}\|)) \left\|e^{\lambda(H'_L+H'_{L^c})}Ae^{-\lambda(H'_L+H'_{L^c})}\right\|,\\
&\|E^{H'_L+H'_{L^c}}[M,\infty) \,A\, E^{H'_L+H'_{L^c}}(-\infty,N]\|\\
&\quad\leq \exp(-\lambda(M-N-4\|H_{\partial L}\|)) \left\|e^{\lambda H_\Lambda}Ae^{-\lambda H_\Lambda}\right\|.
\end{align*}
\end{lemma}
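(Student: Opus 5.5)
The argument is the standard imaginary-time (Chernoff-type) bound, with the comparison made between $H_\Lambda$ and $G:=H'_L+H'_{L^c}=H_\Lambda-H_{\partial L}$. I treat the first inequality; the second is symmetric, with the roles of $H_\Lambda$ and $G$ exchanged. Write $P_M:=E^{H_\Lambda}[M,\infty)$ and $P_N:=E^{H_\Lambda}(-\infty,N]$. Since $e^{\lambda H_\Lambda}P_M$ has inverse norm at most $e^{-\lambda M}$ on its range, and $\|e^{\lambda H_\Lambda}P_N\|\le e^{\lambda N}$, we get
\[
\|P_M A P_N\|=\|P_M e^{-\lambda H_\Lambda}\bigl(e^{\lambda H_\Lambda}Ae^{-\lambda H_\Lambda}\bigr)e^{\lambda H_\Lambda}P_N\|\le e^{-\lambda(M-N)}\|e^{\lambda H_\Lambda}Ae^{-\lambda H_\Lambda}\|.
\]
It therefore suffices to prove the volume-uniform comparison
\begin{equation*}
\|e^{\lambda H_\Lambda}Ae^{-\lambda H_\Lambda}\|\le e^{4\lambda\|H_{\partial L}\|}\,\|e^{\lambda G}Ae^{-\lambda G}\|
\end{equation*}
(and the same with $H_\Lambda$ and $G$ swapped). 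For this I factor
\[
e^{\lambda H_\Lambda}Ae^{-\lambda H_\Lambda}=\bigl(e^{\lambda H_\Lambda}e^{-\lambda G}\bigr)\bigl(e^{\lambda G}Ae^{-\lambda G}\bigr)\bigl(e^{\lambda G}e^{-\lambda H_\Lambda}\bigr).
\]
The goal then reduces to bounding $\|e^{\lambda H_\Lambda}e^{-\lambda G}\|$ and $\|e^{\lambda G}e^{-\lambda H_\Lambda}\|$ by $e^{2\lambda\|H_{\partial L}\|}$ each.

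To control these two factors I use the Dyson/interaction-picture expansion. The function $W(s):=e^{sH_\Lambda}e^{-sG}$ satisfies $W'(s)=W(s)\,e^{sG}H_{\partial L}e^{-sG}$. Consequently
\[
\|W(\lambda)\|\le\exp\Bigl(\int_0^\lambda\|e^{sG}H_{\partial L}e^{-sG}\|\,ds\Bigr).
\]
The key point is that $H_{\partial L}=\sum_{Z\subset\partial L}\Phi(Z)$ is supported in $\partial L$. Only terms of $G$ whose support meets $\partial L$ fail to commute with it. Let $G_{\partial}$ denote this sum of terms. It contains at most $|\partial L|$-many groups, each of norm $\le\mathfrak j_\Phi$ per site, so $\|G_\partial\|\le \mathfrak j_\Phi|\partial L|$. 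Then $e^{sG}H_{\partial L}e^{-sG}=e^{sG_\partial'}H_{\partial L}e^{-sG_\partial'}$ fails to be exact, since the terms of $G$ do not commute among themselves. So I will instead expand in nested commutators:
\[
e^{sG}Be^{-sG}=\sum_k \frac{s^k}{k!}\operatorname{ad}_G^k(B).
\]
Using finite range, each commutator with $G$ enlarges the support by terms touching the current support. Counting as in Lieb--Robinson/Araki-type estimates, $\|\operatorname{ad}_G^k(H_{\partial L})\|\le \|H_{\partial L}\|\,k!\,(2\mathfrak j_\Phi(N_\Phi+|\partial L|))^k$, roughly. The choice $\lambda=1/(4\mathfrak j_\Phi(N_\Phi+|\partial L|))$ then makes the series converge geometrically with ratio $1/2$. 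This gives $\|e^{sG}H_{\partial L}e^{-sG}\|\le 2\|H_{\partial L}\|$ for $s\le\lambda$, hence $\|W(\lambda)\|\le e^{2\lambda\|H_{\partial L}\|}$. The same argument applies to $e^{\lambda G}e^{-\lambda H_\Lambda}$, now with $-H_{\partial L}$ and conjugation by $H_\Lambda$. The combinatorial constant is identical because $H_\Lambda$ has the same locality parameters.

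The main obstacle is this combinatorial commutator bound with volume-independent constants. A naive bound gives $\|\operatorname{ad}_G^k\|\lesssim(2\|G\|)^k$, which is volume dependent. I will instead track the support $S_k$ after $k$ commutators, for which $|S_k|\le|\partial L|+kN_\Phi$. The number of interaction terms meeting $S_k$ has total norm $\le\mathfrak j_\Phi|S_k|$. Consequently
\[
\|\operatorname{ad}^k_G(B)\|\le 2^k\mathfrak j_\Phi^k\prod_{m<k}(|\partial L|+mN_\Phi)\,\|B\|\le \bigl(2\mathfrak j_\Phi(|\partial L|+N_\Phi)\bigr)^k k!\,\|B\|.
\]
Verifying this product bound (in particular $|\partial L|+mN_\Phi\le (m+1)(|\partial L|+N_\Phi)$) and summing the series at $s\le\lambda$ is where the care lies. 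Everything else is routine functional calculus in finite dimension.
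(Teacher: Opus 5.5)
Your proposal is correct and is essentially the paper's proof. The paper also peels off $e^{\mp\lambda H_\Lambda}$ against the spectral projections and controls $W(\lambda)=e^{\lambda H_\Lambda}e^{-\lambda(H'_L+H'_{L^c})}$ and its inverse by a Dyson/Gronwall bound, where the generator is the imaginary-time-evolved boundary term, bounded by the local Hadamard series by $2\|H_{\partial L}\|$ for times up to $\lambda$. The differences are cosmetic: the paper conjugates $H_{\partial L}$ by $H_\Lambda$ for both factors, while you use $G$ for one, which works equally well since $G$ is a sub-sum of the same interaction; and you spell out the commutator-counting estimate that the paper only cites.
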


\begin{remark}\label{rmk_hadamard_finite}
In the proofs of \cite[Theorems 2.1 and 5.1]{Arad-Kuwahara-Landau 2016}, the Hadamard formula is used to estimate imaginary-time conjugations of the form $e^{tH_\Lambda}Ae^{-tH_\Lambda}$. A direct Hadamard estimate with respect to the full Hamiltonian $H_\Lambda$ may yield an admissible decay parameter $\lambda$ depending on $(\partial L)^c$, rather than only on the boundary region.

The proof of Lemma \ref{lem_AKL2.1_5.1} avoids this difficulty by focusing on
\[
F_1(A,t)=e^{tH_\Lambda}e^{-t(H'_L+H'_{L^c})}A e^{t(H'_L+H'_{L^c})}e^{-tH_\Lambda}.
\]
This family satisfies a differential equation generated by the imaginary-time evolution of the boundary term $H_{\partial L}$. Thus $\lambda$ is independent of the total volume.

In the finite-volume setting, the same locality issue is also addressed in \cite[Supplementary Lemma 17]{Kuwahara-Saito2020}. The argument of Lemma \ref{lem_AKL2.1_5.1} may be viewed as an alternative proof of the corresponding finite-volume Hadamard estimate.
\end{remark}

A proof of this Lemma is given in Subsection \ref{S_AKL2.1_5.1}.

\begin{lemma}\label{lem_AKL6.2}
We set $\lambda$ as in Theorem \ref{thm_main_finite}. Then, for any $M,N\in\mathbb{R}$ and any $A\in\mathcal{A}_\Lambda$, we have
\begin{align*}
&\left\|E^{\bar{H}_\Lambda}[M,\infty) \,A\, E^{\bar{H}_\Lambda}(-\infty,N]\right\|\\
&\quad\leq \exp(-\lambda(M-N-8\|H_{\partial L}\|)) \left\|e^{\lambda\overline{H'_L+H'_{L^c}}}Ae^{-\lambda\overline{H'_L+H'_{L^c}}}\right\|,\\
&\left\|E^{\overline{H'_L+H'_{L^c}}}[M,\infty) \,A\, E^{\overline{H'_L+H'_{L^c}}}(-\infty,N]\right\|\\
&\quad\leq \exp\left(-\lambda(M-N-8\|H_{\partial L}\|)\right)\left\| e^{\lambda\bar{H}_\Lambda}Ae^{-\lambda\bar{H}_\Lambda}\right\|.
\end{align*}
\end{lemma}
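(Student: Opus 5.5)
Write $K:=H'_L+H'_{L^c}$ and $\bar K:=\overline{K}$, so that $\bar H_\Lambda=H_{\partial L}+\bar K$. We will follow the template of Lemma \ref{lem_AKL2.1_5.1}, with the pair $(H_\Lambda,K)$ replaced by $(\bar H_\Lambda,\bar K)$.

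\emph{Reduction to an imaginary-time bound.} For any self-adjoint $G$ and any $t>0$,
\[
E^{G}[M,\infty)\,B\,E^{G}(-\infty,N]
=E^{G}[M,\infty)e^{-tG}\bigl(e^{tG}Be^{-tG}\bigr)e^{tG}E^{G}(-\infty,N],
\]
and therefore
\[
\|E^{G}[M,\infty)BE^{G}(-\infty,N]\|\le e^{-t(M-N)}\|e^{tG}Be^{-tG}\|.
\]
Taking $G=\bar H_\Lambda$ and $t=\lambda$, the first inequality reduces to showing
\[
\|e^{\lambda\bar H_\Lambda}Ae^{-\lambda\bar H_\Lambda}\|\le e^{8\lambda\|H_{\partial L}\|}\,\|e^{\lambda\bar K}Ae^{-\lambda\bar K}\|.
\]
For the second inequality we take $G=\bar K$, and it suffices to prove the same bound with the roles of $\bar H_\Lambda$ and $\bar K$ exchanged.

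\emph{Factorization.} Set
\[
\bar F_1(A,t):=e^{t\bar H_\Lambda}e^{-t\bar K}Ae^{t\bar K}e^{-t\bar H_\Lambda}.
\]
Then
\[
e^{\lambda\bar H_\Lambda}Ae^{-\lambda\bar H_\Lambda}=\bar F_1\bigl(e^{\lambda\bar K}Ae^{-\lambda\bar K},\lambda\bigr),
\]
so it is enough to show that $B\mapsto \bar F_1(B,t)$ is bounded in norm by $e^{8t\|H_{\partial L}\|}$ for $0\le t\le\lambda$. We write
\[
\bar F_1(B,t)=W(t)\,B\,W(t)^{-1},\qquad W(t):=e^{t\bar H_\Lambda}e^{-t\bar K},
\]
which gives $\|\bar F_1(B,t)\|\le\|W(t)\|\,\|W(t)^{-1}\|\,\|B\|$. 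Now $W$ satisfies
\[
W'(t)=e^{t\bar H_\Lambda}H_{\partial L}e^{-t\bar H_\Lambda}\,W(t),
\]
and by Grönwall, $\|W(t)\|\le\exp\bigl(\int_0^t\|e^{s\bar H_\Lambda}H_{\partial L}e^{-s\bar H_\Lambda}\|\,ds\bigr)$. An analogous formula holds for $W(t)^{-1}=e^{t\bar K}e^{-t\bar H_\Lambda}$. Since $\|e^{s\bar H}H_{\partial L}e^{-s\bar H}\|$ is not obviously bounded by $\|H_{\partial L}\|$, I would mirror exactly whatever device the proof of Lemma \ref{lem_AKL2.1_5.1} uses for $F_1$. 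Concretely, I would obtain a closed differential inequality that involves only conjugation of the local operator $H_{\partial L}$.

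\emph{Main obstacle.} The key step is to control $e^{s\bar H_\Lambda}H_{\partial L}e^{-s\bar H_\Lambda}$, and its counterpart with $\bar K$, using only local data, since $\bar K$ is no longer a sum of local terms. My first attempt would be a bootstrap. Apply the step-one bound to $G=\bar H_\Lambda$ with spectral blocks of width $\sim\|H_{\partial L}\|$, and use the fact that $\bar K$ commutes with the spectral projections of $K$. Away from the cutoff, $\bar K$ agrees with $K$. Near the cutoff, compare $\bar K$ with $K$ via Lemma \ref{lem_AKL2.1_5.1}, whose second estimate controls the off-diagonal blocks of $H_{\partial L}$ in the eigenbasis of $K=H_\Lambda-H_{\partial L}$, with weight $\|e^{\lambda H_\Lambda}H_{\partial L}e^{-\lambda H_\Lambda}\|$. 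That weight is local because of the finite-range structure and the choice $\lambda=1/(4\mathfrak j_\Phi(N_\Phi+|\partial L|))$.

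Summing a geometric series over blocks of $H_{\partial L}$ against the exponential decay should give $\|e^{s\bar H}H_{\partial L}e^{-s\bar H}\|\le C\|H_{\partial L}\|$. The flattening $\bar K\le K$ at level $M$ costs at most one extra factor of $e^{4\lambda\|H_{\partial L}\|}$. Together with the $4\|H_{\partial L}\|$ already present in Lemma \ref{lem_AKL2.1_5.1}, this accounts for the constant $8\|H_{\partial L}\|$ in place of $4\|H_{\partial L}\|$. Finally, the symmetric statement with $\bar H_\Lambda$ and $\bar K$ exchanged follows by the same argument applied to $W(t)^{-1}$.
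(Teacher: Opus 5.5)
\textbf{There is a genuine gap: the key estimate is never established.} Your reduction to $\|e^{\lambda\bar H_\Lambda}Ae^{-\lambda\bar H_\Lambda}\|\le e^{8\lambda\|H_{\partial L}\|}\|e^{\lambda\bar K}Ae^{-\lambda\bar K}\|$ is correct. So is the factorization through $W(t)=e^{t\bar H_\Lambda}e^{-t\bar K}$ with Gr\"onwall. But the whole content of the lemma is the step you leave open. You need a bound on the imaginary-time conjugation of $H_{\partial L}$ by a \emph{truncated} operator for $0\le u\le\lambda$, uniform in $|\Lambda|$ and in the cutoff $M$. The bootstrap you sketch does not supply it, for two reasons.
\begin{itemize}
\item Applying your first-step bound with $G=\bar H_\Lambda$ presupposes control of $\|e^{u\bar H_\Lambda}H_{\partial L}e^{-u\bar H_\Lambda}\|$. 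That is the quantity you are trying to bound, so the argument is circular.
\item Summing over spectral blocks of $K$ against the off-diagonal decay of Lemma \ref{lem_AKL2.1_5.1} also fails. That lemma decays at rate exactly $\lambda$, which leaves no margin at $u=\lambda$. For two blocks below the cutoff, where $\bar K=K$, the weight $e^{\lambda|\mu_j-\mu_k|}$ cancels the decay, so each pair contributes $O(e^{4\lambda\|H_{\partial L}\|}\|H_{\partial L}\|)$. The resulting Schur-type bound then grows with the number of blocks in $[-\|H_{\partial L}\|,M)$, i.e.\ linearly in $M$. After Gr\"onwall you would not get an $M$-independent constant at all, let alone $8\|H_{\partial L}\|$.
\end{itemize}
The ``$4+4=8$'' accounting is therefore asserted, not derived.

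\textbf{The missing idea (Lemmas \ref{lem_G1}--\ref{lem_G}, Appendix \ref{S_AKL2}).} Write the equations in the right-multiplied form
\[
W'(t)=W(t)V(t),\qquad \tfrac{d}{dt}W(t)^{-1}=-V(t)W(t)^{-1},\qquad V(u):=e^{u\bar K}H_{\partial L}e^{-u\bar K}.
\]
Then only conjugation by $\bar K$, a function of $K$, is needed, never conjugation by $\bar H_\Lambda=H_{\partial L}+\bar K$. The latter is neither a local sum nor a function of $K$, which is why your left-multiplied form leads nowhere.

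Set $E_2=E^{K}(-M,M)$ and $E_3=E^{K}[M,\infty)$. Then $E_2+E_3=I$, because $K\ge-\|H_{\partial L}\|>-M$. For $u\ge0$,
\[
V(u)=e^{u\bar K}H_{\partial L}e^{-uM}E_3+\bigl(e^{u\bar K}e^{-uK}\bigr)\bigl(e^{uK}H_{\partial L}e^{-uK}\bigr)E_2 .
\]
Each piece is easy to bound:
\begin{itemize}
\item The first term has norm at most $\|H_{\partial L}\|$, since $\bar K\le M$.
\item The factor $e^{u\bar K}e^{-uK}=E_2+e^{u(M-K)}E_3$ is a contraction.
\item The factor $e^{uK}H_{\partial L}e^{-uK}$ is controlled directly by the Hadamard series for the finite-range local sum $K$. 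This is the estimate used \emph{inside} the proof of Lemma \ref{lem_AKL2.1_5.1}, not its conclusion, and it gives $\le\|H_{\partial L}\|/(1-u/(2\lambda))\le2\|H_{\partial L}\|$.
\end{itemize}
Hence $\|V(u)\|\le4\|H_{\partial L}\|$ on $[0,\lambda]$. Gr\"onwall gives $\|W(\lambda)\|,\|W(\lambda)^{-1}\|\le e^{4\lambda\|H_{\partial L}\|}$, and both inequalities follow with $8\|H_{\partial L}\|$. No block decomposition or near-cutoff comparison is needed.
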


Lemma \ref{lem_AKL6.2} can be proved by the same strategy as in \cite[Lemma 6.2]{Arad-Kuwahara-Landau 2016}. Our formulation yields a slightly sharper bound. Since no essentially new idea is involved, we defer the proof to Appendix \ref{S_AKL2}, where we prove a more general infinite-volume version.

\begin{lemma}\label{lem_AKL2.2_6.1}
For any $\epsilon\geq0$ and any $N>0$, we have
\begin{align*}
\left\|\Big(I-E^{H'_L+H'_{L^c}}(-N,N)\Big)\, E^{H_\Lambda}[- \epsilon,\epsilon]\right\|
&\leq 2\sqrt{2} \exp(-\lambda(N-2\epsilon-6\|H_{\partial L}\|)),\\
\left\|\Big(I-E^{\overline{H'_L+H'_{L^c}}}(-N,N)\Big)\, E^{\bar{H}_\Lambda}[-\epsilon,\epsilon]\right\|
&\leq 2\sqrt{2} \exp(-\lambda(N-2\epsilon-10\|H_{\partial L}\|)).
\end{align*}
In particular, by setting $N=M$, the second inequality yields
\[
\left\|\Big(I-E^{H'_L+H'_{L^c}}(-M,M)\Big)\, E^{\bar{H}_\Lambda}[-\epsilon,\epsilon]\right\| \leq 2\sqrt{2} \exp(-\lambda(M-2\epsilon-10\|H_{\partial L}\|)).
\]
\end{lemma}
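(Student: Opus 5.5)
We follow the strategy of \cite[Theorem 2.2]{Arad-Kuwahara-Landau 2016}: split the high-energy part of $H'_L+H'_{L^c}$ dyadically into unit energy shells and bound each shell's overlap with the low-energy subspace of $H_\Lambda$ by Lemma \ref{lem_AKL2.1_5.1}. Write $K:=H'_L+H'_{L^c}$. Two one-sided bounds suffice, since
\[
I-E^{K}(-N,N)=E^{K}[N,\infty)+E^{K}(-\infty,-N].
\]
The negative part is easy. Because $K\geq-\|H_{\partial L}\|$, the projection $E^{K}(-\infty,-N]$ vanishes unless $N\leq\|H_{\partial L}\|$. In that case $\exp(-\lambda(N-2\epsilon-6\|H_{\partial L}\|))\geq1$, and the bound holds trivially. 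The same remark applies to $\overline{K}$, since $\overline{K}\geq -\|H_{\partial L}\|$.

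For the positive part, take a unit vector $\psi\in E^{H_\Lambda}[-\epsilon,\epsilon]\mathcal H_\Lambda$ and expand
\[
\|E^{K}[N,\infty)\psi\|^2=\sum_{k\geq0}\|E^{K}[N+k,N+k+1)\psi\|^2 .
\]
We apply the second inequality of Lemma \ref{lem_AKL2.1_5.1} with $A=I$. The conjugation then becomes trivial, $\|e^{\lambda H_\Lambda}Ie^{-\lambda H_\Lambda}\|=1$, and the lemma gives
\[
\|E^{K}[N+k,\infty)E^{K}(-\infty,N']\|\leq e^{-\lambda(N+k-N'-4\|H_{\partial L}\|)} .
\]
This is a bound between spectral projections of $K$ itself, which is not what we want, so the right tool is the \emph{first} inequality of the lemma, with the roles arranged so that $H_\Lambda$ appears on the low-energy side. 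Concretely, we use $E^{K}[N+k,\infty)\,E^{H_\Lambda}(-\infty,\epsilon]$ and bound it by a mixed version of the lemma. Its proof, via $F_1(A,t)$, controls $\|e^{tK}e^{-tH_\Lambda}\|\leq e^{2t\|H_{\partial L}\|}$-type quantities using only the boundary term. Writing
\[
E^{K}[N+k,\infty)\,\psi=E^{K}[N+k,\infty)\,e^{-\lambda K}\,\bigl(e^{\lambda K}e^{-\lambda H_\Lambda}\bigr)\,e^{\lambda H_\Lambda}\psi ,
\]
we get the estimate
\[
\|E^{K}[N+k,\infty)\psi\|\leq e^{-\lambda(N+k)}\,\|e^{\lambda K}e^{-\lambda H_\Lambda}\|\,e^{\lambda\epsilon}.
\]
The middle factor is bounded by $e^{c\lambda\|H_{\partial L}\|}$ through the same Dyson/Hadamard argument used for $F_1$, since $\frac{d}{dt}$ of $e^{tK}e^{-tH_\Lambda}$ only involves $H_{\partial L}$ conjugated by local imaginary-time evolutions. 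Bounding this factor is the step I expect to be the main obstacle. The difficulty is arranging the computation so that the exponent picks up only $O(\|H_{\partial L}\|)$, with the stated constant $6$, rather than a volume-dependent quantity; the choice $\lambda=1/(4\mathfrak{j}_\Phi(N_\Phi+|\partial L|))$ is what makes the local conjugations $e^{sK}H_{\partial L}e^{-sK}$, $|s|\leq\lambda$, bounded by about $2\|H_{\partial L}\|$.

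Once the single-shell decay $\|E^{K}[N+k,\infty)\psi\|\leq C e^{-\lambda(N+k-2\epsilon-6\|H_{\partial L}\|)}$ is in hand, the negative energy side $-\epsilon$ accounts for the factor $2\epsilon$. Combining the positive and negative tails in quadrature then yields the overall constant $2\sqrt2$; the shell sum is only needed if one works with shells, and otherwise the direct bound applies to $[N,\infty)$ at once. The second inequality is proved identically, with $(\bar H_\Lambda,\overline{K})$ in place of $(H_\Lambda,K)$ and Lemma \ref{lem_AKL6.2} in place of Lemma \ref{lem_AKL2.1_5.1}; this accounts for the worse constant $8\to10$. Finally, the ``in particular'' statement follows because $E^{\overline{K}}(-M,M)=E^{K}(-M,M)$. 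Indeed, $\overline{K}$ agrees with $K$ on the range of $E^{K}(-M,M)$ and equals $M$ on the range of $E^{K}[M,\infty)$, so both spectral projections onto $(-M,M)$ coincide.
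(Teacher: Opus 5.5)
Your final argument works, and it takes a different route from the paper's. The paper never bounds a mixed propagator. It sets $\phi=E^{H_\Lambda}[-\epsilon,\epsilon](I-E^{K}(-N,N))\psi$ and uses $\|K\phi\|\le(\epsilon+\|H_{\partial L}\|)\|\phi\|$ together with a Markov inequality to get $\|\phi\|\le\sqrt2\,\|E^{K}(-\delta,\delta)\phi\|$ with $\delta=2(\epsilon+\|H_{\partial L}\|)$. It then applies Lemma~\ref{lem_AKL2.1_5.1} as stated with $A=E^{H_\Lambda}[-\epsilon,\epsilon]$, whose conjugation by $e^{\lambda H_\Lambda}$ is trivial. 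The $\sqrt2$, the factor $2$ from the two tails, and the window $\delta$ are exactly the sources of $2\sqrt2$, $2\epsilon$ and $6\|H_{\partial L}\|$.

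Your factorization $E^{K}[N,\infty)\psi=(E^{K}[N,\infty)e^{-\lambda K})(e^{\lambda K}e^{-\lambda H_\Lambda})(e^{\lambda H_\Lambda}\psi)$ bypasses the Markov step. The step you flag as the main obstacle is already done in the paper: $e^{\lambda K}e^{-\lambda H_\Lambda}$ is $W(\lambda)^{-1}$ from the proof of Lemma~\ref{lem_AKL2.1_5.1}, where $W(t)=e^{tH_\Lambda}e^{-tK}$. Hence $\|e^{\lambda K}e^{-\lambda H_\Lambda}\|\le e^{2\lambda\|H_{\partial L}\|}$. Equivalently, apply Gronwall with generator $e^{uK}H_{\partial L}e^{-uK}$, bounded by $2\|H_{\partial L}\|$ for $0\le u\le\lambda$, as you say. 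This gives $\|E^{K}[N,\infty)E^{H_\Lambda}[-\epsilon,\epsilon]\|\le e^{-\lambda(N-\epsilon-2\|H_{\partial L}\|)}$. Combined with your observation that $E^{K}(-\infty,-N]=0$ for $N>\|H_{\partial L}\|$ (the other case being trivial), this is strictly stronger than the stated bound. So drop the shells and the attempt to reverse-engineer the constants. In your argument the $-\epsilon$ side does not ``account for $2\epsilon$'', and no quadrature produces $2\sqrt2$. Just note that $e^{-\lambda(N-\epsilon-2\|H_{\partial L}\|)}\le 2\sqrt2\,e^{-\lambda(N-2\epsilon-6\|H_{\partial L}\|)}$.

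For the second inequality, ``identically, with Lemma~\ref{lem_AKL6.2} in place of Lemma~\ref{lem_AKL2.1_5.1}'' understates what your route needs. The statement of Lemma~\ref{lem_AKL6.2} does not give the factor $\|e^{\lambda\overline{K}}e^{-\lambda\bar H_\Lambda}\|$. Nor does the Hadamard/locality bound apply to $e^{u\overline{K}}H_{\partial L}e^{-u\overline{K}}$, because $\overline{K}$ is a spectral truncation, not a sum of local terms. What you need is the block decomposition along $E^{K}(-M,M)$ and $E^{K}[M,\infty)$ from Lemma~\ref{lem_G1}, in its finite-volume form. It gives $\|e^{u\overline{K}}H_{\partial L}e^{-u\overline{K}}\|\le 4\|H_{\partial L}\|$ for $0\le u\le\lambda$. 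Gronwall then yields $\|e^{\lambda\overline{K}}e^{-\lambda\bar H_\Lambda}\|\le e^{4\lambda\|H_{\partial L}\|}$ and hence $e^{-\lambda(N-\epsilon-4\|H_{\partial L}\|)}$, again stronger than required.

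As for what each route buys: yours is shorter and sharper in finite volume. The paper's uses Lemmas~\ref{lem_AKL2.1_5.1} and~\ref{lem_AKL6.2} only through their statements, with a middle operator commuting with the full Hamiltonian. That is what lets it carry over verbatim to the GNS setting (Lemma~\ref{lem_AKL3} via Proposition~\ref{prop_AKL}). In your route, the product $e^{\lambda H_{X^c}}e^{-\lambda H_\omega}$ contains an unbounded factor and would need a separate domain argument there.

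Your proof of the ``in particular'' clause is fine once you add that $E^{K}(-\infty,-M]=0$, since $K\ge-\|H_{\partial L}\|>-M$.
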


Since $E^{\overline{H'_L+H'_{L^c}}}(-N,N)^c=0$ for $N>M$, the second inequality is meaningless in that case.

The proof in \cite[Theorem 2.2 and Theorem 6.1]{Arad-Kuwahara-Landau 2016} is essentially correct. However, we give a simpler proof based on a Markov inequality in
Subsection \ref{S_AKL2.2_6.1}.

\medskip
\begin{proof}[Proof of Theorem \ref{thm_main_finite}]
It is enough to show that
\begin{align*}
&\left\|E^{H_\Lambda}(-\delta+\xi,\delta+\xi)^c\, E^{\bar{H}_\Lambda}[-\epsilon+\xi,\epsilon+\xi]\right\|\\
&\quad\leq\frac{1}{\delta} \left(\epsilon+2\sqrt{2}(M+\|H_{\partial L}\|+|\xi|+\delta) \exp(-\lambda(M-2\epsilon-2|\xi|-10\|H_{\partial L}\|))\right).
\end{align*}
holds for any $\delta>\epsilon\geq0$ and any $\xi\in\mathbb{R}$. Indeed, part (ii) is the unshifted case $\xi=0$. To obtain part (i), we set
\[
(\epsilon,\delta,\xi)=(p+2\|H_{\partial L}\|,q+2\|H_{\partial L}\|,-2\|H_{\partial L}\|).
\]
Then $\delta>\epsilon\geq0$. Moreover,
\[
-\delta+\xi=-q-4\|H_{\partial L}\|<-2\|H_{\partial L}\|,\qquad
-\epsilon+\xi=-p-4\|H_{\partial L}\|\leq -2\|H_{\partial L}\|,
\]
while $\delta+\xi=q$ and $\epsilon+\xi=p$.  Since $H_\Lambda$ and $\bar{H}_\Lambda$ have no spectrum below $-2\|H_{\partial L}\|$, we have
\[
E^{H_\Lambda}(-\delta+\xi,\delta+\xi)
=E^{H_\Lambda}(-\infty,q),\quad E^{\bar{H}_\Lambda}[-\epsilon+\xi,\epsilon+\xi]
=E^{\bar{H}_\Lambda}(-\infty,p].
\]
Substituting these values into the shifted estimate gives part (i).

We start with the decomposition:
\begin{align*}
&\left\|\Big(I-E^{H_\Lambda}(-\delta+\xi,\delta+\xi)\Big)\, E^{\bar{H}_\Lambda}[-\epsilon+\xi,\epsilon+\xi]\right\|\\
&\leq\left\|\Big(I-E^{H_\Lambda}(-\delta+\xi,\delta+\xi)\Big)\, \Big(I-E^{H'_L+H'_{L^c}}(-M,M)\Big)\,
E^{\bar{H}_\Lambda}[-\epsilon+\xi,\epsilon+\xi]\right\|\\
&\quad+\left\|\Big(I-E^{H_\Lambda}(-\delta+\xi,\delta+\xi)\Big)\, E^{H'_L+H'_{L^c}}(-M,M)\, E^{\bar{H}_\Lambda}[-\epsilon+\xi,\epsilon+\xi]\right\|.
\end{align*}

To estimate the second term, we evaluate the following term:
\begin{align*}
&(H_\Lambda-\xi)\,E^{H'_L+H'_{L^c}}(-M,M)\,
E^{\bar{H}_\Lambda}[-\epsilon+\xi,\epsilon+\xi]\\
&\quad=(\bar{H}_\Lambda-\xi)\,E^{H'_L+H'_{L^c}}(-M,M)\,
E^{\bar{H}_\Lambda}[-\epsilon+\xi,\epsilon+\xi].
\end{align*}
Then we have
\begin{align*}
&\left\|(\bar{H}_\Lambda-\xi)\, E^{H'_L+H'_{L^c}}(-M,M)\, E^{\bar{H}_\Lambda}[-\epsilon+\xi,\epsilon+\xi]\right\|\\
&\leq\left\|(\bar{H}_\Lambda-\xi)\,E^{\bar{H}_\Lambda}[-\epsilon+\xi,\epsilon+\xi]\right\|\\
&\quad+\left\|(\bar{H}_\Lambda-\xi)\,
\Big(I-E^{H'_L+H'_{L^c}}(-M,M)\Big)\,E^{\bar{H}_\Lambda}[-\epsilon+\xi,\epsilon+\xi]\right\|\\
&\leq \epsilon+(M+\|H_{\partial L}\|+|\xi|) \left\|\left(I-E^{H'_L+H'_{L^c}}(-M,M)\right)
E^{\bar{H}_\Lambda}[-\epsilon+\xi,\epsilon+\xi]\right\|.
\end{align*}
By Markov inequality,
\begin{align*}
&\left\|\Big(I-E^{H_\Lambda}(-\delta+\xi,\delta+\xi)\Big)\, E^{H'_L+H'_{L^c}}(-M,M)\, E^{\bar{H}_\Lambda}[-\epsilon+\xi,\epsilon+\xi] \right\|\\
&\leq \frac{1}{\delta} \left(\epsilon+(M+\|H_{\partial L}\|+|\xi|) \left\|\left(I-E^{H'_L+H'_{L^c}}(-M,M)\right) E^{\bar{H}_\Lambda}[-\epsilon+\xi,\epsilon+\xi] \right\|\right)
\end{align*}

Consequently, we have
\[
\left\|\Big(I-E^{H_\Lambda}(-\delta+\xi,\delta+\xi)\Big)\, E^{\bar{H}_\Lambda}[-\epsilon+\xi,\epsilon+\xi]\right\|
\leq\frac{\epsilon+\eta(\epsilon,\delta,\xi)}{\delta},
\]
where
\begin{align*}
\eta(\epsilon,\delta,\xi):=
&(M+\|H_{\partial L}\|+|\xi|+\delta)\\
&\qquad\left\|\left(I-E^{H'_L+H'_{L^c}}(-M,M)\right) E^{\bar{H}_\Lambda}[-\epsilon+\xi,\epsilon+\xi] \right\|.
\end{align*}
Since
\begin{align*}
&\left\|\left(I-E^{H'_L+H'_{L^c}}(-M,M)\right) E^{\bar{H}_\Lambda}[-\epsilon+\xi,\epsilon+\xi]\right\|\\
&\leq \left\|\left(I-E^{H'_L+H'_{L^c}}(-M,M)\right) E^{\bar{H}_\Lambda}[-(\epsilon+|\xi|),\epsilon+|\xi|] \right\|,
\end{align*}
and applying Lemma \ref{lem_AKL2.2_6.1}, we obtain the desired inequality.
\end{proof}

\subsection{Proof of Lemma \ref{lem_AKL2.1_5.1}}\label{S_AKL2.1_5.1}
\begin{proof}
We will prove only the first inequality because the second inequality can be shown in the same manner.

For each $A\in\mathcal{A}_\Lambda$ and each $t\in\mathbb{R}$, we set
\[
F_1(A,t)=e^{tH_\Lambda}e^{-t\left(H'_L+H'_{L^c}\right)} Ae^{t\left(H'_L+H'_{L^c}\right)}e^{-tH_\Lambda}.
\]
Then we have
\begin{align*}
&\|E^{H_\Lambda}[M,\infty) \,A\, E^{H_\Lambda}(-\infty,N]\|\\
&\leq \|E^{H_\Lambda}[M,\infty) e^{-\lambda H_\Lambda}\|\,
\|F_1(e^{\lambda \left(H'_L+H'_{L^c}\right)}A e^{-\lambda \left(H'_L+H'_{L^c}\right)},\lambda)\|\,
\|e^{\lambda H_\Lambda}E^{H_\Lambda}(-\infty,N]\|.
\end{align*}

Here, $F_1(A,t)$ is the (unique)-solution of the differential equation:
\[
\frac{d}{dt} F_1(A,t)=[H_1(t),F_1(A,t)]
\quad\text{with}\quad F_1(A,0)=A,
\]
\[
H_1(t):=e^{tH_\Lambda}H_{\partial L}e^{-tH_\Lambda}.
\]
The unique solution is given by
\[
F_1(B,t)=W(t)\,B\,W(t)^{-1},
\]
where $W(t)$ is defined by the Dyson series (see, e.g., \cite[Proposition 2.2]{Nachtergaele-Sims-Young})
\[
W(t):=I+\sum_{n=1}^\infty
\int_0^t \int_0^{s_1} \cdots
\int_0^{s_{n-1}}
H_1(s_1) H_1(s_2) \cdots H_1(s_n)
\,ds_n \cdots ds_2\,ds_1.
\]
Moreover, we have
\begin{align*}
\max\{\|W(t)\|,\|W(t)^{-1}\|\}
\leq\exp\left(\int_0^t \|H_1(u)\|\,du\right)
\leq \exp\left(t\sup_{u\in[0,t]}\|H_1(u)\|\right).
\end{align*}

By Hadamard formula and since $u\in[0,\lambda]$, we have 
\[
\|H_1(u)\|
=\|e^{uH_\Lambda}H_{\partial L}e^{-uH_\Lambda}\|
\leq \frac{\|H_{\partial L}\|}{1-u/(2\lambda)}
\leq 2\|H_{\partial L}\|.
\]
where we note that $\lambda$ depends only on $\partial L$, since $H_{\partial L}$ is located on $\partial L$.

Consequently, we obtain
\[
\|F_1(e^{\lambda(H'_L+H'_{L^c})}A e^{-\lambda(H'_L+H'_{L^c})},\lambda)\|
\leq \exp(4\lambda\|H_{\partial L}\|) \left\|e^{\lambda(H'_L+H'_{L^c})}A e^{-\lambda(H'_L+H'_{L^c})}\right\|.
\]
Hence, we obtain the desired formula.
\end{proof}

\subsection{Proof of Lemma \ref{lem_AKL2.2_6.1}}\label{S_AKL2.2_6.1}
\begin{proof}
We will show the first inequality. For any $\psi\in\mathcal{H}_\Lambda$, we define
\[
\phi:=E^{H_\Lambda}[-\epsilon,\epsilon]\, \Big(I-E^{H'_L+H'_{L^c}}(-N,N)\Big)\psi.
\]
Then we have
\[
\|(H'_L+H'_{L^c})\phi\|\leq(\epsilon+\|H_{\partial L}\|)\|\phi\|.
\]
Hence by Markov inequality, we obtain
\[
\left\|\Big(I-E^{H'_L+H'_{L^c}}(-\delta,\delta)\Big)\phi\right\|
\leq\frac{\epsilon+\|H_{\partial L}\|}{\delta}\|\phi\|.
\]
In particular, choosing $\delta=2(\epsilon+\|H_{\partial L}\|)$, we have
\[
\frac{1}{2}\|\phi\|^2
\leq\|\phi\|^2-\left\|\Big(I-E^{H'_L+H'_{L^c}}(-\delta,\delta)\Big)\phi\right\|^2
=\left\|E^{H'_L+H'_{L^c}}(-\delta,\delta)\phi\right\|^2
\]
and thus
\[
\|\phi\|\leq\sqrt{2}\left\|E^{H'_L+H'_{L^c}}(-\delta,\delta)\phi\right\|.
\]

Here, since
\[
e^{\lambda H_\Lambda}E^{H_\Lambda}[-\epsilon,\epsilon]e^{-\lambda H_\Lambda}=E^{H_\Lambda}[-\epsilon,\epsilon]
\]
and by Lemma \ref{lem_AKL2.1_5.1}, we have
\begin{align*}
\|\phi\|
&\leq \sqrt{2}\left\|
E^{H'_L+H'_{L^c}}(-\infty,\delta)\,
E^{H_\Lambda}[-\epsilon,\epsilon]\,
E^{H'_L+H'_{L^c}}[N,\infty) \psi\right\|\\
&\qquad+\sqrt{2}\left\|
E^{H'_L+H'_{L^c}}(-\delta,\infty)\,
E^{H_\Lambda}[-\epsilon,\epsilon]\,
E^{H'_L+H'_{L^c}}(-\infty,-N]\psi\right\|\\
&\leq 2\sqrt{2}\exp(-\lambda(N-\delta-4\|H_{\partial L}\|))\|\psi\|\\
&=2\sqrt{2}\exp(-\lambda(N-2\epsilon-6\|H_{\partial L}\|))\|\psi\|.
\end{align*}

We can show the second inequality in the same way using Lemma \ref{lem_AKL6.2}.
\end{proof}

\newpage
\section{Infinite Systems}\label{S_infinite}
\subsection{Setup}
Let
\(
\mathcal{A}_{\mathrm{loc}}
:=\bigcup_{\Lambda\Subset\Gamma}\mathcal{A}_\Lambda
\)
be the local algebra, and let $\mathcal{A}$ denote its norm closure. Under Assumption \ref{ass_interaction}, the thermodynamic limit
\begin{equation}\label{eq_dynamic}
\alpha_t^\Gamma(A)
:=\lim_{\substack{\Lambda\Subset\Gamma\\ \Lambda\to\infty}}\alpha_t^\Lambda(A)
\quad(A\in\mathcal{A}_{\mathrm{loc}},\ t\in\mathbb{R}),
\end{equation}
exists in norm \cite[Theorem 3.5]{Nachtergaele-Sims-Young}.

We define the linear maps \(\delta_\Gamma,\delta_{X^c}:\mathcal{A}_{\mathrm{loc}}\to
\mathcal{A}_{\mathrm{loc}}\) by
\[
\delta_\Gamma(A):=\sum_{Z\Subset\Gamma}[\Phi(Z),A],\quad
\delta_{X^c}(A):=\sum_{\substack{Z\Subset\Gamma\\ Z\cap X^c\neq\emptyset}}[\Phi(Z),A].
\]

\begin{assumption}
We assume that $\omega$ is a ground state for this dynamics, that is,
\[
\omega\left(A^*\delta_\Gamma(A)\right)\geq 0
\quad (A\in\mathcal{A}_{\mathrm{loc}}).
\]
\end{assumption}
Let $(\mathcal{H}_\omega,\pi_\omega,\Omega)$ be the GNS triple associated with $\omega$. Then, there exists a (possibly unbounded) self-adjoint operator $H_\omega$ on $\mathcal{H}_\omega$ such that
\[
\pi_\omega(\alpha_t^\Gamma(A))=e^{itH_\omega}\pi_\omega(A)e^{-itH_\omega},\quad H_\omega\geq0,\quad H_\omega\Omega=0.
\]

Fix a finite set $X\Subset\Gamma$, corresponding to the boundary region $\partial L$ in \eqref{eq_decompose}. We define
\[
H_X:=\sum_{Y\subset X}\Phi(Y)\in\mathcal{A}_X,\ 
H_{X^c}:=H_\omega-\pi_\omega(H_X),\ 
\operatorname{Dom}(H_{X^c}):=\operatorname{Dom}(H_\omega),
\]
where we denote by $\operatorname{Dom}(H_\omega)$ the domain of the unbounded operator $H_\omega$. 

For any self-adjoint operator $H$ and any Borel set $I\subset\mathbb R$, we write $E^H(I)$ for the corresponding spectral projection. Equivalently,
\[
H=\int_{-\infty}^{\infty}\lambda\,E^H(d\lambda).
\]
We use the standard spectral functional calculus for self-adjoint operators; see, e.g., \cite[Section 10.1, Proposition 10.1]{Hall:book}. In particular, for $z\in\mathbb{C}$, the operators $e^{zH_\omega}$ and $e^{zH_{X^c}}$ are understood in this sense, and
\[
\mathcal{D}_H:=\bigcup_{N\in\mathbb{N}}E^H[-N,N]\mathcal{H}_\omega\subset \operatorname{Dom}(e^{zH})\quad (H=H_\omega,H_{X^c}).
\]

\begin{definition}[energy-truncated Hamiltonian]
For any $M>\|H_X\|$, we define
\begin{align*}
&\bar{H}_{X^c}:= H_{X^c}E^{H_{X^c}}(-M,M)+M\,E^{H_{X^c}}[M,\infty),\\
&\bar{H}_\omega:=\pi_\omega(H_X)+\bar H_{X^c}.
\end{align*}
\end{definition}
Here, we note that
\[
H_{X^c}=H_\omega-\pi_\omega(H_X)\geq -\|H_X\|,
\]
\[
\bar{H}_\omega\leq H_\omega \quad\text{and}\quad
\|\bar{H}_\omega\|\leq\|H_X\|+M.
\]

\subsection{Main Theorem}
\begin{theorem}\label{thm_main_infinite}
We set
\[
\lambda:=\frac{1}{4\mathfrak{j}_\Phi(N_\Phi+|X|)}, \quad N_\Phi:=\sup_{x\in\Gamma} |\{ y\in\Gamma \,;\, d(x,y)\leq \mathfrak{r}_\Phi\}|<\infty,
\]
\begin{align*}
\delta(p,q)&:=2\sqrt{2}(M+5\|H_X\|+q) \exp(-\lambda(M-2p-18\|H_X\|)),\\
\eta(\epsilon,\delta)&:=2\sqrt{2}(M+\|H_X\|+\delta) \exp(-\lambda(M-2\epsilon-10\|H_X\|)).
\end{align*}
\begin{itemize}
\item[(i)] For any $q>p\geq-2\|H_X\|$, we have
\[
\left\|\left(I-E^{H_\omega}(-\infty,q)\right)\,
E^{\bar{H}_\omega}(-\infty,p]\right\|
\leq\frac{p+2\|H_X\|+\delta(p,q)}{q+2\|H_X\|}.
\]
\item[(ii)] For any $\delta>\epsilon\geq0$, we have
\[
\left\|\left(I-E^{H_\omega}(-\delta,\delta)\right)\,
E^{\bar{H}_\omega}[-\epsilon,\epsilon]\right\|
\leq\frac{\epsilon+\eta(\epsilon,\delta)}{\delta}.
\]
\end{itemize}
\end{theorem}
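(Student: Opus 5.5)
The plan is to repeat the finite-volume argument of Theorem \ref{thm_main_finite} verbatim in the GNS representation. The new work is to establish infinite-volume analogues of Lemmas \ref{lem_AKL2.1_5.1}, \ref{lem_AKL6.2} and \ref{lem_AKL2.2_6.1} for the pair $(H_\omega,H_{X^c})$ and for $(\bar H_\omega,\bar H_{X^c})$, with $X$ playing the role of $\partial L$ and $\pi_\omega(H_X)$ that of $H_{\partial L}$.

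\textbf{Algebraic reduction.} As in the finite case, it suffices to prove the shifted estimate
\[
\bigl\|E^{H_\omega}(-\delta+\xi,\delta+\xi)^c E^{\bar H_\omega}[-\epsilon+\xi,\epsilon+\xi]\bigr\|\le \delta^{-1}\bigl(\epsilon+2\sqrt2(M+\|H_X\|+|\xi|+\delta)e^{-\lambda(M-2\epsilon-2|\xi|-10\|H_X\|)}\bigr).
\]
Part (ii) is the case $\xi=0$. Part (i) follows from the same substitution $(\epsilon,\delta,\xi)=(p+2\|H_X\|,q+2\|H_X\|,-2\|H_X\|)$, using $H_\omega\ge0$ and $\bar H_\omega\ge-2\|H_X\|$.

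For the shifted estimate, split with $E^{H_{X^c}}(-M,M)$. On the range of $E^{H_{X^c}}(-M,M)$ we have $H_{X^c}=\bar H_{X^c}$, and this range lies in $\operatorname{Dom}(H_\omega)$, so $H_\omega=\bar H_\omega$ there. This domain point needs a line: vectors in $E^{H_{X^c}}(-M,M)\mathcal H_\omega$ lie in $\operatorname{Dom}(H_{X^c})=\operatorname{Dom}(H_\omega)$. The Markov (Chebyshev) inequality for the spectral measure of $H_\omega-\xi$ then gives
\[
\|E^{H_\omega}(\xi-\delta,\xi+\delta)^c\phi\|\le\delta^{-1}\|(H_\omega-\xi)\phi\|
\]
for $\phi\in\operatorname{Dom}(H_\omega)$. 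Since $\bar H_\omega$ is bounded, everything reduces, exactly as before, to the leakage bound
\[
\bigl\|(I-E^{H_{X^c}}(-M,M))E^{\bar H_\omega}[-\epsilon',\epsilon']\bigr\|\le 2\sqrt2\,e^{-\lambda(M-2\epsilon'-10\|H_X\|)}.
\]

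\textbf{Leakage bound.} The Markov-inequality proof of Lemma \ref{lem_AKL2.2_6.1} transfers directly once Lemma \ref{lem_AKL6.2} holds for the bounded operators $\bar H_\omega$ and $\bar H_{X^c}$ with $A$ a spectral projection of $\bar H_\omega$. The only nontrivial input is an imaginary-time estimate
\[
\bigl\|e^{t\bar H_\omega}\pi_\omega(H_X)e^{-t\bar H_\omega}\bigr\|\le 2\|H_X\| \quad (0\le t\le\lambda),
\]
and similarly $\|e^{tH_\omega}\pi_\omega(H_X)e^{-tH_\omega}\|\le2\|H_X\|$ on a suitable core for the unbounded case. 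With these in hand, the Dyson-series argument of Lemma \ref{lem_AKL2.1_5.1} works unchanged. It uses $F(t)=e^{t\bar H_\omega}e^{-t\bar H_{X^c}}Ae^{t\bar H_{X^c}}e^{-t\bar H_\omega}$, whose generator is $e^{t\bar H_\omega}\pi_\omega(H_X)e^{-t\bar H_\omega}$. All operators here are bounded, so there are no domain issues. The truncation step follows the AKL Lemma 6.2 strategy deferred to Appendix \ref{S_AKL2}: first control the conjugation by $H_\omega$, then pass to $\bar H_\omega$. That step loses the extra $4\|H_X\|$, which explains the constant $8$ in place of $4$.

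\textbf{Main obstacle.} The hardest step is the Hadamard bound for the unbounded $H_\omega$. I would prove it on $\mathcal D_{H_\omega}$ by identifying $e^{tH_\omega}\pi_\omega(A)e^{-tH_\omega}$ with $\pi_\omega(\alpha^\Gamma_{-it}(A))$ for local $A$. The series $\sum_n \frac{t^n}{n!}\delta_\Gamma^n(A)$ converges in norm for $|t|<2\lambda$ by the standard combinatorial count for finite-range interactions. The count goes as follows: each commutator enlarges the support by at most $N_\Phi$ sites and contributes a factor $2\mathfrak j_\Phi$ per site, giving $\|\delta_\Gamma^n(H_X)\|\le n!\,(2\mathfrak j_\Phi(N_\Phi+|X|))^n\|H_X\|$ up to bookkeeping. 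Summing yields the bound $\|H_X\|/(1-t/(2\lambda))$. The identification itself I would obtain by checking it on entire analytic vectors for $H_\omega$, which are dense in $\mathcal D_{H_\omega}$, via analytic continuation of $t\mapsto\langle\psi,e^{itH_\omega}\pi_\omega(A)e^{-itH_\omega}\phi\rangle$. The resulting densely defined operator bounded by $2\|H_X\|$ then extends by continuity. This justification is what I expect to require the most care.
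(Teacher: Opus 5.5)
There is a genuine gap, located in the leakage bound.

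Your reduction is correct and is the paper's: the shifted estimate, the substitution for (i), and the observation that $E^{H_{X^c}}(-M,M)\mathcal H_\omega\subset\operatorname{Dom}(H_\omega)$ with $H_\omega=\bar H_\omega$ there. So are the Markov inequality for $H_\omega-\xi$ and the reduction to the leakage bound. That bound follows from the Markov argument of Lemma~\ref{lem_AKL2.2_6.1} plus the second inequality of Lemma~\ref{lem_AKL2} with $B=E^{\bar H_\omega}[-\epsilon,\epsilon]$.

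The problem is how you propose to prove that inequality. The identity $\frac{d}{dt}F=[e^{t\bar H_\omega}\pi_\omega(H_X)e^{-t\bar H_\omega},F]$ is fine. But the bound $\|e^{t\bar H_\omega}\pi_\omega(H_X)e^{-t\bar H_\omega}\|\le 2\|H_X\|$ on $[0,\lambda]$, which you call the only nontrivial input, has no proof route:
\begin{itemize}
\item The Hadamard/combinatorial bound needs a local generator, and $\bar H_\omega$ is not local.
\item ``Control the conjugation by $H_\omega$, then pass to $\bar H_\omega$'' does not work either. $\bar H_\omega=\pi_\omega(H_X)+\bar H_{X^c}$ is not a function of $H_\omega$, so the spectral projections of $H_\omega$ do not commute with $\bar H_\omega$, and there is nothing to compare blockwise.
\item Your constants do not close: a generator bound of $2\|H_X\|$ would produce $4$, not $8$, in the exponent.
\end{itemize}

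The missing idea is to write the equation for $K(t)=e^{t\bar H_\omega}e^{-t\bar H_{X^c}}$ as $\frac{d}{dt}K(t)=K(t)\,e^{t\bar H_{X^c}}\pi_\omega(H_X)e^{-t\bar H_{X^c}}$, and likewise for $K(t)^{-1}$. The generator is then conjugated by $\bar H_{X^c}$, which \emph{is} a function of $H_{X^c}$. Split by the spectral projections $E_1,E_2,E_3$ of $H_{X^c}$, as in Lemma~\ref{lem_G1}:
\begin{itemize}
\item The corner blocks cost $\|H_X\|$ each, since $\bar H_{X^c}=M$ on $E_3$ and $\|e^{\pm t\bar H_{X^c}}\|\le e^{tM}$.
\item The remaining block factors as a contraction, times $e^{tH_{X^c}}\pi_\omega(H_X)e^{-tH_{X^c}}$, times a contraction.
\item The middle factor equals $\exp(t\delta_{X^c})(H_X)$ and is bounded by $\|H_X\|/(1-t/S_X)$.
\end{itemize}
This gives a generator bound of $4\|H_X\|$ on $[0,\lambda]$, hence $\|K(\lambda)\|\,\|K(\lambda)^{-1}\|\le e^{8\lambda\|H_X\|}$. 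That is where the $8$, and the $10$ in $\eta$, come from.

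Accordingly, the unbounded-operator step you actually need concerns $H_{X^c}$ and $\delta_{X^c}$. It is the identification of $e^{sH_{X^c}}\pi_\omega(H_X)e^{-sH_{X^c}}$ with $\exp(s\delta_{X^c})(H_X)$ as forms on $\mathcal D_{H_{X^c}}$ (Lemma~\ref{lem_infinite2}). The $H_\omega$/$\delta_\Gamma$ identification you single out as the main obstacle only serves Proposition~\ref{prop_AKL} and the first inequality of Lemma~\ref{lem_AKL3}. The proof of this theorem does not use either.
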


In the infinite-volume statement, the constants depend on the local interaction parameters and on the fixed finite region $X$, but not on any approximating finite volume.

\begin{corollary}\label{cor_main2_infinite}
For any $j=0,1,2,\ldots$, define the following rank thresholds of the spectral projections:
\begin{align*}
\epsilon_j
&:=\inf\left\{p\in\mathbb{R}\,\middle|\,
\operatorname{Rank}\left(E^{H_\omega}(-\infty,p]\right)\geq j+1\right\},\\
\bar{\epsilon}_j
&:=\inf\left\{p\in\mathbb{R}\,\middle|\,
\operatorname{Rank}\left(E^{\bar{H}_\omega}(-\infty,p]\right)\geq j+1\right\}.
\end{align*}
When the relevant low-energy spectrum is discrete, isolated, and of finite multiplicity, these quantities coincide with the usual low-lying eigenvalues counted with multiplicity.

We also set
\begin{align*}
\delta_j
&:=2\sqrt{2}(M+5\|H_X\|+\epsilon_j)
\exp(-\lambda(M-2\epsilon_j-18\|H_X\|)),\\
\eta
&:=\eta(2\delta_0,\Delta)
=2\sqrt{2}(M+\|H_X\|+\Delta)
\exp(-\lambda(M-4\delta_0-10\|H_X\|)).
\end{align*}
Assume that
\[
\delta_j>\delta(\epsilon_j-2\delta_j,\epsilon_j).
\]

\begin{itemize}
\item[(i)] We have
\(
\epsilon_j-2\delta_j\leq \bar{\epsilon}_j\leq \epsilon_j.
\)
\item[(ii)] If $\omega$ is a locally unique ground state with spectral gap $\Delta>2\delta_0+\eta>0$: 
\[
\omega(A^*\delta_\Gamma(A))\geq\Delta\,\omega(A^*A)
\quad(A\in\mathcal{A}_{\mathrm{loc}} \text{ with } \omega(A)=0),
\]
then there exists a unit vector $\bar{\Omega}\in
E^{\bar{H}_\omega}[-2\delta_0,2\delta_0]\mathcal H_\omega$
such that
\[
|\langle\Omega,\bar{\Omega}\rangle|^2\geq 1-\left(\frac{2\delta_0+\eta}{\Delta}\right)^2.
\]
\end{itemize}
\end{corollary}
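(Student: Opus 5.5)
The proof follows the finite-volume argument for Theorem \ref{thm_main2_finite}, with Theorem \ref{thm_main_infinite} in place of Theorem \ref{thm_main_finite}. The new points are that $\epsilon_j,\bar\epsilon_j$ are rank thresholds rather than eigenvalues, and that the gap hypothesis on $\omega$ must be turned into a spectral statement about $H_\omega$.

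For (i), the upper bound $\bar\epsilon_j\le\epsilon_j$ comes from $\bar H_\omega\le H_\omega$ on $\operatorname{Dom}(H_\omega)$ via a min--max characterization of the rank thresholds. If $\operatorname{Rank}E^{H_\omega}(-\infty,p]\ge j+1$, there is a $(j+1)$-dimensional subspace $V$ of $E^{H_\omega}(-\infty,p]\mathcal H_\omega\subset\operatorname{Dom}(H_\omega)$ on which $\langle\psi,\bar H_\omega\psi\rangle\le\langle\psi,H_\omega\psi\rangle\le p\|\psi\|^2$. A subspace of dimension $j+1$ on which $\bar H_\omega\le p$ meets $E^{\bar H_\omega}(p',\infty)\mathcal H_\omega$ trivially for every $p'>p$. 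Hence $\operatorname{Rank}E^{\bar H_\omega}(-\infty,p']\ge j+1$ for all $p'>p$, which gives $\bar\epsilon_j\le\epsilon_j$.

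For the lower bound, if $\epsilon_j-2\delta_j<-2\|H_X\|$ it is immediate from $\bar H_\omega\ge-2\|H_X\|$. Otherwise we apply Theorem \ref{thm_main_infinite}(i) with $(p,q)=(\epsilon_j-2\delta_j,\epsilon_j)$. This needs $q>p$, i.e.\ $\delta_j>0$, and we take $\epsilon_j$ finite (otherwise the claim is vacuous). The assumed $\delta_j>\delta(\epsilon_j-2\delta_j,\epsilon_j)$ makes the bound at most $1-\delta_j/(\epsilon_j+2\|H_X\|)<1$. The strict inequality is exactly why the hypothesis is imposed here, whereas in finite volume it came from monotonicity. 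Lemma \ref{lem_overlap} then gives $\operatorname{Rank}E^{\bar H_\omega}(-\infty,\epsilon_j-2\delta_j]\le\operatorname{Rank}E^{H_\omega}(-\infty,\epsilon_j)$. By definition of the infimum, $\operatorname{Rank}E^{H_\omega}(-\infty,p]\le j$ for every $p<\epsilon_j$. Writing $E^{H_\omega}(-\infty,\epsilon_j)$ as the strong limit of the increasing projections $E^{H_\omega}(-\infty,\epsilon_j-1/n]$, its rank is also $\le j$. Therefore $\operatorname{Rank}E^{\bar H_\omega}(-\infty,\epsilon_j-2\delta_j]\le j$, and so $\bar\epsilon_j\ge\epsilon_j-2\delta_j$; if $\bar\epsilon_j$ were smaller, some $p'<\epsilon_j-2\delta_j$ would already give rank $j+1$, contradicting monotonicity of rank.

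For (ii), the main obstacle is showing that the gap hypothesis implies $E^{H_\omega}(-\infty,\Delta)=E^{H_\omega}\{0\}=|\Omega\rangle\langle\Omega|$. The standard route (as in Ogata's framework) starts from $\omega(A^*\delta_\Gamma(A))=\langle\pi_\omega(A)\Omega,H_\omega\pi_\omega(A)\Omega\rangle$, which uses $H_\omega\Omega=0$ and the generator relation on local elements. For $A$ with $\omega(A)=0$, this gives $\langle\xi,H_\omega\xi\rangle\ge\Delta\|\xi\|^2$ for $\xi=\pi_\omega(A)\Omega\perp\Omega$. Such vectors are dense in $\{\Omega\}^\perp$, and $\pi_\omega(\mathcal A_{\rm loc})\Omega$ is a core for $H_\omega$ (I would cite this standard fact, or prove it by noting this set is invariant under $e^{itH_\omega}$ up to norm-approximation by local elements). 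Together these give $H_\omega\ge\Delta$ on $\{\Omega\}^\perp$, so $\operatorname{spec}(H_\omega)\cap(0,\Delta)=\emptyset$ and $\ker H_\omega=\mathbb C\Omega$. Hence $E^{H_\omega}(-\Delta,\Delta)=|\Omega\rangle\langle\Omega|$ and $\epsilon_0=0$.

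Next, part (i) with $j=0$ gives $\bar\epsilon_0\in[-2\delta_0,0]$. The spectrum of the bounded operator $\bar H_\omega$ is closed, so $\bar\epsilon_0\in\operatorname{spec}(\bar H_\omega)$ and $E^{\bar H_\omega}[-2\delta_0,2\delta_0]\neq0$. Choose any unit $\bar\Omega$ in its range. Theorem \ref{thm_main_infinite}(ii) with $(\epsilon,\delta)=(2\delta_0,\Delta)$ bounds the overlap by $(2\delta_0+\eta)/\Delta<1$. Lemma \ref{lem_overlap} then yields $|\langle\Omega,\bar\Omega\rangle|^2=\|E^{H_\omega}(-\Delta,\Delta)\bar\Omega\|^2\ge1-((2\delta_0+\eta)/\Delta)^2$.
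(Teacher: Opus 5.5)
Your proof is correct and follows the paper, which obtains this corollary by rerunning the proof of Theorem \ref{thm_main2_finite} with Theorem \ref{thm_main_infinite}; your rank-threshold min--max step, the strong-limit bound $\operatorname{Rank}E^{H_\omega}(-\infty,\epsilon_j)\le j$, and the derivation of $E^{H_\omega}(-\Delta,\Delta)=|\Omega\rangle\langle\Omega|$ from the gap inequality supply infinite-volume details the paper leaves implicit (that last step genuinely needs $\pi_\omega(\mathcal{A}_{\mathrm{loc}})\Omega$ to be a core, which requires graph-norm approximation, not the mere norm approximation in your sketch). One side remark is off: the hypothesis $\delta_j>\delta(\epsilon_j-2\delta_j,\epsilon_j)$ holds automatically here, exactly as in finite volume, because $\delta(\epsilon_j-2\delta_j,\epsilon_j)=e^{-4\lambda\delta_j}\delta_j$ and $\delta_j>0$.
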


Corollary \ref{cor_main2_infinite} follows from Theorem \ref{thm_main_infinite} by the same argument as in the proof of Theorem \ref{thm_main2_finite}.

\subsection{Lemmas and Proof of Theorem \ref{thm_main_infinite}}
Additional care is required to extend Lemma \ref{lem_AKL2.1_5.1} to the infinite-volume setting. Once this extension is established, the remaining arguments follow essentially in the same way as in finite volume.

\begin{proposition}[Infinite extension of Lemma \ref{lem_AKL2.1_5.1}]\label{prop_AKL}
We set $\lambda$ as in Theorem \ref{thm_main_infinite}. We assume that $A,B\in\mathcal{B}(\mathcal{H}_\omega)$ are such that, for every $s\in\mathbb{C}$, the sesquilinear forms
\[
(\phi,\psi)\in\mathcal{D}_{H_{X^c}}\times\mathcal{D}_{H_{X^c}}
\mapsto
\left\langle e^{\bar{s} H_{X^c}}\phi,
A e^{-sH_{X^c}}\psi\right\rangle,
\]
\[
(\phi,\psi)\in\mathcal D_{H_\omega}\times \mathcal D_{H_\omega}
\mapsto
\left\langle e^{\bar{s} H_\omega}\phi,
B e^{-sH_\omega} \psi\right\rangle
\]
extend to bounded sesquilinear forms on $\mathcal{H}_\omega\times\mathcal{H}_\omega$. We also assume that the corresponding bounded operators, denoted by
\[
e^{sH_{X^c}}Ae^{-sH_{X^c}},\quad e^{sH_\omega}Be^{-sH_\omega},
\]
define $\mathcal{B}(\mathcal H_\omega)$-valued entire functions of $s$.

Then for any $M,N\in\mathbb{R}$, we have
\begin{align*}
\|E^{H_\omega}[M,\infty) \,A\,
E^{H_\omega}(-\infty,N]\|
&\leq\exp(-\lambda(M-N-4\|H_X\|))\,
\|e^{\lambda H_{X^c}}Ae^{-\lambda H_{X^c}}\|,\\
\|E^{H_{X^c}}[M,\infty) \,B\,
E^{H_{X^c}}(-\infty,N]\|
&\leq\exp(-\lambda(M-N-4\|H_X\|))
\|e^{\lambda H_\omega}Be^{-\lambda H_\omega}\|.
\end{align*}
\end{proposition}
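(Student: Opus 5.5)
We follow the proof of Lemma \ref{lem_AKL2.1_5.1} and derive only the first inequality; the second is symmetric, with the roles of $H_\omega$ and $H_{X^c}$ exchanged. The plan is to factor
\[
E^{H_\omega}[M,\infty)\,A\,E^{H_\omega}(-\infty,N]
=\bigl(E^{H_\omega}[M,\infty)e^{-\lambda H_\omega}\bigr)\,
F\,
\bigl(e^{\lambda H_\omega}E^{H_\omega}(-\infty,N]\bigr),
\]
where
\[
F:=e^{\lambda H_\omega}e^{-\lambda H_{X^c}}\,\bigl(e^{\lambda H_{X^c}}Ae^{-\lambda H_{X^c}}\bigr)\,e^{\lambda H_{X^c}}e^{-\lambda H_\omega}.
\]
The outer factors are bounded by spectral calculus, with norms at most $e^{-\lambda M}$ and $e^{\lambda N}$. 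It then suffices to show that the ``interaction picture'' operators $W(t):=e^{tH_\omega}e^{-tH_{X^c}}$ extend to bounded operators with bounded inverses, and that $\|W(\lambda)\|,\|W(\lambda)^{-1}\|\le e^{2\lambda\|H_X\|}$.

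To avoid unbounded domain products, we define $W(t)$ directly by the Dyson series
\[
W(t)=I+\sum_{n\ge1}\int_{0\le s_n\le\cdots\le s_1\le t}H_1(s_1)\cdots H_1(s_n)\,ds,
\qquad
H_1(s):=e^{sH_\omega}\pi_\omega(H_X)e^{-sH_\omega}.
\]
The first task is to show that $H_1(s)$ is a bounded entire function of $s$ with $\|H_1(u)\|\le 2\|H_X\|$ for $u\in[0,\lambda]$. We approximate $\pi_\omega(H_X)$ via the derivation: on the GNS space, $e^{sH_\omega}\pi_\omega(C)e^{-sH_\omega}=\pi_\omega(\sum_k s^k\delta_\Gamma^k(C)/k!)$ for local $C$, first as forms on $\mathcal{D}_{H_\omega}$. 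This uses $H_\omega\pi_\omega(C)\Omega'-\pi_\omega(C)H_\omega\Omega'=\pi_\omega(\delta_\Gamma(C))\Omega'$ on the invariant core obtained from the dynamics. The Hadamard series converges for $|s|<2\lambda$ with exactly the finite-range combinatorial bound used in Remark \ref{rmk_hadamard_finite}: $\|\delta_\Gamma^k(H_X)\|\le k!\,(2\mathfrak{j}_\Phi(N_\Phi+|X|))^k\|H_X\|$, a quantity independent of volume.

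Next we verify that $W(t)$ agrees with $e^{tH_\omega}e^{-tH_{X^c}}$ on $\mathcal{D}_{H_{X^c}}$ in the form sense, and likewise for the inverse, which is built from the reversed-order Dyson series. For $\phi,\psi$ in the cores, we differentiate
\[
t\mapsto\langle e^{tH_\omega}\phi,\,W(t)\,e^{tH_{X^c}}\psi\rangle .
\]
The hypothesis $H_{X^c}=H_\omega-\pi_\omega(H_X)$ on $\operatorname{Dom}(H_\omega)$ then gives that the derivative is zero. For the $A$-factor, the analyticity hypothesis lets us reduce this to the identity
\[
\langle e^{\bar s H_\omega}\phi,Ae^{-sH_\omega}\psi\rangle
=\langle W(s)^*e^{\bar sH_\omega}\phi,\,(e^{sH_{X^c}}Ae^{-sH_{X^c}})\,W(s)^{-1}e^{-sH_\omega}\psi\rangle
\]
first for imaginary $s$, where all the operators are unitary. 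For imaginary $s$ the identity $e^{itH_\omega}e^{-itH_{X^c}}=W(it)$ is the standard Dyson expansion for bounded perturbations. We then extend to $s=\lambda$ by analytic continuation: both sides are entire in $s$ when paired with vectors in spectral subspaces $E[-K,K]$, so the identity theorem applies. Finally, density of $\mathcal{D}_{H_\omega}$ together with boundedness of the right-hand side yields the norm bound with the factor $e^{4\lambda\|H_X\|}$, exactly as in the finite case.

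The main obstacle we expect is this analytic-continuation and domain step: justifying that $e^{sH_\omega}\pi_\omega(H_X)e^{-sH_\omega}$ is a bounded entire function given only by the local Hadamard series, since $H_\omega$ is not a norm limit. We would first establish that $\mathcal{D}_{H_\omega}$ is mapped suitably by $\pi_\omega(\mathcal{A}_{\mathrm{loc}})$ in the form sense, using the finite-volume approximations $\alpha_t^\Lambda\to\alpha_t^\Gamma$ and the fact that $t\mapsto\pi_\omega(\alpha_t^\Gamma(H_X))$ extends entirely (Lieb--Robinson analyticity for finite range). The remainder is then identical to the proof of Lemma \ref{lem_AKL2.1_5.1}.
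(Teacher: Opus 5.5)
Your proposal is essentially the paper's argument. You factor through $e^{\mp\lambda H_\omega}$, build the interaction-picture operator from a Dyson series whose generator is the Hadamard series $\exp(s\delta_\Gamma)(H_X)$ (bounded by $\|H_X\|/(1-|s|/(2\lambda))$ via finite-range combinatorics), identify it with the unitary interaction picture on the imaginary axis, and reach $s=\lambda$ by the identity theorem and density; this is exactly Lemmas \ref{lem_derivation} and \ref{lem_infinite}, except that the paper uses the Hadamard series directly as generator instead of first identifying $e^{sH_\omega}\pi_\omega(H_X)e^{-sH_\omega}$ with it.

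There are some slips to fix. The continuation identity should read $\langle e^{\bar sH_\omega}\phi,Ae^{-sH_\omega}\psi\rangle=\langle W(s)^*\phi,(e^{sH_{X^c}}Ae^{-sH_{X^c}})W(s)^{-1}\psi\rangle$; your extra $e^{\bar sH_\omega}$ and $e^{-sH_\omega}$ on the right make it false already at imaginary $s$. The real-$t$ differentiation step is unnecessary and, as written, not justified. Finally, $H_1$ and $W$ are only analytic on $|s|<2\lambda$, not entire, since entire analyticity of $t\mapsto\pi_\omega(\alpha_t^\Gamma(H_X))$ is not available in general for $\nu\geq 2$; this does not hurt, because the identity theorem is applied on that disc.
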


We will give a proof in Subsection \ref{SS_prop_AKL}. Additional care is needed in infinite volume, since $H_\omega$ and $H_{X^c}$ are generally unbounded and the formal expression
\[
F_1(A,t)=e^{tH_\omega}e^{-tH_{X^c}} Ae^{tH_{X^c}}e^{-tH_\omega} \quad(A\in\mathcal{B}(\mathcal{H}_\omega),\ t\in\mathbb{R})
\]
is not a bounded-operator identity without an analytic or domain
justification.

In the application below, Proposition \ref{prop_AKL} is used with $A=E^{H_{X^c}}[-\epsilon,\epsilon]$ and $B=E^{H_\omega}[-\epsilon,\epsilon]$. These projections commute with $H_{X^c}$ and $H_\omega$, respectively. Hence the required operator-valued functions are constant, and in particular entire.

\begin{remark}\label{rmk_hadamard_infinite}
A related infinite-volume estimate is stated in \cite[Lemma C.6]{Liu-Yi-Zhou-Zou2025}. Its proof uses, in \cite[Eq. (C.89)-(C.90)]{Liu-Yi-Zhou-Zou2025},
\[
\operatorname{Ad}_{e^{\nu H'_t}}(O_s)
=e^{\nu\operatorname{ad}_{H'_t}}(O_s)
\]
together with a norm estimate on the Hadamard series. This step requires that $O_s$ be an analytic element for the derivation generated by $H'_t$, or an equivalent domain/commutator justification. Such a property is not automatic when $O_s$ is a spectral projection associated with a non-compact block.
\end{remark}

\begin{lemma}[Infinite extension of Lemma \ref{lem_AKL6.2}]\label{lem_AKL2}
We set $\lambda$ as in Theorem \ref{thm_main_infinite}. Then for any $A,B\in\mathcal{B}(\mathcal{H}_\omega)$ and any $M,N\in\mathbb{R}$, we have
\[
\left\|E^{\bar{H}_\omega}[M,\infty) \,A\, E^{\bar{H}_\omega}(-\infty,N]\right\|
\leq \exp(-\lambda(M-N-8\|H_X\|)) \left\|e^{\lambda\bar{H}_{X^c}}Ae^{-\lambda\bar{H}_{X^c}}\right\|,
\]
\[
\left\|E^{\bar{H}_{X^c}}[M,\infty) \,B\, E^{\bar{H}_{X^c}}(-\infty,N]\right\|
\leq \exp\left(-\lambda(M-N-8\|H_X\|)\right)\left\| e^{\lambda\bar{H}_\omega}Be^{-\lambda\bar{H}_\omega}\right\|.
\]
\end{lemma}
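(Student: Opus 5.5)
The plan is to run the interaction-picture argument of Lemma \ref{lem_AKL2.1_5.1} directly, with $\bar H_\omega$ and $\bar H_{X^c}$ in place of $H_\Lambda$ and $H'_L+H'_{L^c}$. Here $\bar H_{X^c}$ and $\bar H_\omega=\pi_\omega(H_X)+\bar H_{X^c}$ are \emph{bounded}, since $\|\bar H_\omega\|\leq\|H_X\|+M$. So all exponentials are entire bounded-operator-valued functions, and none of the domain issues of Proposition \ref{prop_AKL} arise. The price is that $\bar H_{X^c}$ is no longer a sum of local terms, so the Hadamard estimate is not directly available. The whole proof therefore reduces to one conjugation bound for the boundary term under $\bar H_{X^c}$.

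\textbf{Interaction picture.} Set $W(t):=e^{t\bar H_\omega}e^{-t\bar H_{X^c}}$. Differentiating gives
\[
W'(t)=W(t)\,G(t),\qquad G(t):=e^{t\bar H_{X^c}}\pi_\omega(H_X)e^{-t\bar H_{X^c}} .
\]
Similarly $(W^{-1})'=-G\,W^{-1}$. The Dyson-series bound then gives
\[
\max\{\|W(t)\|,\|W(t)^{-1}\|\}\leq \exp\Bigl(\int_0^t\|G(u)\|\,du\Bigr).
\]
For the first inequality I write
\[
E^{\bar H_\omega}[M,\infty)\,A\,E^{\bar H_\omega}(-\infty,N]
=E^{\bar H_\omega}[M,\infty)e^{-\lambda\bar H_\omega}\,W(\lambda)\bigl(e^{\lambda\bar H_{X^c}}Ae^{-\lambda\bar H_{X^c}}\bigr)W(\lambda)^{-1}\,e^{\lambda\bar H_\omega}E^{\bar H_\omega}(-\infty,N].
\]
This is bounded by $e^{-\lambda(M-N)}e^{2\lambda\sup_{u\le\lambda}\|G(u)\|}\|e^{\lambda\bar H_{X^c}}Ae^{-\lambda\bar H_{X^c}}\|$.

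For the second inequality I use the same $W$ in the reversed order. Since $e^{\lambda\bar H_{X^c}}=W(\lambda)^{-1}e^{\lambda\bar H_\omega}$, we get
\[
e^{\lambda\bar H_{X^c}}Be^{-\lambda\bar H_{X^c}}=W(\lambda)^{-1}\bigl(e^{\lambda\bar H_\omega}Be^{-\lambda\bar H_\omega}\bigr)W(\lambda).
\]
This again involves only $G$. Thus both claims follow once I show
\[
\sup_{0\le u\le\lambda}\|G(u)\|\leq 4\|H_X\|,
\]
which produces the constant $8\|H_X\|$.

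\textbf{Main step: bounding $G(u)$.} Write $\bar H_{X^c}=f(H_{X^c})$ with
\[
f(x)=x\,\mathbf 1_{(-M,M)}(x)+M\,\mathbf 1_{[M,\infty)}(x).
\]
On the spectrum of $H_{X^c}$, which lies in $[-\|H_X\|,\infty)$, $f$ is nondecreasing and $1$-Lipschitz. Hence $f(a)-f(b)\le (a-b)_+$. I decompose $\pi_\omega(H_X)$ into blocks between spectral shells $E^{H_{X^c}}[k,k+1)$ of $H_{X^c}$.
\begin{itemize}
\item Conjugation by $e^{u f(H_{X^c})}$ multiplies the block from shell $l$ to shell $k$ by at most $e^{u(k-l+1)}$ when $k\ge l$, and by at most $e^{u}$ otherwise.
\item The upward blocks decay: $\|E^{H_{X^c}}[k,\infty)\pi_\omega(H_X)E^{H_{X^c}}(-\infty,l]\|\lesssim e^{-2\lambda(k-l)}\|H_X\|$. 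I would obtain this from the second inequality of Proposition \ref{prop_AKL} with $B=\pi_\omega(H_X)$, run with decay parameter $2\lambda$ in place of $\lambda$. This needs $\|e^{sH_\omega}\pi_\omega(H_X)e^{-sH_\omega}\|\le\|H_X\|/(1-s/(4\lambda))$ for $s\le2\lambda$, which is the Hadamard bound used in Lemma \ref{lem_AKL2.1_5.1} (and in the paper's later treatment of Proposition \ref{prop_AKL}), valid here because $H_X$ is local.
\end{itemize}
Summing the resulting geometric series over $k-l$ with $u\le\lambda$ gives $\|G(u)\|\le C\|H_X\|$.

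I expect this step to be the main obstacle, for two reasons. First, the summation must be arranged, for example via a Schur-test or a coarser shell width of order $1/\lambda$, so that the constant is exactly $4$ rather than some larger $C$. Second, applying Proposition \ref{prop_AKL} to $\pi_\omega(H_X)$ requires verifying its entire-function hypothesis for this $B$.

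If that route gives a worse constant, my fallback is to estimate instead
\[
\bigl\|e^{u\bar H_{X^c}}E^{H_{X^c}}(-\infty,M)\,\pi_\omega(H_X)\,E^{H_{X^c}}(-\infty,M)e^{-u\bar H_{X^c}}\bigr\| .
\]
On the range of $E^{H_{X^c}}(-\infty,M)$ we have $\bar H_{X^c}=H_{X^c}$, so this piece is controlled by the $H_{X^c}$-conjugation. The remaining blocks touching $E^{H_{X^c}}[M,\infty)$ are conjugated by $e^{uM}$ on that side, which only reduces the upward growth.
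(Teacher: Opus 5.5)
Your outer framework is the paper's own. The interaction picture $W(t)=e^{t\bar H_\omega}e^{-t\bar H_{X^c}}$, the Dyson bound, and the reduction of both inequalities to $\sup_{0\le u\le\lambda}\|e^{u\bar H_{X^c}}\pi_\omega(H_X)e^{-u\bar H_{X^c}}\|\le 4\|H_X\|$ are exactly Lemma \ref{lem_G} and the final step of the paper; the second inequality goes through $W(\lambda)^{-1}$, which is the paper's $G_2$.

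\textbf{The gap.} This key bound, the paper's Lemma \ref{lem_G1}, is not proved, and your primary route cannot prove it as stated.
\begin{itemize}
\item Proposition \ref{prop_AKL} needs $s\mapsto e^{sH_\omega}Be^{-sH_\omega}$ to be entire. Nothing in the paper gives this for $B=\pi_\omega(H_X)$: Lemma \ref{lem_derivation}(i) gives analyticity only on the disk $R(S_X)$ of radius $S_X=2\lambda$.
\item On that disk the bound is $\|H_X\|/(1-|s|/(2\lambda))$, not $\|H_X\|/(1-s/(4\lambda))$. This bound, and the Dyson bound of Lemma \ref{lem_derivation}(ii), both blow up exactly at the rate $2\lambda$ you want to use. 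So the Proposition cannot be ``run with $2\lambda$''.
\item With a rate $\mu\in(\lambda,2\lambda)$ the shell summation converges, but only to $\|G(u)\|\le C\|H_X\|$ with $C$ well above $4$. Already the adjacent off-diagonal shells carry multipliers $e^{2uw}$ without any decay. With unit-width shells, $C$ even grows like $1/\lambda$.
\item You would therefore get the lemma only with a worse constant in place of $8\|H_X\|$. That changes the constants in Lemma \ref{lem_AKL3} and Theorem \ref{thm_main_infinite}.
\item The detour through Proposition \ref{prop_AKL} is also unnecessary. Decay of $H_{X^c}$-shell blocks follows directly from boundedness of $e^{\mu H_{X^c}}\pi_\omega(H_X)e^{-\mu H_{X^c}}$.
\end{itemize}

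\textbf{What the fallback needs.} Your fallback has the right mechanism, which is the paper's, but two ingredients are missing.

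First, identifying $e^{uH_{X^c}}\pi_\omega(H_X)e^{-uH_{X^c}}$ with the bounded operator $\exp(u\delta_{X^c})(H_X)$ is not automatic, because $H_{X^c}$ is unbounded. This identification gives norm $\le\|H_X\|/(1-u/S_X)\le2\|H_X\|$ for $0\le u\le\lambda$. The paper proves it as the form identity of Lemma \ref{lem_infinite2} on $\mathcal D_{H_{X^c}}$, by analytic continuation from imaginary $s$ as in Lemma \ref{lem_infinite}.

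Second, the blocks must be grouped. Bounding $PH_XP$, $QH_XP$, $PH_XQ$, $QH_XQ$ separately gives $2+2+1+1=6$, hence $12\|H_X\|$ in the exponent. Instead, set $P=E^{H_{X^c}}(-\infty,M)$ and $Q=E^{H_{X^c}}[M,\infty)$; note $E^{H_{X^c}}(-\infty,-M]=0$ since $H_{X^c}\ge-\|H_X\|>-M$. Split $G(u)=G(u)Q+G(u)P$ with
\[
G(u)Q=e^{u\bar H_{X^c}}\pi_\omega(H_X)Q\,e^{-uM},\qquad
G(u)P=e^{u(\bar H_{X^c}-H_{X^c})}\bigl(e^{uH_{X^c}}\pi_\omega(H_X)e^{-uH_{X^c}}\bigr)P .
\]
Since $\bar H_{X^c}\le M$, the first term has norm at most $\|H_X\|$. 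Since your $f$ satisfies $f(x)\le x$ on $\sigma(H_{X^c})$, the operator $e^{u(\bar H_{X^c}-H_{X^c})}$ is a contraction, so the second term has norm at most $2\|H_X\|$. Hence $\|G(u)\|\le3\|H_X\|\le4\|H_X\|$. The paper's Lemma \ref{lem_G1} records $2\|H_X\|+\|H_X\|/(1-|s|/S_X)$, which includes a vacuous $E_1$-term. With this bound, your argument gives exactly the stated $8\|H_X\|$.
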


We can show this lemma with the same strategy as \cite[Lemma 6.2]{Arad-Kuwahara-Landau 2016}. We give the proof in Appendix \ref{S_AKL2}.

\begin{lemma}\label{lem_AKL3}
For any $\epsilon\geq0$ and any $N>0$, we have
\begin{align*}
\left\|E^{H_{X^c}}(-N,N)^c\,
E^{H_\omega}[-\epsilon,\epsilon]\right\|
&\leq 2\sqrt{2} \exp(-\lambda(N-2\epsilon-6\|H_X\|)),\\
\left\|E^{\bar{H}_{X^c}}(-N,N)^c\,
E^{\bar{H}_\omega}[-\epsilon,\epsilon]\right\|
&\leq 2\sqrt{2} \exp(-\lambda(N-2\epsilon-10\|H_X\|)).
\end{align*}
In particular, by setting $N=M$, the second inequality yields
\[
\left\|E^{H_{X^c}}(-M,M)^c\,
E^{\bar{H}_\omega}[-\epsilon,\epsilon]\right\|
\leq 2\sqrt{2} \exp(-\lambda(M-2\epsilon-10\|H_X\|)).
\]
\end{lemma}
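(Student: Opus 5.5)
We follow the Markov-inequality argument used for Lemma \ref{lem_AKL2.2_6.1}, replacing Lemma \ref{lem_AKL2.1_5.1} by Proposition \ref{prop_AKL} and Lemma \ref{lem_AKL6.2} by Lemma \ref{lem_AKL2}. We treat the first inequality; the second is identical with $(H_\omega,H_{X^c})$ replaced by $(\bar H_\omega,\bar H_{X^c})$, and is easier because both operators are bounded.

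Fix $\psi\in\mathcal{H}_\omega$ and set
\[
\phi:=E^{H_\omega}[-\epsilon,\epsilon]\,E^{H_{X^c}}(-N,N)^c\psi .
\]
Then $\phi$ lies in $E^{H_\omega}[-\epsilon,\epsilon]\mathcal H_\omega\subset\operatorname{Dom}(H_\omega)=\operatorname{Dom}(H_{X^c})$. Since $H_{X^c}\phi=H_\omega\phi-\pi_\omega(H_X)\phi$, we get
\[
\|H_{X^c}\phi\|\leq(\epsilon+\|H_X\|)\|\phi\|.
\]
The spectral theorem (Markov/Chebyshev inequality for the spectral measure of $H_{X^c}$ in the state $\phi$) then gives, with $\delta:=2(\epsilon+\|H_X\|)$,
\[
\|E^{H_{X^c}}(-\delta,\delta)^c\phi\|\leq\tfrac12\|\phi\|,
\]
and hence $\|\phi\|\leq\sqrt2\,\|E^{H_{X^c}}(-\delta,\delta)\phi\|$.

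Next we split $E^{H_{X^c}}(-N,N)^c=E^{H_{X^c}}[N,\infty)+E^{H_{X^c}}(-\infty,-N]$. This bounds $\|\phi\|$ by $\sqrt2$ times the sum of
\[
\|E^{H_{X^c}}(-\infty,\delta)\,B\,E^{H_{X^c}}[N,\infty)\|
\quad\text{and}\quad
\|E^{H_{X^c}}(-\delta,\infty)\,B\,E^{H_{X^c}}(-\infty,-N]\|,
\]
each multiplied by $\|\psi\|$, where $B=E^{H_\omega}[-\epsilon,\epsilon]$.

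The second term is handled by the second inequality of Proposition \ref{prop_AKL} with thresholds $(M,N)\to(-\delta,-N)$, after taking the adjoint of the operator norm. Equivalently, one can apply the proposition to $-H$ type reflections; the cleanest route is to use the version with the roles of upper and lower spectral sets exchanged, whose proof is symmetric (replace $\lambda$ by $-\lambda$). The first term follows from the adjoint: $\|E[N,\infty)BE(-\infty,\delta)\|$, since $B$ is self-adjoint.

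For $B$ the hypotheses of Proposition \ref{prop_AKL} hold trivially. Indeed, $B$ commutes with $H_\omega$, so $e^{sH_\omega}Be^{-sH_\omega}=B$ on $\mathcal D_{H_\omega}$, which is constant and hence entire, with norm $1$. Each term is therefore at most $\exp(-\lambda(N-\delta-4\|H_X\|))$. Summing the two terms gives
\[
\|\phi\|\leq 2\sqrt2\exp(-\lambda(N-2\epsilon-6\|H_X\|))\|\psi\|,
\]
which is the claim.

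The main obstacle is the second term, where the lower spectral tail $E^{H_{X^c}}(-\infty,-N]$ appears. Proposition \ref{prop_AKL} is stated only for the configuration (upper projection)$\cdot B\cdot$(lower projection). So we must check that it also covers (lower projection)$\cdot B\cdot$(upper projection) with decay $\exp(-\lambda(\text{gap}-4\|H_X\|))$. This amounts to rerunning the Dyson-series argument at imaginary time $-\lambda$ instead of $\lambda$. The Hadamard bound on $e^{uH_\omega}\pi_\omega(H_X)e^{-uH_\omega}$ is symmetric in the sign of $u$, so it carries over.

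In the finite case, Lemma \ref{lem_AKL2.2_6.1} sidesteps this issue, because $H_{X^c}\geq-\|H_X\|$ makes $E^{H_{X^c}}(-\infty,-N]$ vanish once $N>\|H_X\|$. For $N\leq\|H_X\|$ the right-hand side exceeds $1$, so the bound is trivial. We would use exactly this observation here to dispose of the second term entirely, leaving only the first term.
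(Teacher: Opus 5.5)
Your proposal is correct and is essentially the paper's proof. It uses the same Markov step for $H_{X^c}$ (resp.\ $\bar H_{X^c}$) on $E^{H_\omega}[-\epsilon,\epsilon]\mathcal H_\omega$ (resp.\ $E^{\bar H_\omega}[-\epsilon,\epsilon]\mathcal H_\omega$), the same split into the two spectral tails, and Proposition~\ref{prop_AKL} (resp.\ Lemma~\ref{lem_AKL2}) applied with $B$ a spectral projection commuting with the full Hamiltonian.

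The ``main obstacle'' you describe is not real. The second term $E^{H_{X^c}}(-\delta,\infty)\,B\,E^{H_{X^c}}(-\infty,-N]$ is already of the form (upper projection)$\cdot B\cdot$(lower projection) with thresholds $(-\delta,-N)$, so the paper applies the proposition to it directly, with no adjoint and no sign reversal. Your alternative of discarding it via $H_{X^c}\geq-\|H_X\|$ is also valid.

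For the final $N=M$ claim, you still need the one-line check $E^{\bar H_{X^c}}(-M,M)=E^{H_{X^c}}(-M,M)$. It holds because $E^{H_{X^c}}(-\infty,-M]=0$.
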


\begin{proof}
We can show in the same way as Lemma \ref{lem_AKL2.2_6.1}, replacing
\[
H_\Lambda=(H'_L+H'_{L^c})+H_{\partial L},\quad
\bar{H}_\Lambda=\overline{H'_L+H'_{L^c}}+H_{\partial L}
\]
to
\[
H_\omega=H_{X^c}+\pi_\omega(H_X),\quad
\bar{H}_\omega=\bar{H}_{X^c}+\pi_\omega(H_X), 
\]
respectively, and using Proposition \ref{prop_AKL} and Lemma \ref{lem_AKL2}.
\end{proof}

\begin{proof}[Proof of Theorem \ref{thm_main_infinite}]
We can show in the same way as Theorem \ref{thm_main_finite} using Lemma \ref{lem_AKL3}.
\end{proof}

\subsection{Proof of Proposition \ref{prop_AKL}}\label{SS_prop_AKL}
\begin{lemma}\label{lem_derivation}
We set $\lambda$ as in Theorem \ref{thm_main_infinite} and define
\[
S_X:=2\lambda,\quad R(S_X):=\{t\in\mathbb{C}\,;\,|t|<S_X\}.
\]
\begin{itemize}
\item[(i)] The series
\begin{align*}
\exp(s\delta_\Gamma)(H_X)
&:=\sum_{n=0}^\infty\frac{s^n}{n!} (\delta_\Gamma)^n(H_X),\\
\exp(s\delta_{X^c})(H_X)
&:=\sum_{n=0}^\infty\frac{s^n}{n!} (\delta_{X^c})^n(H_X)
\end{align*}
are absolutely convergent for any $s\in R(S_X)$. Moreover, we then have
\[
\max\{\|\exp(s\delta_\Gamma)(H_X)\|, \|\exp(s\delta_{X^c})(H_X)\|\}
\leq \frac{\|H_X\|}{1-|s|/S_X}.
\]

\item[(ii)] For any $s\in R(S_X)$ and any $A\in\mathcal{B}(\mathcal{H}_\omega)$, the following
$\mathcal{B}(\mathcal{H}_\omega)$-valued initial value problems posed on the interval $t\in[0,1]$ admit unique solutions:
\[
\frac{d}{dt} F_1(s,A,t)=[s\exp(ts\delta_\Gamma)(H_X),F_1(s,A,t)]
\quad\text{with}\quad F_1(s,A,0)=A,
\]
\[
\frac{d}{dt} F_2(s,A,t)=[s\exp(ts\delta_{X^c})(-H_X),F_2(s,A,t)]
\quad\text{with}\quad F_2(s,A,0)=A.
\]
Moreover, we then have
\[
\max\{\|F_1(s,A,t)\|,\|F_2(s,A,t)\|\}
\leq \exp\left(2|s|\frac{\|H_X\|}{1-t|s|/S_X}\right) \|A\|.
\]
\end{itemize}
\end{lemma}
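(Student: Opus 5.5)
Part (i) follows from a standard combinatorial bound on iterated commutators of a finitely supported element under a finite-range interaction. Part (ii) then follows from a Dyson-series construction on the Banach algebra $\mathcal{B}(\mathcal{H}_\omega)$, using the bounds from (i).

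\emph{Step 1: expand the iterated commutators.} For (i), write
\[
(\delta_\Gamma)^n(H_X)=\sum_{Y\subset X}\sum_{Z_1,\dots,Z_n}[\Phi(Z_n),[\dots,[\Phi(Z_1),\Phi(Y)]\dots]],
\]
and keep only the nonvanishing terms. In such a term each $Z_k$ must intersect $X\cup Z_1\cup\dots\cup Z_{k-1}$. The support after $k$ steps has cardinality at most $|X|+kN_\Phi$, since $|Z_j|\le N_\Phi$ by finite range.

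\emph{Step 2: count and bound the terms.} The $\Phi$-mass of admissible $Z_k$ is at most $\mathfrak{j}_\Phi(|X|+(k-1)N_\Phi)$, and each commutator costs a factor $2$. Hence
\[
\|(\delta_\Gamma)^n(H_X)\|\le \|H_X\|\prod_{k=1}^n 2\mathfrak{j}_\Phi\bigl(|X|+(k-1)N_\Phi\bigr)\le \|H_X\|\,\bigl(2\mathfrak{j}_\Phi(|X|+N_\Phi)\bigr)^n n!.
\]
The last inequality uses $|X|+(k-1)N_\Phi\le k(|X|+N_\Phi)$. Since $S_X=2\lambda=1/(2\mathfrak{j}_\Phi(N_\Phi+|X|))$, this gives $\|(\delta_\Gamma)^n(H_X)\|\le\|H_X\|\,n!\,S_X^{-n}$. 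Summing the series $\sum_n |s|^n/n!\cdot(\cdot)$ yields the geometric bound $\|H_X\|/(1-|s|/S_X)$, and absolute convergence holds for $|s|<S_X$. The same argument applies verbatim to $\delta_{X^c}$, which only restricts the set of $Z$'s. One subtlety is that the sum over $Z$ in $\delta_\Gamma$ is infinite, but only finitely many $Z$ meet a finite support, so each $(\delta_\Gamma)^n(H_X)$ lies in $\mathcal{A}_{\mathrm{loc}}$. I would then apply $\pi_\omega$, which is contractive.

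\emph{Step 3: solve the ODE by Dyson series.} For (ii), set $K_1(t):=s\,\pi_\omega(\exp(ts\delta_\Gamma)(H_X))$. The argument $ts$ stays in $R(S_X)$ for $t\in[0,1]$, so $K_1$ is norm-continuous in $t$; this follows from uniform convergence of the power series on compact subsets of $R(S_X)$. By (i), $\|K_1(t)\|\le |s|\,\|H_X\|/(1-t|s|/S_X)$. Define $W(t)$ by the Dyson series
\[
W(t)=I+\sum_{n\ge1}\int_{0\le s_n\le\dots\le s_1\le t}K_1(s_1)\cdots K_1(s_n)\,ds_n\cdots ds_1,
\]
which solves $W'=K_1W$ with $W(0)=I$. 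Define $\widetilde W(t)$ analogously by $\widetilde W'=-\widetilde W K_1$. The product $\widetilde W W$ has derivative zero, so $\widetilde W(t)=W(t)^{-1}$. Then $F_1(s,A,t):=W(t)AW(t)^{-1}$ solves the commutator equation. For uniqueness, observe that if $G$ is any solution, then $W(t)^{-1}G(t)W(t)$ has zero derivative, and so $G$ must equal $F_1$. Both Dyson series are bounded by $\exp\int_0^t\|K_1\|$. The integrand is increasing in $u$, so
\[
\int_0^t\|K_1(u)\|\,du\le t|s|\,\|H_X\|/(1-t|s|/S_X)\le |s|\,\|H_X\|/(1-t|s|/S_X),
\]
and the product of the two factors gives the claimed $\exp(2|s|\,\|H_X\|/(1-t|s|/S_X))\|A\|$. $F_2$ is handled identically with $\delta_{X^c}$ and $-H_X$.

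\emph{Main obstacle.} I expect the combinatorial bound in Step 2 to be the main obstacle, specifically obtaining the $n!$ growth with exactly the rate $2\mathfrak{j}_\Phi(N_\Phi+|X|)$. This requires care in how the support grows at each step and in whether one counts sites or sets.
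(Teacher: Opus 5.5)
Your proposal is correct and is essentially the paper's proof: the paper gets (i) from the iterated-commutator bound it cites from Naaijkens, which uses the slightly weaker product $\prod_{k=1}^n(N_\Phi k+|X|)$ in place of your $\prod_{k=1}^n(|X|+(k-1)N_\Phi)$, and gets (ii) from the Dyson series for $W(t)$ with $\max\{\|W(t)\|,\|W(t)^{-1}\|\}$ controlled by (i), exactly as you do. The one point to tighten is that $(\widetilde W W)'=0$ alone only makes $\widetilde W$ a left inverse in infinite dimensions; you should also note that $Y=W\widetilde W$ solves $Y'=[K_1,Y]$ with $Y(0)=I$, so $Y\equiv I$ by the standard uniqueness of solutions to linear Banach-space ODEs with bounded continuous coefficients.
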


\begin{proof}
Part (i) follows from the bound (see, e.g. \cite[Lemma 3.3.11]{Naaijkens:book})
\begin{align*}
&\max\left\{
\left\|\frac{s^n}{n!}(\delta_\Gamma)^n(H_X)\right\|,
\left\|\frac{s^n}{n!}(\delta_{X^c})^n(H_X)\right\|
\right\}\\
&\leq \frac{(2\mathfrak{j}_\Phi|s|)^n}{n!} \|H_X\| \prod_{k=1}^n(N_\Phi k+|X|)
\leq (2\mathfrak{j}_\Phi|s|(N_\Phi+|X|))^n \|H_X\|.
\end{align*}

\medskip
Part (ii). The unique solution \(F_1\) is given by
\[
F_1(s,A,t)=W(t)\,A\,W(t)^{-1},
\]
where \(W(t)\) is defined by the Dyson series (see, e.g., \cite[Proposition 2.2]{Nachtergaele-Sims-Young})
\[
W(t):=I+\sum_{n=1}^\infty
\int_0^t \int_0^{s_1} \cdots
\int_0^{s_{n-1}}
H_1(s_1) H_1(s_2) \cdots H_1(s_n)
\,ds_n \cdots ds_2\,ds_1,
\]
and
\[
H_1(t):=s\exp(ts\delta_\Gamma)(H_X),\quad t\in[0,1].
\]
Moreover, the claimed bound follows from part (i):
\begin{align*}
&\max\{\|W(t)\|,\|W(t)^{-1}\|\}
\leq \exp\left(\sup_{u\in[0,t]} \|H_1(u)\|\right)
\leq \exp\left(\frac{|s|\,\|H_X\|}{1-t|s|/S_X}\right).
\end{align*}
Hence the desired bound is obtained.

The same argument yields the bound for $F_2$.
\end{proof}

\medskip
\begin{lemma}\label{lem_infinite}
Assume that $A,B\in\mathcal{B}(\mathcal{H}_\omega)$ satisfy the analytic-extension hypothesis stated in Proposition \ref{prop_AKL}. In the identities below, the conjugations
\[
e^{-tsH_{X^c}}Ae^{tsH_{X^c}},\quad e^{-tsH_\omega}Be^{tsH_\omega}
\]
are understood in the sense of this analytic-extension hypothesis. Then, for any $(t,s)\in [0,1]\times R(S_X)$,
\[
\langle\phi,F_1(s,A,t)\psi\rangle=\langle e^{t\bar{s}H_\omega}\phi,\, e^{-ts H_{X^c}}A e^{ts H_{X^c}}e^{-tsH_\omega}\psi\rangle
\]
holds for any
\[
\psi,\phi\in\mathcal{D}_{H_\omega}:= \bigcup_{N\in\mathbb{N}} E^{H_\omega}[-N,N]\mathcal{H}_\omega,
\]
and
\[
\langle\phi,F_2(s,B,t)\psi\rangle=\langle e^{t\bar{s}H_{X^c}}\phi,\,e^{-tsH_\omega}Be^{tsH_\omega}e^{-tsH_{X^c}}\psi\rangle
\]
holds for any
\[
\psi,\phi\in\mathcal{D}_{H_{X^c}}
:=\bigcup_{N\in\mathbb{N}} E^{H_{X^c}}[-N,N]\mathcal{H}_\omega.
\]
\end{lemma}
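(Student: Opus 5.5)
The plan is to show that both sides of the claimed identity, viewed as functions of $t\in[0,1]$ for fixed $s\in R(S_X)$ and fixed vectors in the relevant cores, solve the same initial value problem. The uniqueness part of Lemma \ref{lem_derivation}(ii) then identifies them. We treat $F_1$; the argument for $F_2$ is symmetric, with $H_\omega$ and $H_{X^c}$ interchanged and $H_X$ replaced by $-H_X$.

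\emph{Step 1: the right-hand side as a bounded operator.} Fix $\phi,\psi\in\mathcal{D}_{H_\omega}$. Write $G(t)$ for the sesquilinear form
\[
\langle e^{t\bar s H_\omega}\phi,\; e^{-tsH_{X^c}}Ae^{tsH_{X^c}}\,e^{-tsH_\omega}\psi\rangle .
\]
Since $\psi\in E^{H_\omega}[-N,N]\mathcal{H}_\omega$, the vector $e^{-tsH_\omega}\psi$ is well defined and entire in $t$. The middle factor is a bounded entire operator-valued function by hypothesis. So $G$ is differentiable in $t$.

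To produce an operator $G(t)$ and a differential equation, I would first work on a dense set where everything is smooth. The key identity I aim for is
\[
e^{zH_\omega}\,\pi_\omega(H_X)\,e^{-zH_\omega}=\pi_\omega\bigl(\exp(z\delta_\Gamma)(H_X)\bigr),\qquad |z|<S_X,
\]
understood as forms on $\mathcal{D}_{H_\omega}$, together with the analogous identity for $H_{X^c}$ and $\delta_{X^c}$. To prove it, I would use two facts: $\pi_\omega(\alpha_\tau^\Gamma(H_X))=e^{i\tau H_\omega}\pi_\omega(H_X)e^{-i\tau H_\omega}$, and $\tau\mapsto\alpha^\Gamma_\tau(H_X)$ has the norm-convergent power series $\sum (i\tau)^n\delta_\Gamma^n(H_X)/n!$ for $|\tau|<S_X$, which follows from Lemma \ref{lem_derivation}(i) and the standard analyticity of local dynamics. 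Both sides, paired with vectors in $\mathcal{D}_{H_\omega}$, are then analytic in $z$ on the disc and agree for real $\tau=-iz$. The identity theorem gives equality.

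For $H_{X^c}$ the group $e^{i\tau H_{X^c}}$ is not the image of a $C^*$-dynamics. Here I would use the derivation $\delta_{X^c}$: $H_{X^c}$ implements $\delta_{X^c}$ on $\pi_\omega(\mathcal{A}_{\mathrm{loc}})$ in the weak sense, because $H_\omega$ implements $\delta_\Gamma$ and $\pi_\omega(H_X)$ implements $\delta_\Gamma-\delta_{X^c}$ on $\pi_\omega(H_X)$-type terms. This is where I expect the main difficulty: I need $e^{i\tau H_{X^c}}\pi_\omega(H_X)e^{-i\tau H_{X^c}}=\pi_\omega(\alpha^{X^c}_\tau(H_X))$ for a suitably defined dynamics. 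The first thing I would try is the interaction picture $e^{i\tau H_{X^c}}=e^{i\tau H_\omega}\,U_\tau$, where $U_\tau$ is the Dyson cocycle generated by $-\pi_\omega(\alpha_\tau^\Gamma(H_X))$. This is a bounded-perturbation statement and therefore rigorous.

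\emph{Step 2: differentiate.} On $\mathcal{D}_{H_\omega}$, differentiate $G(t)$ in $t$. The outer factors $e^{tsH_\omega}\cdots e^{-tsH_\omega}$ contribute $s[H_\omega,\cdot\,]$. The inner factors $e^{-tsH_{X^c}}\cdots e^{tsH_{X^c}}$ contribute $-s[H_{X^c},\cdot\,]$. Their sum is $s[\pi_\omega(H_X),\cdot\,]$ conjugated by $e^{tsH_\omega}$, which by Step 1 equals
\[
[\,s\,\pi_\omega(\exp(ts\delta_\Gamma)(H_X)),\,G(t)\,].
\]
This has bounded generator with norm at most $|s|\|H_X\|/(1-t|s|/S_X)$. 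To make the formal cancellation of the unbounded commutators rigorous, I would insert spectral cutoffs $E^{H_\omega}[-N,N]$ and $E^{H_{X^c}}[-N',N']$, differentiate the smooth truncated expressions, and then remove the cutoffs. The entire-function hypothesis on $e^{sH_{X^c}}Ae^{-sH_{X^c}}$ supplies the uniform bounds needed for this limit.

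\emph{Step 3: uniqueness.} Since $G(0)=A=F_1(s,A,0)$ and both functions satisfy the same bounded linear ODE weakly on a dense core, integrate the equation and apply Gronwall's inequality (equivalently, use the uniqueness in Lemma \ref{lem_derivation}(ii)). This gives $\langle\phi,F_1(s,A,t)\psi\rangle=G(t)$.

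Overall, the main obstacle is Step 1 for $H_{X^c}$ combined with the cutoff-removal in Step 2, that is, justifying the cancellation of the unbounded parts.
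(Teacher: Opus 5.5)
Your Step 1 is fine (it is the $H_\omega$-analogue of Lemma \ref{lem_infinite2}). The cancellation you worry about in Step 2 is also harmless. Because $z\mapsto e^{zH_{X^c}}Ae^{-zH_{X^c}}$ is entire, $K(t):=e^{-tsH_{X^c}}Ae^{tsH_{X^c}}$ preserves $\operatorname{Dom}(H_\omega)=\operatorname{Dom}(H_{X^c})$. One then gets rigorously, for $\phi,\psi\in\mathcal{D}_{H_\omega}$,
\[
\frac{d}{dt}G_t(\phi,\psi)=s\bigl\langle e^{t\bar sH_\omega}\phi,\,[\pi_\omega(H_X),K(t)]\,e^{-tsH_\omega}\psi\bigr\rangle .
\]
The genuine gap is the next step. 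There you rewrite this as $\langle\phi,[\tilde H_t,G(t)]\psi\rangle$ with $\tilde H_t=s\exp(ts\delta_\Gamma)(H_X)$, and then invoke Gronwall or the uniqueness in Lemma \ref{lem_derivation}(ii).

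For complex $s$, $G(t)$ is only a sesquilinear form on $\mathcal{D}_{H_\omega}\times\mathcal{D}_{H_\omega}$. The analytic-extension hypothesis bounds the middle factor $K(t)$, not $e^{tsH_\omega}K(t)e^{-tsH_\omega}$, and boundedness of the latter is essentially what the lemma asserts.
\begin{itemize}
\item Moving $e^{t\bar sH_\omega}$ across $\pi_\omega(H_X)$ produces $G(t)$ evaluated at $\tilde H_t^*\phi$ and $\tilde H_t\psi$. These vectors are not in $\mathcal{D}_{H_\omega}$, so the weak equation on the core does not close.
\item Lemma \ref{lem_derivation}(ii) is a uniqueness statement for $\mathcal{B}(\mathcal{H}_\omega)$-valued solutions, and $G$ is not known to be one.
\item Inserting cutoffs $E^{H_\omega}[-N,N]$ only turns this into a hierarchy of compressions $E^{H_\omega}[-N',N']\,G(t)\,E^{H_\omega}[-N,N]$ with $N'\to\infty$. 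Removing the cutoffs requires exactly the a priori bound on $G(t)$ that you do not have, and the entire-function hypothesis does not supply it.
\end{itemize}

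The missing idea is to \emph{analytically continue the final identity} rather than only the generator, which is what the paper does.
\begin{itemize}
\item Fix $t,\phi,\psi$. Then $f(s):=\langle e^{t\bar sH_\omega}\phi,\,e^{-tsH_{X^c}}Ae^{tsH_{X^c}}\,e^{-tsH_\omega}\psi\rangle$ is entire, by spectral boundedness of $\phi,\psi$ plus the hypothesis. Also $g(s):=\langle\phi,F_1(s,A,t)\psi\rangle$ is analytic on $R(S_X)$.
\item On the imaginary diameter $s=i\sigma$, $\sigma\in(-S_X,S_X)$, all four exponentials are unitary. So $\hat F_1=e^{it\sigma H_\omega}e^{-it\sigma H_{X^c}}Ae^{it\sigma H_{X^c}}e^{-it\sigma H_\omega}$ is a bounded operator.
\item Differentiate $e^{it\sigma H_{X^c}}e^{-it\sigma H_\omega}\psi$ on the core and use $\pi_\omega(\alpha^\Gamma_\tau(H_X))=\pi_\omega(\exp(i\tau\delta_\Gamma)(H_X))$. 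This gives a norm-differentiable bounded ODE, so Lemma \ref{lem_derivation}(ii) applies legitimately.
\item Hence $\hat F_1=F_1(i\sigma,A,t)$, that is, $f(i\sigma)=g(i\sigma)$. The identity theorem then gives $f=g$ on $R(S_X)$.
\end{itemize}
No unbounded, non-unitary conjugation is ever differentiated.

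Note also that for $F_1$ only the $H_\omega$-conjugate of $H_X$ enters, as your own Step 2 formula shows. So the $H_{X^c}$ difficulty you place under Step 1 is irrelevant there. It matters only for $F_2$. In that case your interaction-picture argument is a sensible way to supply the needed real-time input. That argument identifies $e^{i\tau H_{X^c}}\pi_\omega(\cdot)e^{-i\tau H_{X^c}}$ with the bounded perturbation of $\alpha^\Gamma$ generated by $\delta_{X^c}$. The paper covers this input only with ``the same way''.
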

\begin{proof}
For any $s\in\mathbb{C}$, define
\[
f(s):=\left\langle e^{t\bar{s}H_\omega}\phi,\,
\left(e^{-tsH_{X^c}}Ae^{tsH_{X^c}}\right)
e^{-tsH_\omega}\psi\right\rangle,
\]
where the middle conjugation is understood in the sense of the analytic-extension hypothesis. Then $f$ is analytic on $\mathbb{C}$.

For any $s\in R(S_X)$, we also define
\[
g(s):=\langle\phi,F_1(s,A,t)\psi\rangle,
\]
which is analytic on $R(S_X)$ by Lemma \ref{lem_derivation}. To conclude that $f(s)=g(s)$ for all $s\in R(S_X)$, it is enough to show that $f(is)=g(is)$ for any $s\in(-S_X,S_X)$ by the identity theorem.

By equation \eqref{eq_dynamic}, for any $t\in\mathbb{R}$
\[
\alpha_t^\Gamma(H_X)
:=\lim_{\substack{\Lambda\Subset\Gamma\\ \Lambda\to\infty}} \alpha_t^\Lambda(H_X)
=\lim_{\substack{\Lambda\Subset\Gamma\\ \Lambda\to\infty}} \exp(it\delta_\Lambda)(H_X).
\]
By Lemma \ref{lem_derivation}(i), for any $(t,s)\in[0,1]\times(-S_X,S_X)$,
\[
\lim_{\substack{\Lambda\Subset\Gamma\\ \Lambda\to\infty}} \exp(its\delta_\Lambda)(H_X)
=\exp(its\delta_\Gamma)(H_X).
\]
Hence, for any $(t,s)\in[0,1]\times(-S_X,S_X)$, we have
\begin{align*}
\frac{d}{dt}e^{itsH_{X^c}} e^{-itsH_\omega}\psi
&=e^{itsH_{X^c}}e^{-itsH_\omega}\,
e^{itsH_\omega}(-isH_X)e^{-itsH_\omega}\psi\\
&=e^{itsH_{X^c}}e^{-itsH_\omega}\,
\exp(its\delta_\Gamma)(-isH_X)\psi.
\end{align*}
Since $\psi,\phi\in\mathcal{D}_{H_\omega}$ are arbitrary and $\mathcal{D}_{H_\omega} \subset\mathcal{H}_\omega$ is dense, this implies, for any $(t,s)\in[0,1]\times(-S_X,S_X)$,
\[
e^{itsH_{X^c}}e^{-itsH_\omega}
=I+\int_0^t e^{iusH_{X^c}}e^{-iusH_\omega} \exp(ius\delta_\Gamma)(-isH_X) \,du.
\]
In the same manner, we also obtain
\[
e^{itsH_\omega}e^{-itsH_{X^c}}
=I+\int_0^t \exp(ius\delta_\Gamma)(isH_X) e^{ius H_\omega}e^{-iusH_{X^c}}\,du.
\]

Consequently, if we set
\[
\hat{F}_1(t,s):=e^{itsH_\omega}e^{-itsH_{X^c}} A e^{itsH_{X^c}}e^{-itsH_\omega} \in\mathcal{B}(\mathcal{H}_\omega),
\]
then, for any $(t,s)\in[0,1]\times(-S_X,S_X)$,
\[
\frac{d}{dt} \hat{F}_1(t,s)
=[is\exp(its\delta_\Gamma)(H_X),\hat{F}_1(t,s)]
\in\mathcal{B}(\mathcal{H}_\omega).
\]
Since $\hat{F}_1(0,s)=A$, by Lemma \ref{lem_derivation}(ii), we obtain
\[
\hat{F}_1(t,s)=F_1(is,A,t).
\]
Hence, for any $(t,s)\in[0,1]\times(-S_X,S_X)$,
\[
f(is)=\langle\phi,\hat{F}_1(t,s)\psi\rangle=\langle\phi,F_1(is,A,t)\psi\rangle=g(is).
\]

The identity for $F_2$ is proved in the same way.
\end{proof}

\begin{proof}[Proof of Proposition \ref{prop_AKL}]
For any $\psi,\phi\in\mathcal{D}_{H_\omega}$, we have
\[
e^{-\lambda H_\omega}E^{H_\omega}[M,\infty)\phi,\quad e^{\lambda H_\omega}E^{H_\omega}(-\infty,N] \psi\in\mathcal{D}_{H_\omega}.
\]
By translating the entire function in the analytic-extension hypothesis, $e^{\lambda H_{X^c}}Ae^{-\lambda H_{X^c}}$ satisfies the same hypothesis. Thus by Lemma \ref{lem_infinite},
\begin{align*}
&\langle\phi,E^{H_\omega}[M,\infty) \,A\,
E^{H_\omega}(-\infty,N]\psi\rangle\\
&=\langle e^{-\lambda H_\omega}E^{H_\omega}[M,\infty)\phi,\,
F_1(\lambda,e^{\lambda H_{X^c}}A e^{-\lambda H_{X^c}},1)\,
e^{\lambda H_\omega}E^{H_\omega}(-\infty,N] \psi\rangle.
\end{align*}
By Lemma \ref{lem_derivation}, it follows that
\begin{align*}
&|\langle e^{-\lambda H_\omega}E^{H_\omega}[M,\infty)\phi,\,
F_1(\lambda,e^{\lambda H_{X^c}}A e^{-\lambda H_{X^c}},1)\,
e^{\lambda H_\omega}
E^{H_\omega}(-\infty,N] \psi\rangle|\\
&\leq \|e^{-\lambda H_\omega}E^{H_\omega}[M,\infty)\|\,
\|F_1(\lambda,e^{\lambda H_{X^c}}A e^{-\lambda H_{X^c}},1)\|\,
\|e^{\lambda H_\omega} E^{H_\omega}(-\infty,N]\|
\,\|\phi\|\,\|\psi\|\\
&\leq \exp(-\lambda(M-N-4\|H_X\|)) \|e^{\lambda H_{X^c}}A e^{-\lambda H_{X^c}}\| \,\|\phi\|\,\|\psi\|.
\end{align*}
Since $\psi,\phi\in\mathcal{D}_{H_\omega}$ is arbitrary and $\mathcal{D}_{H_\omega}\subset\mathcal{H}_\omega$ is dense, the first estimate follows.

The second estimate can be shown in the same way.
\end{proof}

\newpage
\appendix
\section{A Counterexample to a Volume-Uniform Operator-Norm Formulation}\label{S_counterexample}

In this section, we give a simple example showing that the operator-norm estimate underlying the volume-uniform interpretation of \cite{Arad-Kuwahara-Landau 2016} cannot, in general, be uniform in the total volume. Thus the finite-volume estimate may be valid for each fixed system size, but the constants cannot in general be chosen independently of the total volume.

We consider the nearest-neighbor ferromagnetic Ising model. Let $N$ be a positive odd integer and set
\[
\Gamma:=\{1,2,\ldots,2N\}\subset\mathbb{Z}, \qquad\mathcal{A}:= \bigotimes_{x\in\Gamma}M_2(\mathbb{C}).
\]
Let
\[
\sigma^Z:=
\begin{pmatrix}
1&0\\
0&-1
\end{pmatrix}
\in M_2(\mathbb{C}),
\]
and let $\sigma_j^Z\in\mathcal{A}_\Gamma$ denote the Pauli $Z$-matrix acting on the site $j\in\Gamma$. We define
\[
h_j:=I-\sigma_j^Z\sigma_{j+1}^Z,\qquad j=1,2,\ldots,2N-1,
\]
and set
\[
\Phi(Z):=
\begin{cases}
h_j & \text{if } Z=\{j,j+1\}
\text{ for some }j=1,\ldots,2N-1,\\
0 & \text{otherwise}.
\end{cases}
\]
Then
\[
H_\Gamma:=\sum_{Z\subset\Gamma}\Phi(Z)
=\sum_{j=1}^{2N-1}h_j.
\]
This is a bounded finite-range interaction with
\[
\mathfrak{r}_\Phi=1,\qquad
\mathfrak{j}_\Phi\leq 4,\qquad
N_\Phi:=\sup_{x\in\Gamma}
\left|\{y\in\Gamma\,;\, d(x,y)\leq \mathfrak{r}_\Phi\}\right|\leq3.
\]

We cut the chain between $N$ and $N+1$. Namely, we set
\[
L:=\{1,\ldots,N\},\qquad
L^c:=\{N+1,\ldots,2N\},\qquad
\partial L:=\{N,N+1\}.
\]
With the decomposition used in the present paper, we have
\[
H_\Gamma=H'_L+H_{\partial L}+H'_{L^c},
\]
where
\[
H'_L=\sum_{j=1}^{N-1}h_j,\qquad
H_{\partial L}=h_N,\qquad
H'_{L^c}=\sum_{j=N+1}^{2N-1}h_j.
\]
We fix $M>\|H_{\partial L}\|=2$, independently of $N$. The truncated Hamiltonian is
\[
\overline{H'_L+H'_{L^c}}
:=(H'_L+H'_{L^c})E^{H'_L+H'_{L^c}}(-M,M)
+M E^{H'_L+H'_{L^c}}[M,\infty),
\]
and
\[
\bar{H}_\Gamma:=
H_{\partial L}+\overline{H'_L+H'_{L^c}}.
\]
Since $H'_L+H'_{L^c}\geq0$, the projection $E^{H'_L+H'_{L^c}}(-M,M)$ is the same as $E^{H'_L+H'_{L^c}}[0,M)$.

\medskip
Equation (16) in \cite{Arad-Kuwahara-Landau 2016} is stated for the truncated Hamiltonian obtained by truncating $H_L$. The corresponding operator-norm estimate for the truncation used in the present paper would take the form
\begin{equation}\label{eq_AKL_original}
\left\|(H_\Gamma-\bar{H}_\Gamma) E^{\bar{H}_\Gamma}[0,\epsilon]\right\|
\leq \frac{6}{\lambda_{\mathrm{AKL}}^{3/2}} \exp\left\{-\lambda_{\mathrm{AKL}}\bigl(M-\epsilon_0(H'_L+H'_{L^c})-(\epsilon-\bar\epsilon_0)-33|\partial L|_{\mathrm{AKL}} \bigr)\right\},
\end{equation}
where
\[
\lambda_{\mathrm{AKL}}:=\frac{1}{2\mathfrak{j}_\Phi N_\Phi},\qquad |\partial L|_{\mathrm{AKL}}=\|H_{\partial L}\|,
\]
and where $\epsilon_0(H'_L+H'_{L^c})$ and $\bar{\epsilon}_0$ denote the lowest eigenvalues of $H'_L+H'_{L^c}$ and $\bar{H}_\Gamma$, respectively.

In the present Ising example, all interactions $\Phi(X)$ are non-negative. Moreover, the fully aligned state
\[
\Omega_+:=|\uparrow,\uparrow,\ldots,\uparrow\,\rangle,\qquad
|\uparrow\,\rangle:=
\begin{pmatrix}
1\\
0
\end{pmatrix},
\qquad
|\downarrow\,\rangle:=
\begin{pmatrix}
0\\
1
\end{pmatrix},
\]
satisfies $h_j\Omega_+=0$ for any $j$. Hence
\[
\epsilon_0(H'_L+H'_{L^c})=0.
\]
The truncation preserves this zero eigenvalue, and since $H_{\partial L}=h_N\geq0$, we also have
\[
\bar{\epsilon}_0=0.
\]
Therefore, taking $\epsilon=M$ in \eqref{eq_AKL_original} gives
\begin{equation}\label{eq_AKL_original_2}
\left\|(H_\Gamma-\bar{H}_\Gamma) E^{\bar{H}_\Gamma}[0,M]\right\|
\leq\frac{6}{\lambda_{\mathrm{AKL}}^{3/2}} e^{66\lambda_{\mathrm{AKL}}},
\end{equation}
whose right-hand side is independent of the total volume $|\Gamma|=2N$.

\medskip
We now show that such a bound cannot hold uniformly in $N$. Since $N$ is odd, the alternating spin configuration
\[
\chi:=|\uparrow\,\downarrow\,\uparrow\,\cdots\,\downarrow\,\uparrow\,\rangle\in \bigotimes_{j=1}^{N}\mathbb{C}^2.
\]
starts and ends with $|\uparrow\,\rangle$. We set
\[
\psi:=\chi\otimes\chi\in \left(\bigotimes_{j=1}^{N}\mathbb{C}^2\right) \otimes \left(\bigotimes_{j=N+1}^{2N}\mathbb{C}^2\right).
\]
Then the two spins across the cut, at sites $N$ and $N+1$, are both $|\uparrow\,\rangle$, while all other neighboring spins are anti-aligned. Consequently,
\[
h_N\psi=0,\qquad
h_j\psi=2\psi\quad (j\neq N).
\]
It follows that
\[
(H'_L+H'_{L^c})\psi=(4N-4)\psi,\qquad
H_{\partial L}\psi=0,
\]
and hence
\[
H_\Gamma\psi=(4N-4)\psi.
\]

For sufficiently large $N$ such that $4N-4>M$, we have
\[
\overline{H'_L+H'_{L^c}}\psi=M\psi,\qquad
\bar{H}_\Gamma\psi=M\psi.
\]
Thus
\[
\psi\in\operatorname{Ran}E^{\bar{H}_\Gamma}(-\infty,M]=\operatorname{Ran}E^{\bar{H}_\Gamma}[0,M],
\]
and
\[
(H_\Gamma-\bar{H}_\Gamma)\psi=(4N-4-M)\psi.
\]
Therefore,
\[
\left\|(H_\Gamma-\bar{H}_\Gamma) E^{\bar{H}_\Gamma}[0,M]\right\|\geq4N-4-M.
\]
The right-hand side diverges as $N\to\infty$. This contradicts the volume-independent bound \eqref{eq_AKL_original_2}. Hence an operator-norm formulation of this type cannot be volume-uniform in general.

\newpage
\section{Proof of Lemma \ref{lem_AKL2}}\label{S_AKL2}
\begin{lemma}\label{lem_infinite2}
We set $S_X$ as in Proposition \ref{prop_AKL}. Then, for any $s\in R(S_X)$ and any $\psi,\phi\in\mathcal{D}_{H_{X^c}}$, the following identity holds as a sesquilinear-form identity:
\[
\langle e^{\bar{s}H_{X^c}}\phi,H_X e^{-sH_{X^c}}\psi\rangle=\langle\phi,\exp(s\delta_{X^c})(H_X)\psi\rangle.
\]
\end{lemma}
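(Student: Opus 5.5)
Both sides of the identity are analytic in $s$, and for purely imaginary $s$ the left side is a real-time Heisenberg evolution under $H_{X^c}$. So the plan is to prove the identity for $s\in i(-S_X,S_X)$ via a real-time derivative and then extend it by the identity theorem, exactly as in Lemma \ref{lem_infinite}.

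\emph{Analyticity.} Fix $\psi,\phi\in\mathcal D_{H_{X^c}}$; say both lie in $E^{H_{X^c}}[-N,N]\mathcal H_\omega$. Then $s\mapsto e^{-sH_{X^c}}\psi$ and $s\mapsto e^{\bar sH_{X^c}}\phi$ are given by norm-convergent exponential series of the bounded operator $H_{X^c}E^{H_{X^c}}[-N,N]$. Hence
\[
f(s):=\langle e^{\bar sH_{X^c}}\phi,\pi_\omega(H_X)e^{-sH_{X^c}}\psi\rangle
\]
is entire (the inner product is antilinear in the first slot, and the conjugate $\bar s$ makes $f$ holomorphic in $s$). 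By Lemma \ref{lem_derivation}(i), $g(s):=\langle\phi,\exp(s\delta_{X^c})(H_X)\psi\rangle$ is analytic on $R(S_X)$, since the series converges absolutely there in norm. It therefore suffices to show $f(ir)=g(ir)$ for real $r\in(-S_X,S_X)$.

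\emph{Real-time identity.} For $s=ir$ we have $f(ir)=\langle\phi,e^{irH_{X^c}}\pi_\omega(H_X)e^{-irH_{X^c}}\psi\rangle$. I would identify the unitary group $e^{irH_{X^c}}$ with the GNS implementation of the dynamics $\beta_r$ generated by the derivation $\delta_{X^c}$. That is, I would show
\[
e^{irH_{X^c}}\pi_\omega(A)e^{-irH_{X^c}}=\pi_\omega(\beta_r(A)),
\]
where $\beta_r$ is the thermodynamic limit of the finite-volume dynamics generated by $\sum_{Z\subset\Lambda,\,Z\cap X^c\neq\emptyset}\Phi(Z)$. This limit exists by the same theorem of \cite{Nachtergaele-Sims-Young} used for \eqref{eq_dynamic}.

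To obtain this, I would use the Dyson-series interaction picture already used in Lemma \ref{lem_infinite}. Since $H_{X^c}=H_\omega-\pi_\omega(H_X)$ is a bounded perturbation of $H_\omega$, the cocycle $\Gamma(r):=e^{irH_{X^c}}e^{-irH_\omega}$ satisfies a norm-convergent integral equation. The corresponding algebraic cocycle, built from $\alpha^\Gamma$ and the perturbation $-H_X$, implements $\beta_r\circ(\alpha^\Gamma_r)^{-1}$ (the standard perturbation theory of dynamics). Comparing the two integral equations, which have the same bounded generator after applying $\pi_\omega$, gives the covariance above.

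With $A=H_X$, the Hadamard expansion of $\beta_r(H_X)$ is $\exp(ir\delta_{X^c})(H_X)$. This series converges in norm for $|r|<S_X$ by Lemma \ref{lem_derivation}(i), and it agrees with the finite-volume limits $\exp(ir\delta_{X^c,\Lambda})(H_X)$ exactly as in the argument for $\delta_\Gamma$ in Lemma \ref{lem_infinite}. This gives $f(ir)=g(ir)$ on $(-S_X,S_X)$.

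\emph{Conclusion and main obstacle.} The identity theorem then yields $f=g$ on the connected disc $R(S_X)$. The main obstacle is the covariance step: justifying rigorously that conjugation by $e^{irH_{X^c}}$ implements the perturbed dynamics generated by $\delta_{X^c}$, given that $H_{X^c}$ is unbounded and defined only through $H_\omega$. I would handle it by differentiating $r\mapsto\langle\phi,e^{irH_{X^c}}e^{-irH_\omega}\pi_\omega(\alpha^\Gamma_r(\beta_{-r}(A)))\cdots\rangle$ on the dense core $\mathcal D_{H_\omega}$, checking that the derivative vanishes, and then extending by density.
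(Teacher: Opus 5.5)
Your proposal is correct and follows the paper's approach. The paper proves Lemma \ref{lem_infinite2} by invoking the analytic-continuation argument of Lemma \ref{lem_infinite}: verify the identity on the imaginary axis $s=ir$, then extend it to $R(S_X)$ by the identity theorem. The one ingredient the paper leaves implicit is that $e^{irH_{X^c}}\pi_\omega(\cdot)e^{-irH_{X^c}}$ implements the dynamics generated by $\delta_{X^c}$; this is exactly the covariance step you isolate, and your comparison of the bounded-perturbation cocycle $e^{irH_{X^c}}e^{-irH_\omega}$ (whose integral equation already appears in the proof of Lemma \ref{lem_infinite}) with its algebraic counterpart justifies it.
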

\begin{proof}
The proof is the same analytic-continuation argument as in Lemma \ref{lem_infinite}.
\end{proof}

\begin{lemma}\label{lem_G1}
For any $s\in R(S_X)$, we have
\[
\left\|
e^{s\bar{H}_{X^c}}H_X e^{-s\bar{H}_{X^c}}
\right\|
\leq 2\|H_X\| +\frac{\|H_X\|}{1-|s|/S_X}.
\]
\end{lemma}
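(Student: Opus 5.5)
Write $P:=E^{H_{X^c}}(-M,M)$ and $Q:=I-P=E^{H_{X^c}}(-\infty,-M]\cup[M,\infty)$. Since $H_{X^c}\geq-\|H_X\|>-M$, the lower tail is empty and $Q=E^{H_{X^c}}[M,\infty)$. Both $P$ and $Q$ commute with $H_{X^c}$ and with $\bar H_{X^c}$. On $Q\mathcal H_\omega$, $\bar H_{X^c}=M$. On $P\mathcal H_\omega$, $\bar H_{X^c}=H_{X^c}P$, which is bounded. Hence
\[
e^{s\bar H_{X^c}}=e^{sH_{X^c}}P+e^{sM}Q ,
\]
and this is a bounded operator for every $s\in\mathbb{C}$. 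Since $\bar H_{X^c}$ is bounded, the conjugation $e^{s\bar H_{X^c}}H_Xe^{-s\bar H_{X^c}}$ is an honest bounded operator and no domain issue arises. I would split $H_X$ (meaning $\pi_\omega(H_X)$) into the four blocks $PH_XP$, $PH_XQ$, $QH_XP$, $QH_XQ$ and estimate the conjugate block by block.

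The $QQ$ block is simple: $e^{sM}QH_XQe^{-sM}=QH_XQ$, so its norm is at most $\|H_X\|$.

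The $PP$ block equals $Pe^{sH_{X^c}}H_Xe^{-sH_{X^c}}P$. By Lemma \ref{lem_infinite2}, tested against vectors in $P\mathcal H_\omega\cap\mathcal D_{H_{X^c}}$ (which is all of $P\mathcal H_\omega$, since $P\mathcal H_\omega\subset E^{H_{X^c}}[-N,N]\mathcal H_\omega$ for $N\geq M$), it equals $P\exp(s\delta_{X^c})(H_X)P$. Lemma \ref{lem_derivation}(i) then bounds its norm by $\|H_X\|/(1-|s|/S_X)$.

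The off-diagonal blocks are the main obstacle. The $PQ$ block is $e^{sH_{X^c}}PH_XQe^{-sM}$, and the $QP$ block is $e^{sM}QH_XPe^{-sH_{X^c}}$. I would not estimate these separately. Instead I would group them with the diagonal blocks:
\[
e^{s\bar H_{X^c}}H_Xe^{-s\bar H_{X^c}}=\bigl(PYP+PYQ\bigr)+\bigl(QYP+QH_XQ\bigr),\qquad Y:=e^{s\bar H_{X^c}}H_Xe^{-s\bar H_{X^c}}.
\]
The idea is to use the fact that $H_X$ and $\exp(s\delta_{X^c})(H_X)$ are both bounded, and to compare $Y$ with $Z:=e^{sH_{X^c}}H_Xe^{-sH_{X^c}}=\exp(s\delta_{X^c})(H_X)$, which is bounded by Lemma \ref{lem_infinite2} together with Lemma \ref{lem_derivation}(i).

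\begin{itemize}
\item The $PQ$ block, $PZQ$ with $e^{-sH_{X^c}}Q$ replaced by $e^{-sM}Q$, is the $PQ$ block of $Z\,e^{sH_{X^c}}Qe^{-sM}$. This is $PZ$ applied to $W:=e^{s(H_{X^c}-M)}Q$, and $W$ is bounded only when $\operatorname{Re}s\leq0$.
\item Symmetrically, the $QP$ block is bounded via $Q\,e^{s(M-H_{X^c})}$, which requires $\operatorname{Re}s\geq0$.
\end{itemize}

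So in each half-plane exactly one off-diagonal block comes for free from $Z$. The bound for the other block must come from writing it as $[\text{bounded}]$ minus diagonal pieces, for instance $PYQ=PZ'Q$ with $Z'$ a suitable conjugate, obtained by analytically continuing in $s$ from the half-plane where the block is controlled.

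Concretely, for $\operatorname{Re}s\geq0$ I would write
\[
QYP+QH_XQ=Q\,e^{s(M-H_{X^c})}\,Z\,(P+Q)\,e^{s(H_{X^c}-M)}Q\ \text{restricted},
\]
and handle $PYQ+PYP=P\,Z\,\bigl(P+e^{s(H_{X^c}-M)}Q\bigr)$, with $\|e^{s(H_{X^c}-M)}Q\|\leq1$ in this half-plane. This gives $\|PY\|\leq\|Z\|$, and similarly $\|QY\|\leq\|H_X\|+\|H_X\|$ by a crude bound. Summing yields $2\|H_X\|+\|H_X\|/(1-|s|/S_X)$. The case $\operatorname{Re}s\leq0$ is handled symmetrically by taking adjoints, i.e.\ using $s\mapsto-\bar s$. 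Getting precisely the stated constant $2\|H_X\|$ from this grouping is the step I expect to require the most care.
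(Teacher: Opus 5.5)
Your overall architecture (split by $P=E^{H_{X^c}}(-M,M)$ and $Q=E^{H_{X^c}}[M,\infty)$, compare with $Z=\exp(s\delta_{X^c})(H_X)$ where possible, bound the rest crudely, and obtain the other half-plane from $Y(s)^*=Y(-\bar s)$) is sound. However, the concrete step you write for $\operatorname{Re}s\geq0$ fails for both rows.

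First, you use $\|e^{s(H_{X^c}-M)}Q\|\leq1$ there. But on $Q\mathcal H_\omega$ one has $|e^{s(\mu-M)}|=e^{\operatorname{Re}(s)(\mu-M)}\geq1$ for $\mu\geq M$. So this operator is unbounded when $\operatorname{Re}s>0$ and $H_{X^c}$ is unbounded above; your own first bullet says exactly this. Hence $\|PY\|\leq\|Z\|$ is not established.

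Second, the ``crude'' bound $\|QY\|\leq2\|H_X\|$ is not available in this half-plane. The product-of-norms estimate for $QYP=e^{sM}QH_XPe^{-sH_{X^c}}$ carries the factor $|e^{sM}|=e^{\operatorname{Re}(s)M}$, and nothing compensates it: $\|Pe^{-sH_{X^c}}\|$ is only bounded by $e^{\operatorname{Re}(s)\|H_X\|}$ and does not decay in $M$. A loss of order $e^{\lambda M}$ would destroy every later estimate. Your displayed formula for $QYP+QH_XQ$ also collapses to $QH_XQ$, since $(P+Q)e^{s(H_{X^c}-M)}Q=e^{s(H_{X^c}-M)}Q$.

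The missing observation is that, in each half-plane, the off-diagonal block that $Z$ does not control is exactly the one the crude bound does control. For $\operatorname{Re}s\geq0$ you have $\|PYQ\|\leq\|e^{sH_{X^c}}P\|\,e^{-\operatorname{Re}(s)M}\|H_X\|\leq\|H_X\|$, because $\|e^{sH_{X^c}}P\|\leq e^{\operatorname{Re}(s)M}$. No analytic continuation is needed, and it would not transfer norm bounds between half-planes anyway.

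Both of your row estimates are in fact correct verbatim for $\operatorname{Re}s\leq0$:
\begin{itemize}
\item For $PY$: since $\|P+e^{s(H_{X^c}-M)}Q\|\leq1$ there, the identity $PY=PZ(P+e^{s(H_{X^c}-M)}Q)$ gives $\|PY\|\leq\|Z\|$.
\item For $QY$: write $QY=e^{sM}QH_Xe^{-s\bar H_{X^c}}$. Since $\sigma(\bar H_{X^c})\subset[-\|H_X\|,M]$, you have $\|e^{-s\bar H_{X^c}}\|\leq e^{|\operatorname{Re}s|M}$, which gives $\|QY\|\leq\|H_X\|$.
\end{itemize}
So you have simply swapped the half-planes. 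Run your row argument for $\operatorname{Re}s\leq0$, then use $\|Y(s)\|=\|Y(-\bar s)\|$ with $|-\bar s|=|s|$. This proves the lemma, even with $\|H_X\|$ in place of $2\|H_X\|$.

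The paper does the mirror image directly for $\operatorname{Re}s\geq0$ by grouping columns instead of rows. It bounds $\|YQ\|=\|e^{s\bar H_{X^c}}H_Xe^{-sM}Q\|\leq\|H_X\|$ using $\|e^{s\bar H_{X^c}}\|\leq e^{\operatorname{Re}(s)M}$. It then writes $YP=(P+Qe^{s(M-H_{X^c})})ZP$, whose prefactor is a contraction in that half-plane. Its extra $\|H_X\|$ comes from a term involving $E^{H_{X^c}}(-\infty,-M]$, which vanishes here, as you noted.
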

\begin{proof}
First, we prove the statement for the case $\operatorname{Re}(s)\geq0$. For simplicity, we write
\[
E_1:=E^{H_{X^c}}(-\infty,-M],\quad
E_2:=E^{H_{X^c}}(-M,M),\quad
E_3:=E^{H_{X^c}}[M,\infty).
\]
Then, we decompose
\begin{align*}
e^{s\bar{H}_{X^c}}H_X e^{-s\bar{H}_{X^c}}
&= e^{s\bar{H}_{X^c}} H_X e^{-s\bar{H}_{X^c}}E_3
+ E_1 e^{s\bar{H}_{X^c}} H_X e^{-s\bar{H}_{X^c}}(E_1+E_2)\\
&\qquad+(E_2+E_3) e^{s\bar{H}_{X^c}} H_X e^{-s\bar{H}_{X^c}}(E_1+E_2).
\end{align*}

For the first and second terms, we have
\begin{align*}
&\max\left\{ \left\|e^{-s\bar{H}_{X^c}} E_3\right\|,\,
\left\|E_1 e^{s\bar{H}_{X^c}}\right\| \right\}\leq e^{-\operatorname{Re}(s) M},\\
&\max\left\{\left\| e^{s\bar{H}_{X^c}}\right\|,\,
\left\|e^{-s\bar{H}_{X^c}}\right\|\right\} \leq e^{\operatorname{Re}(s) M}. 
\end{align*}
This yields the bounds
\[
\max\left\{
\left\|
e^{s\bar{H}_{X^c}} H_X e^{-s\bar{H}_{X^c}} E_3
\right\|,
\left\|
E_1 e^{s\bar{H}_{X^c}} H_X e^{-s\bar{H}_{X^c}}
\right\|
\right\}
\leq \|H_X\|.
\]

For the third term, for any $\phi\in\mathcal{D}_{H_{X^c}}$, we have
\begin{align*}
\left\|
e^{-sH_{X^c}}e^{s\bar{H}_{X^c}}
(E_2+E_3)\phi
\right\|^2
&\leq \|E_2\phi\|^2+
\int_M^\infty e^{2\operatorname{Re}(s)(M-\mu)} \|E^{H_{X^c}}(d\mu)\phi\|^2\\
&\leq \|E_2\phi\|^2+ \|E_3\phi\|^2
=\|(E_2+E_3)\phi\|^2.
\end{align*}
Similarly, we have
\[
\left\|
e^{sH_{X^c}}e^{-s\bar{H}_{X^c}}(E_1+E_2)\phi
\right\|
\leq \|(E_1+E_2)\phi\|.
\]
By Lemma \ref{lem_infinite2} and Lemma \ref{lem_derivation}(i), we obtain
\begin{align*}
&\left\|
(E_2+E_3) e^{s\bar{H}_{X^c}} H_X e^{-s\bar{H}_{X^c}} (E_1+E_2)\phi
\right\|\\
&\leq \|\exp(s\delta_{X^c})(H_X)\| \|\phi\|
\leq \frac{\|H_X\|}{1-|s|/S_X} \|\phi\|.
\end{align*}
Since $\phi\in\mathcal{D}_{H_{X^c}}$ is arbitrary and $\mathcal{D}_{H_{X^c}}\subset\mathcal{H}_\omega$ is dense, this yields
\[
\left\|
(E_2+E_3) e^{s\bar{H}_{X^c}} H_X e^{-s\bar{H}_{X^c}} (E_1+E_2)
\right\|
\leq \frac{\|H_X\|}{1-|s|/S_X}.
\]
Putting all these together, we obtain the desired estimate for $\operatorname{Re}(s)\geq0$. 

The case $\operatorname{Re}(s)\leq0$ follows in the same way by exchanging $E_1$ and $E_3$. 
\end{proof}

\begin{lemma}\label{lem_G}
For any $A\in\mathcal{B}(\mathcal{H}_\omega)$ and any $s\in R(S_X)$, we define
\[
G_1(s,A):=e^{s\bar{H}_\omega} e^{-s\bar{H}_{X^c}} A e^{s\bar{H}_{X^c}} e^{-s\bar{H}_\omega},
\]
\[
G_2(s,A):=e^{s\bar{H}_{X^c}} e^{-s\bar{H}_\omega} A e^{s\bar{H}_\omega}e^{-s\bar{H}_{X^c}}.
\]
Then we have 
\[
\max\{\|G_1(s,A)\|,\|G_2(s,A)\|\}
\leq \exp\left(2|s| \left( 2\|H_X\|+\frac{\|H_X\|}{1-|s|/S_X}\right)\right) \|A\|.
\]
\end{lemma}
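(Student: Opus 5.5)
We follow the same pattern as the finite-volume Lemma \ref{lem_AKL2.1_5.1}. The difference is that now every operator is bounded: $\bar{H}_\omega$ and $\bar{H}_{X^c}$ are bounded, with $\|\bar{H}_{X^c}\|\leq M+\|H_X\|$ because $H_{X^c}\geq -\|H_X\|$. So $G_1(s,A)$ and $G_2(s,A)$ are honest bounded operators, entire in $s$, and no domain issues arise. Fix $s\in R(S_X)$ and consider, for $t\in[0,1]$,
\[
G_1(t):=e^{ts\bar{H}_\omega}e^{-ts\bar{H}_{X^c}}Ae^{ts\bar{H}_{X^c}}e^{-ts\bar{H}_\omega}.
\]
Since $\bar{H}_\omega-\bar{H}_{X^c}=\pi_\omega(H_X)$ and all operators are bounded, differentiating gives
\[
\frac{d}{dt}G_1(t)=\bigl[K_1(t),G_1(t)\bigr],\qquad K_1(t):=s\,e^{ts\bar{H}_\omega}\pi_\omega(H_X)e^{-ts\bar{H}_\omega}.
\]
For $G_2$ the analogous generator is $K_2(t)=-s\,e^{ts\bar{H}_{X^c}}\pi_\omega(H_X)e^{-ts\bar{H}_{X^c}}$.

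We treat $G_2$ first, because its generator is controlled directly. By Lemma \ref{lem_G1} applied at the point $ts\in R(S_X)$,
\[
\|K_2(t)\|\leq |s|\left(2\|H_X\|+\frac{\|H_X\|}{1-t|s|/S_X}\right)\leq |s|\left(2\|H_X\|+\frac{\|H_X\|}{1-|s|/S_X}\right)=:|s|C(s).
\]
Next we use the Dyson-series representation as in the proof of Lemma \ref{lem_derivation}(ii), namely $G_2(t)=W(t)AW(t)^{-1}$ with $\max\{\|W(t)\|,\|W(t)^{-1}\|\}\leq \exp(\int_0^t\|K_2\|)$. This yields $\|G_2(s,A)\|\leq e^{2|s|C(s)}\|A\|$, which is the claim.

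For $G_1$ the generator involves conjugation by $\bar{H}_\omega$ rather than $\bar{H}_{X^c}$. We do not attempt a separate estimate of $e^{u\bar{H}_\omega}H_Xe^{-u\bar{H}_\omega}$; instead we reduce to $G_2$ by an interaction-picture identity. Set $U(u):=e^{u\bar{H}_{X^c}}e^{-u\bar{H}_\omega}$. Then
\[
e^{u\bar{H}_\omega}\pi_\omega(H_X)e^{-u\bar{H}_\omega}=U(u)^{-1}\bigl(e^{u\bar{H}_{X^c}}\pi_\omega(H_X)e^{-u\bar{H}_{X^c}}\bigr)U(u),
\]
and also $G_1(s,A)=U(s)^{-1}AU(s)$. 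The operator $U$ solves $U'(u)=-e^{u\bar{H}_{X^c}}\pi_\omega(H_X)e^{-u\bar{H}_{X^c}}\,U(u)$, so with $u=ts$ its generator along $t\in[0,1]$ is exactly $K_2$, up to sign. The Dyson bound therefore gives $\|U(s)\|\leq e^{|s|C(s)}$. The inverse $U(s)^{-1}=e^{s\bar{H}_\omega}e^{-s\bar{H}_{X^c}}$ satisfies the reversed equation with the same generator, so the same Grönwall/Dyson bound gives $\|U(s)^{-1}\|\leq e^{|s|C(s)}$. Combining, $\|G_1(s,A)\|\leq\|U(s)^{-1}\|\,\|A\|\,\|U(s)\|\leq e^{2|s|C(s)}\|A\|$.

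The main obstacle is the orientation bookkeeping. We must check that the exponent in every generator we bound is of the form $e^{u\bar{H}_{X^c}}(\cdot)e^{-u\bar{H}_{X^c}}$ with $u=ts$ on the segment from $0$ to $s$. That segment lies in $R(S_X)$, so Lemma \ref{lem_G1} applies pointwise. We must also confirm that the uniform bound on the segment, $1-t|s|/S_X\geq 1-|s|/S_X$, holds; this is immediate. Finally, we must make sure the path-ordered exponential bound holds for complex $s$. It does, since the Dyson series only requires norm bounds on the integrand along the real parameter $t$.
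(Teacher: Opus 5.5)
Your proof is correct and is essentially the paper's argument. The paper bounds $\|e^{s\bar{H}_\omega}e^{-s\bar{H}_{X^c}}\|$ by a Dyson series for $K(t)=e^{ts\bar{H}_\omega}e^{-ts\bar{H}_{X^c}}$, whose generator $e^{ts\bar{H}_{X^c}}sH_Xe^{-ts\bar{H}_{X^c}}$ is controlled by Lemma \ref{lem_G1}; this $K(1)$ is exactly your $U(s)^{-1}$, and your $W(t)$ for $G_2$ coincides with $U(ts)$. You also explicitly bound the inverse $U(s)=K(1)^{-1}$, a step the paper leaves implicit in its ``it is enough to show'' reduction.
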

\begin{proof}[Proof for Lemma \ref{lem_G}]
It is enough to show that
\begin{equation}\label{eq_trexp}
\left\|
e^{s\bar{H}_\omega}e^{-s\bar{H}_{X^c}}
\right\|
\leq \exp\left(2|s|\,\|H_X\|+\frac{|s|\,\|H_X\|}{1-|s|/S_X}
\right)
\end{equation}
in order to obtain the desired inequality. For any $t\in[0,1]$, we define
\[
K(t):=e^{ts\bar{H}_\omega}e^{-ts\bar{H}_{X^c}}.
\]
Then $K(t)$ is the unique solution of the $\mathcal{B}(\mathcal{H}_\omega)$-valued initial value problem
\[
\frac{d}{dt}K(t)=K(t) e^{ts\bar{H}_{X^c}} sH_X e^{-ts\bar{H}_{X^c}}
\quad\text{with}\quad K(0)=I. 
\]
This unique solution is given by the Dyson series and satisfies the following estimate (see e.g. \cite[Proposition 2.2]{Nachtergaele-Sims-Young}):
\[
\|K(t)\|\leq \exp\left(
|s| \sup_{t\in[0,1]}
\left\|
e^{ts\bar{H}_{X^c}} H_X e^{-ts\bar{H}_{X^c}}
\right\|\right).
\]

By Lemma \ref{lem_G1}, we have 
\[
\sup_{t\in[0,1]}
\left\|
e^{ts\bar{H}_{X^c}} H_X e^{-ts\bar{H}_{X^c}}
\right\|
\leq 2\|H_X\|+\frac{\|H_X\|}{1-|s|/S_X}
\]
and hence the bound \eqref{eq_trexp} follows.
\end{proof}

\begin{proof}[Proof of Lemma \ref{lem_AKL2}]
By Lemma \ref{lem_G}, the first inequality follows from
\begin{align*}
&\left\|E^{\bar{H}_\omega}[M,\infty)
\,A\,
E^{\bar{H}_\omega}(-\infty,N]\right\|\\
&\leq \left\|E^{\bar{H}_\omega}[M,\infty) e^{-\lambda \bar{H}_\omega}\right\|
\left\|G_1\left(\lambda,e^{\lambda\bar{H}_{X^c}}Ae^{-\lambda\bar{H}_{X^c}}\right)\right\|
\left\|e^{\lambda \bar{H}_\omega}E^{\bar{H}_\omega}(-\infty,N]
\right\|\\
&\leq\exp(-\lambda(M-N-8\|H_X\|)) \left\|e^{\lambda\bar{H}_{X^c}}Ae^{-\lambda\bar{H}_{X^c}}\right\|.
\end{align*}
Similarly, the second inequality can be obtained in the same way.
\end{proof}

\newpage
\section{A further illustration of the spectral-overlap method}\label{S_application}
To demonstrate the usefulness of the spectral-overlap method beyond the energy-truncation setting considered in the main text, we give an alternative proof of \cite[Lemma B.5]{Liu-Yi-Zhou-Zou2025} for a range-truncated Hamiltonian, replacing Weyl's inequality and resolvent analysis with a direct spectral-overlap argument.

\subsection{Notation and setup}
We consider a one-dimensional quantum spin system on $\Gamma=\mathbb{Z}$. We set local observables $\mathcal{A}_\Lambda$, the quasi-local algebra $\mathcal{A}_\mathrm{loc}$ and the observable algebra $\mathcal{A}$, in the same manner as Section \ref{S_notation} and \ref{S_infinite}.

In contrast to the main text, throughout this appendix we assume that the interaction $\Phi$ has algebraic two-point decay rather than finite range:

\begin{assumption}\label{ass_interaction2}
For some $\alpha>2$,
\[
J:=\sup_{x,y\in\mathbb{Z}} \sum_{\substack{Z\Subset\mathbb{Z}\\ x,y\in Z}} \frac{\|\Phi(Z)\|}{F(d(x,y))}<\infty,\quad F(d(x,y))=\frac{1}{(1+d(x,y))^\alpha}.
\]
\end{assumption}
Under Assumption \ref{ass_interaction2}, we also find the thermodynamic limit
\[
\alpha_t^\Gamma(A):=\lim_{\substack{\Lambda\Subset\Gamma\\ \Lambda\to\Gamma}}\alpha_t^\Lambda(A),\quad
H_\Lambda:=\sum_{X\subset\Lambda} \Phi(X),\quad
\alpha_t^\Lambda(A):=e^{itH_\Lambda}Ae^{-itH_\Lambda}
\]
exists for any $A\in\mathcal{A}_{\mathrm{loc}}$ and any $t\in\mathbb{R}$; see \cite[Theorem 3.5]{Nachtergaele-Sims-Young}.

\medskip
We fix a sufficiently large finite interval $I=[a,b]\cap\mathbb{Z}\Subset\Gamma$ as the subsystem of interest. We choose $\ell\in\mathbb{N}$ and $q\in\mathbb{N}$ so that $[a+(q+1)\ell,b-(q+1)\ell]\neq\emptyset$. We decompose the boundary region $\partial_{q\ell}I$ of width $q\ell$ into blocks of size $\ell$:
\begin{align*}
\partial_{q\ell} I
&:=[a-q\ell,a+q\ell]\cup[b-q\ell,b+q\ell]\\
&=\bigcup_{j=-q}^{q-1} [a+j\ell,a+(j+1)\ell]\cup 
\bigcup_{j=-q}^{q-1} [b+j\ell,b+(j+1)\ell].
\end{align*}

\begin{definition}[Range-truncated interaction]
The range truncated interaction $\tilde{\Phi}$ is defined from $\Phi$ by removing all interaction terms that connect one block to sites outside its $\ell$-neighborhood, that is, for any $j=-q,-q+1,\ldots q-1$, we set
\[
\mathcal{I}_{a,j}:=\{Z\Subset\mathbb{Z}\,;\,
Z\cap[a+j\ell,a+(j+1)\ell]\neq\emptyset
\ \text{and}\ 
Z\cap[a+(j-1)\ell,a+(j+2)\ell]^c\neq\emptyset\},
\]
\[
\mathcal{I}_{b,j}:=\{Z\Subset\mathbb{Z}\,;\,
Z\cap[b+j\ell,b+(j+1)\ell]\neq\emptyset
\ \text{and}\ 
Z\cap[b+(j-1)\ell,b+(j+2)\ell]^c\neq\emptyset\}.
\]
Then, the range truncated interaction is given by
\[
\tilde{\Phi}(Z):=
\begin{cases}
0 & Z\in\mathcal{I},\\
\Phi(Z) & Z\notin\mathcal{I},
\end{cases}
\quad\text{where}\quad
\mathcal{I}:=\bigcup_{j=-q}^{q-1} \Big(\mathcal{I}_{a,j}\cup\mathcal{I}_{b,j}\Big).
\]
\end{definition}

Under Assumption \ref{ass_interaction2}, the thermodynamic limit
\[
\tilde{\alpha}_t^\Gamma(A):=\lim_{\substack{\Lambda\Subset\Gamma\\ \Lambda\to\Gamma}} \tilde{\alpha}_t^\Lambda(A),\quad \tilde{H}_\Lambda:=\sum_{X\subset\Lambda}\tilde{\Phi}(X),\quad \tilde{\alpha}_t^\Lambda(A):=e^{it\tilde{H}_\Lambda}Ae^{-it\tilde{H}_\Lambda}
\]
exists for any $A\in\mathcal{A}_{\mathrm{loc}}$ and any $t\in\mathbb{R}$; see \cite[Theorem 3.5]{Nachtergaele-Sims-Young}.

We assume that $\omega$ is an equilibrium state for the dynamics $\alpha^\Gamma_t$, namely
\[
\omega(\alpha_t^\Gamma(A))=\omega(A)
\quad (A\in\mathcal{A},\ t\in\mathbb{R}).
\]
Write $(\mathcal{H}_\omega,\pi_\omega,\Omega)$ for the GNS triple associated with $\omega$. Then there exists a self-adjoint operator $H_\omega$ on $\mathcal{H}_\omega$ such that
\[
\pi_\omega(\alpha_t^\Gamma(A))=e^{itH_\omega}\pi_\omega(A)e^{-itH_\omega},\quad H_\omega\Omega=0.
\]
Moreover, we denote the discarded part of the interaction by
\[
\delta H:=\sum_{Z\in\mathcal{I}} \Phi(Z)\in\mathcal{A}.
\]
Here, by \cite[Lemma 2]{Kuwahara-Saito2021}, we have
\[
\|\delta H\|\leq\sum_{Z\in\mathcal{I}} \|\Phi(Z)\|
\leq 8qJ(1+\ell)^{-\alpha+2}<\infty.
\]
Since $\delta H$ is a bounded operator, we obtain
\[
\tilde{H}_\omega:=H_\omega-\pi_\omega(\delta H),\quad
\operatorname{Dom}(\tilde{H}_\omega)=\operatorname{Dom}(H_\omega),\quad
\pi_\omega(\tilde{\alpha}_t^\Gamma(A))=e^{it\tilde{H}_\omega}\pi_\omega(A)e^{-it\tilde{H}_\omega}.
\]
In the proof of this appendix, we write the bounded operator $\pi_\omega(\delta H)$ simply as $\delta H$. 

\subsection{Spectral Stability of range truncated Hamiltonian}
\begin{theorem}\label{thm_range}
If $\omega$ is a locally unique ground state with spectral gap $\Delta>2\|\delta H\|>0$, namely,
\[
\sigma(H_\omega)\subset \{0\}\cup [\Delta,\infty),\quad
\operatorname{Rank}(E^{H_\omega}(\{0\}))=1,
\]
where $E^{H_\omega}$ denotes the spectral projection of $H_\omega$. Then, $\tilde{H}_\omega$ still has a unique low-energy state with spectral gap, that is,
\[
\sigma(\tilde{H}_\omega)\subset [-\|\delta H\|,\|\delta H\|] \cup [\Delta-\|\delta H\|,\infty),\quad
\operatorname{Rank}(E^{\tilde{H}_\omega}[-\|\delta H\|,\|\delta H\|])=1
\]
and thus $\tilde{H}_\omega$ has a spectral gap $\tilde{\Delta}\geq\Delta-2\|\delta H\|$.

Moreover, let $\tilde{\Omega}\in\mathcal{H}_\omega$ be the vector spanning $E^{\tilde{H}_\omega}(-\infty,\|\delta H\|]$, then 
\[
\sqrt{1-|\langle\Omega,\tilde{\Omega}\rangle|^2}
\leq \frac{\|\delta H\|}{\Delta-\|\delta H\|}.
\]
In particular, if we set $\tilde{\Omega}$ as $\langle\Omega,\tilde{\Omega}\rangle\geq0$, then
\[
\|\Omega-\tilde{\Omega}\|
=\sqrt{2(1-|\langle\Omega,\tilde{\Omega}\rangle|)}
\leq \sqrt{2(1-|\langle\Omega,\tilde{\Omega}\rangle|^2)}
\leq \frac{\sqrt{2}\|\delta H\|}{\Delta-\|\delta H\|}.
\]
\end{theorem}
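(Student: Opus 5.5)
Since $\tilde H_\omega=H_\omega-\delta H$ with $\|\delta H\|$ bounded and $\operatorname{Dom}(\tilde H_\omega)=\operatorname{Dom}(H_\omega)$, the plan is to control the spectrum by variational and Markov-type arguments and the rank by Lemma \ref{lem_overlap}, avoiding Weyl's inequality and resolvents.

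\emph{Spectral inclusion.} Write $P_0:=E^{H_\omega}(\{0\})=|\Omega\rangle\langle\Omega|$. For $\mu\in(\|\delta H\|,\Delta-\|\delta H\|)$ I show $\mu\notin\sigma(\tilde H_\omega)$ by an overlap argument rather than a resolvent bound. Suppose $\psi\in E^{\tilde H_\omega}[\mu-r,\mu+r]\mathcal H_\omega$ is a unit vector with $r$ small. Then $\|(H_\omega-\mu)\psi\|\leq\|\delta H\|+r$. Decompose $\psi=P_0\psi+(I-P_0)\psi$ in the spectral decomposition of $H_\omega$. Then
\[
\|(H_\omega-\mu)\psi\|^2=\mu^2\|P_0\psi\|^2+\|(H_\omega-\mu)(I-P_0)\psi\|^2\geq\min\{\mu,\Delta-\mu\}^2 ,
\]
since $|\lambda-\mu|\geq\Delta-\mu$ on $[\Delta,\infty)$. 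This exceeds $(\|\delta H\|+r)^2$ for small $r$, which is a contradiction. So the spectral projection vanishes and $\mu\notin\sigma$. The lower bound $\tilde H_\omega\geq-\|\delta H\|$ is immediate from $H_\omega\geq0$.

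\emph{Rank one.} Let $\tilde P:=E^{\tilde H_\omega}[-\|\delta H\|,\|\delta H\|]$. For $\psi\in\tilde P\mathcal H_\omega$, apply the Markov inequality with respect to $H_\omega$: $\|H_\omega\psi\|\leq 2\|\delta H\|\|\psi\|$ gives $\|(I-P_0)\psi\|\leq \frac{2\|\delta H\|}{\Delta}\|\psi\|<\|\psi\|$. By Lemma \ref{lem_overlap} with $P=\tilde P$ and $Q=P_0$, this yields $\operatorname{Rank}\tilde P\leq1$. For the reverse inequality, apply the same argument to $\Omega$: we have $\|\tilde H_\omega\Omega\|=\|\delta H\Omega\|\leq\|\delta H\|$. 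The spectral gap of $\tilde H_\omega$ established above gives
\[
\|(I-\tilde P)\Omega\|\leq\frac{\|\delta H\|}{\Delta-\|\delta H\|}<1 ,
\]
using $|\lambda|\geq\Delta-\|\delta H\|$ off $\tilde P$. Lemma \ref{lem_overlap} with $P=P_0$ and $Q=\tilde P$ then gives $\operatorname{Rank}\tilde P\geq1$. The gap bound $\tilde\Delta\geq\Delta-2\|\delta H\|$ follows from the inclusion.

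\emph{Overlap.} The same estimate gives $1-|\langle\Omega,\tilde\Omega\rangle|^2=\|(I-\tilde P)\Omega\|^2\leq(\|\delta H\|/(\Delta-\|\delta H\|))^2$, which is the stated bound. The final chain is elementary, using $1-x\leq1-x^2$ for $x\in[0,1]$.

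The main obstacle is the spectral-inclusion step in the unbounded setting. There one must justify $\psi\in\operatorname{Dom}(H_\omega)$, which holds because spectral subspaces of bounded intervals lie in $\operatorname{Dom}(\tilde H_\omega)=\operatorname{Dom}(H_\omega)$. One must also keep the estimates sharp enough to give exactly $[\Delta-\|\delta H\|,\infty)$ rather than a weaker window. If the direct distance argument loses constants near the endpoints, I would instead take $r\to0$ and treat $\mu$ in the interior, then close up.
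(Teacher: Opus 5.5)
Your proof is correct, but it obtains the spectral inclusion by a different route from the paper's. The paper never proves the inclusion directly. It applies the Markov inequality with respect to $H_\omega$ on the \emph{larger} window $E^{\tilde H_\omega}(-\Delta+\|\delta H\|,\Delta-\|\delta H\|)$ to get rank $\leq 1$ there. It then applies Markov with respect to $\tilde H_\omega$ to $\Omega$ on the \emph{smaller} closed window $[-\|\delta H\|,\|\delta H\|]$ to get rank $\geq 1$ there; this step uses no information about $\sigma(\tilde H_\omega)$. Because the two windows are nested, $E^{\tilde H_\omega}$ must vanish on $(\|\delta H\|,\Delta-\|\delta H\|)$, so the inclusion is a by-product of rank counting. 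The overlap bound is then the same Markov estimate on $\Omega$ that you use. You instead prove the inclusion first by an approximate-eigenvector estimate. In effect this is a hand proof of the classical fact that $\sigma(\tilde H_\omega)$ lies in the closed $\|\delta H\|$-neighborhood of $\sigma(H_\omega)$. You then do both rank bounds on the single window $\tilde P$. The paper's route stays entirely inside the Markov/spectral-overlap framework that the appendix is meant to showcase. Yours imports a standard perturbation ingredient, resolvent-free but not part of that framework. In exchange, your constants $2\|\delta H\|/\Delta$ and $\|\delta H\|/(\Delta-\|\delta H\|)$ are genuine operator-norm bounds strictly below $1$, so Lemma \ref{lem_overlap} applies exactly as stated. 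In the paper, the Markov bound on the larger open window is strict only vector by vector; as an operator-norm bound it is a priori only $\leq 1$. That suffices for the injectivity part of Lemma \ref{lem_overlap}, but not for its literal hypothesis. Your closing concern about endpoints is unnecessary: your estimate already excludes every $\mu$ in the open interval, and the spectrum is closed.
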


\begin{remark}
In the notation of the present paper, \cite[Lemma B.5]{Liu-Yi-Zhou-Zou2025} states that, under the assumption $\|\delta H\|<\Delta/4$, the range-truncated Hamiltonian has a unique ground state $\tilde{\Omega}$ and satisfies
\[
\tilde\Delta>\Delta-2\|\delta H\|
\quad\text{and}\quad
\|\Omega-\tilde{\Omega}\|
<\frac{\|\delta H\|}{\Delta-4\|\delta H\|}.
\]

Theorem \ref{thm_range} recovers the same type of spectral-gap stability conclusion by a direct spectral-overlap argument. This provides a further application of the spectral-overlap method and illustrates how the method can be used in this setting without relying on Weyl's inequality or resolvent estimates.
\end{remark}

\begin{proof}
For any $\psi\in\mathcal{H}_\omega$, we have
\begin{align*}
&\|H_\omega E^{H_\omega-\delta H}(-\Delta+\|\delta H\|,\Delta-\|\delta H\|)\psi\|\\
&\leq \|(H_\omega-\delta H)E^{H_\omega-\delta H}(-\Delta+\|\delta H\|,\Delta-\|\delta H\|)\psi\|\\
&\quad+\|\delta H\,E^{H_\omega-\delta H}(-\Delta+\|\delta H\|,\Delta-\|\delta H\|)\psi\|\\
&<\Delta.
\end{align*}
By Markov inequality,
\[
\|(I-E^{H_\omega}(-\Delta,\Delta))E^{H_\omega-\delta H}(-\Delta+\|\delta H\|,\Delta-\|\delta H\|)\psi\|<1.
\]
By Lemma \ref{lem_overlap},
\[
\operatorname{Rank}(E^{H_\omega-\delta H}(-\Delta+\|\delta H\|,\Delta-\|\delta H\|)) \leq\operatorname{Rank}(E^{H_\omega}(-\Delta,\Delta))=1.
\]

In the same manner, we have
\[
\|(I-E^{H_\omega-\delta H}[-\|\delta H\|,\|\delta H\|])E^{H_\omega}(\{0\})\|<1
\]
and
\[
1=\operatorname{Rank}(E^{H_\omega}(\{0\}))\leq
\operatorname{Rank}(E^{H_\omega-\delta H}[-\|\delta H\|,\|\delta H\|]).
\]
Consequently, we have
\[
\operatorname{Rank}(E^{H_\omega-\delta H}[-\|\delta H\|,\|\delta H\|])=\operatorname{Rank}(E^{H_\omega-\delta H}(-\Delta+\|\delta H\|,\Delta-\|\delta H\|))=1.
\]

We set $\tilde{\Omega}\in\mathcal{H}_\omega$ so that 
\[
|\tilde{\Omega}\rangle\langle\tilde{\Omega}|
=E^{H_\omega-\delta H}(-\Delta+\|\delta H\|,\Delta-\|\delta H\|).
\]
Then we have
\[
\|(I-E^{H_\omega-\delta H}(-\Delta+\|\delta H\|,\Delta-\|\delta H\|))E^{H_\omega}(\{0\})\|\leq\frac{\|\delta H\|}{\Delta-\|\delta H\|}.
\]
By Lemma \ref{lem_overlap},
\[
|\langle\Omega,\tilde{\Omega}\rangle|=
\|E^{H_\omega-\delta H}(-\Delta+\|\delta H\|,\Delta-\|\delta H\|)\Omega\|
\geq \sqrt{1-\left(\frac{\|\delta H\|}{\Delta-\|\delta H\|}\right)^2}.
\]
\end{proof}
\end{document}